\DeclareMathOperator*{\st}{s.t.}
\theoremstyle{definition}
\newtheorem{Lemma}{Lemma}
\newtheorem{Assumption}{Assumption}
\newtheorem{prop}{Proposition}
\begin{document}
%
\title{\LARGE{Energy-Efficient Online Data Sensing and Processing in Wireless Powered Edge Computing Systems}

\author{Xian~Li,~
	Suzhi~Bi,~
	Yuan~Zheng,~
	and~Hui Wang}

\thanks{{X.~Li, S.~Bi and Y. Zheng are with the College of Electronics and Information Engineering, Shenzhen University, Shenzhen 518060, China
		(email: {xianli, bsz, zhyu}@szu.edu.cn). H.~Wang is with the Shenzhen Institute of Information Technology, Shenzhen 518172, China (email: wanghsz@sziit.edu.cn).} }}

%
%
%

\maketitle

\vspace{-4em}
\begin{abstract}
Wireless powered mobile edge computing (MEC) has emerged as a promising paradigm to enable high-performance computation of energy-constrained wireless devices (WDs) in Internet of things (IoT) systems. However, to overcome the severe path loss of both energy transfer and data communications, wireless powered MEC suffers from high operating power consumption. To achieve sustainable and economic system operation, this paper focuses on developing energy-efficient online data processing strategy of wireless powered MEC systems under stochastic fading channels. In particular, we consider a hybrid access point (HAP) transmitting RF energy to and processing the sensing data offloaded from multiple WDs. Under an average power constraint of the HAP, we aim to maximize the long-term average data sensing rate of the WDs while maintaining task data queue stability. We formulate the problem as a multi-stage stochastic optimization to control the energy transfer and task data processing in sequential time slots. Without the knowledge of future channel fading, it is very challenging to determine the sequential control actions that are tightly coupled by the battery and data buffer dynamics. To solve the problem, we propose an online algorithm named LEESE that applies the perturbed Lyapunov optimization technique to decompose the multi-stage stochastic problem into per-slot deterministic optimization problems. We show that each per-slot problem can be equivalently transformed into a convex optimization problem. To facilitate online implementation in large-scale MEC systems, instead of solving the per-slot problem with off-the-shelf convex algorithms, we propose a block coordinate descent (BCD)-based method that produces close-to-optimal solution in less than 0.04\% of the computation delay. Simulation results demonstrate that the proposed LEESE algorithm can provide 21.9\% higher data sensing rate than the representative benchmark methods considered, while incurring sub-millisecond computation delay suitable for real-time control under fading channel. 

\end{abstract}
\vspace{-8pt}
\begin{IEEEkeywords}\vspace{-0em}
	Mobile edge computing, wireless power transfer, computation offloading, resource allocation, online optimization algorithm.
\end{IEEEkeywords}

\section{Introduction}
\subsection{Motivations and Contributions}
As a seamless integration of wireless power transfer (WPT) \cite{Ju2014ThroughputMaximization,Bi2015,Zeng2017,Li2019} and mobile edge computing (MEC) \cite{Mao2017,Wu2018TVT,Bi2021,Wang2016}, wireless powered MEC is recognized as a promising technology to provide sustainable and enhanced computation performance for delay-sensitive and data-intensive IoT (internet of things) applications. With dedicated radio frequency (RF) energy transmitter and MEC server integrated as a hybrid access point (HAP), wireless powered MEC system provides on-demand energy transfer and computation service to remote low-power wireless devices (WDs) (e.g., sensors and wearable devices). Powered by the received energy, WDs collect sensing data and process the data either locally or remotely at the HAP via task offloading. 

A major hurdle to the wide deployment of wireless powered MEC systems is the high power consumption to overcome the double propagation loss during downlink WPT and uplink task offloading. Because the energy consumption of WDs is replenished entirely by WPT in a wireless powered MEC system, it calls for joint optimization of WPT and task execution at both the HAP and WDs to improve the energy efficiency \cite{Wang2018a,Hu2018,Zhou2020}. For example, by jointly optimizing the WPT beamforming and computation resource allocation, \cite{Wang2018a} minimized the total energy consumption of a multi-antenna HAP to complete the task computation of WDs. With the same design objective, \cite{Hu2018} proposed cooperative task offloading in a two-user wireless powered MEC system, and optimized the time and transmit power of WPT and data offloading. Considering both time division multiple access (TDMA) and non-orthogonal multiple access (NOMA), \cite{Zhou2020} maximized the energy efficiency of a wireless powered MEC system to achieve the maximum processed data bits per joule energy consumption.\

Despite such research progress, these prior works \cite{Wang2018a,Hu2018,Zhou2020} focused on independently optimizing the instantaneous system performance within each time slot given the wireless channel gains. {\color{black}In practical wireless powered MEC systems under stochastic fading channels, we need to make online decisions to optimize the long-term system performance under future channel uncertainty. The corresponding optimal system design is very challenging. First, the online decisions of WPT and task processing may not meet the long-term performance requirements such as the average power constraint of the HAP. Second, the control decisions are tightly coupled over time due to the temporal correlations of battery and data buffer dynamics. It can lead to extremely low energy efficiency if we independently optimize the system performance within each time slot in a greedy manner. Moreover, the fast-varying channel requires a low-complexity solution method to facilitate real-time implementation, which is often difficult for a complicated optimal control problem. }


\begin{figure}[h]
	\vspace{-15pt}
	\centering
	\includegraphics[scale=0.66]{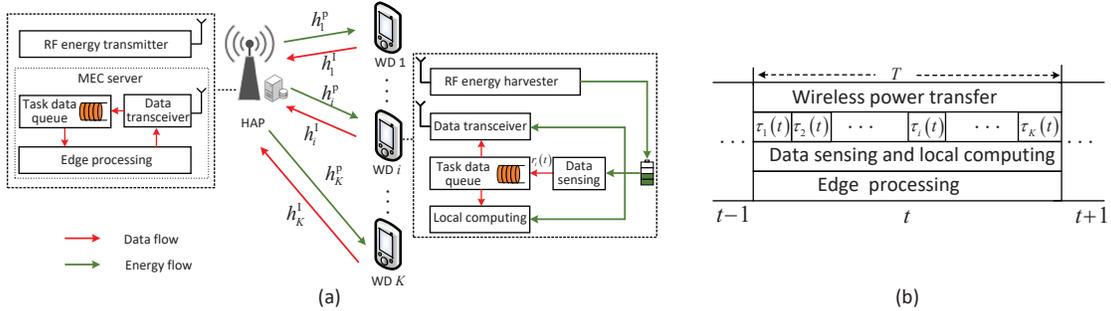}
	\captionsetup{font=footnotesize}
	\caption{The considered wireless powered MEC system: (a) the system model and (b) a time allocation example.}
	\label{Sys_mod}
	\vspace{-12pt}
\end{figure}

In this paper, we focus on designing an energy-efficient multi-user wireless powered MEC system under fading channels. In particular, we consider in Fig.\ref{Sys_mod} that multiple WDs harvesting RF energy broadcast from an HAP, and using the harvested energy to sense and process the task data assisted by the edge server. Under an average power constraint of the HAP, we aim to maximize the long-term average data sensing rate of WDs, while maintaining data queue stability at all the WDs and the HAP. The main contributions of this paper are:

1) \emph{Energy-efficient WPT and Task Processing Optimization:} 
	{\color{black}Under stochastic fading channels, we jointly optimize the WPT and task processing strategies of the wireless powered edge computing system subject to an average power constraint of the HAP. We formulate the target problem as a multi-stage stochastic optimization. The major difficulty lies in seeking a real-time online solution to satisfy all the long-term constraints under the randomness of fading channels and strong couplings among solutions in different time slots.}
	
2) \emph{Low-complexity Online Algorithm:} We propose an online algorithm named LEESE without any knowledge of future channel fading information. In particular, LEESE leverages the perturbed Lyapunov optimization to equivalently transform the multi-stage stochastic optimization into per-slot deterministic optimization problems, each decides the system control variables in the current time slot. We show that each per-slot problem can be equivalently transformed into a convex optimization problem. To reduce the computation overhead of online implementation, instead of solving each per-slot problem with off-the-shelf convex optimization (CVX) algorithms, we design an efficient block coordinate descent (BCD)-based method and derive closed-form solutions of simple threshold-based structure. 

3) \emph{Theoretical Performance Analysis:} We show that under a moderate battery capacity of each WD, LEESE always produces a feasible, and in fact asymptotically optimal, solution to the original multi-stage stochastic problem. Meanwhile, we prove that LEESE achieves an $\left[O(1/V),O(V)\right]$ sensing rate-delay tradeoff by tuning a non-negative control parameter $V$. Specifically, setting a larger $V$ yields higher sensing rate at cost of longer data processing delay, and vice versa. 

4) \emph{Numerical Performance Evaluation:} We verify the performance of LEESE via extensive numerical simulations. The results show that LEESE can provide more than 21.9\% higher sensing rate over the considered benchmark methods. Besides, we find that the BCD-based LEESE algorithm achieves almost identical data sensing performance to the CVX-based LEESE, while incurring only 0.04\% of the computation delay. 

\vspace{-15pt}
\subsection{Related works}
There have been extensive research interests in optimizing the long-term performance of MEC systems with energy-harvesting WDs, where the WDs harvest energy either from ambient renewable such as solar and wind \cite{Mao2016,Min2019,XIANarxiv2021,Xu2017a,Wei2019}, or dedicated RF energy transmitters \cite{Wang2020c,Wu2019,Sun2021,Deng2020}. 

For the former line of works, \cite{Mao2016} considered an MEC network with an MEC server assisting a single EH-WD, and minimized the long-term execution delay and task dropping cost by jointly optimizing the task offloading decision, data transmission power and local CPU frequency at the WD. \cite{Min2019} studied the optimal server selection and task allocation in a multi-server MEC network with single EH WD. \cite{XIANarxiv2021} extended the research into spectrum scarcity scenario and studied the optimal cognitive online sensing and processing of the EH WD. Considering an EH-powered base station co-located with multiple edge servers, \cite{Xu2017a} studied the optimal task offloading and server autoscaling policy to minimize the energy cost and processing delay. \cite{Wei2019} formulated an offloading control problem of a single EH-WD as a Markov decision process (MDP), and designed a reinforcement learning (RL) based method to optimize the long-term system utility concerning computation rate and processing latency.\

To tackle the random ambient energy arrivals, the second line of research applies WPT technology to achieve controllable power supply to WDs for sustainable long-term computation performance. For example, \cite{Wang2020c} focused on the offline optimal energy allocation and task offloading of a single-WD wireless powered MEC system. \cite{Wu2019} introduced a stochastic modeling of the edge computing process, and designed a Lyapunov optimization-based online algorithm to maximize the long-term system computation rate. Aiming at optimizing system energy efficiency, \cite{Sun2021} studied the dynamic control on WPT and resource allocation among device-to-device (D2D)-assisted WDs. Nonetheless, these works either assume non-causal prior knowledge of the system state \cite{Wang2020c}, or {\color{black}sufficient energy supply} at the HAP \cite{Sun2021,Wu2019}. When the HAPs rely on harvesting ambient energy to perform edge computation, \cite{Deng2020} considered multiple EH-assisted HAPs and optimized the user association and resource allocation under stochastic energy and data arrivals. However, \cite{Deng2020} considered a simplified system setup where each HAP can accept tasks from only a single WD. In our considered setup with causal system information and average HAP power constraint, {\color{black}the optimal design is challenged by the unknown future channel information and the real-time control of tightly-coupled decisions on WPT and task execution over different time slots.} This calls for delicate coordination of the HAP and WDs where the solutions in existing works \cite{Wang2020c,Wu2019,Sun2021,Deng2020} are no longer applicable.

The remainder of this paper is organized as follows. In Section II, we present the system model of the wireless powered MEC system and formulate the sensing rate maximization problem. In Section III, we propose the LEESE method to solve the problem. We prove the feasibility and asymptotic optimality of LEESE in Section IV and evaluate the performance of LEESE via numerical simulations in Section V. Finally, we conclude the paper in Section VI.

\vspace{-10pt}
\section{System Model and Problem Formulation}\label{sec3}
As shown in Fig. \ref{Sys_mod}(a), we consider a wireless powered MEC networks consisting of $K$ WDs and one HAP. The HAP is connected to a stable power grid, while each WD is equipped with an energy harvesting module and a rechargeable battery. Integrated with an RF energy transmitter and an MEC server, the HAP provides energy supply and task computation assistance for WDs in sequential time slots of equal duration $T$. In particular, in time slot $t$, the HAP broadcasts RF energy to the WDs, while each WD harvests energy and stores it into the battery. Meanwhile, with the stored energy, each WD takes raw data samples from the monitored environment and piles it into the local task data queue. To process the raw task data, each WD employs a partial computation offloading rule which allows the raw task data to be arbitrarily partitioned into two parts with one executed locally and the other offloaded to the HAP for edge computing \cite{Wang2016}. 

We consider block fading channels for both WPT and data communication, where channel gains between the HAP and WDs are assumed static within each time slot, but may vary from one slot to another. We assume that WPT and data communication are implemented over orthogonal frequency bands, thus they can be performed simultaneously. To avoid co-channel communication interference among the WDs, we assume that WDs communicate with the HAP using a TDMA scheme. In Fig. \ref{Sys_mod}(b), we show an example time allocation of the considered system in time slot $t$. Specifically, the HAP performs WPT and edge processing throughout the duration $T$, while WD$_i$ performs task offloading within the allocated time frame $\tau_i(t)$, $i=1,\cdots,K$. Since circuits of data sensing, communication and local processing are separated with each other, WD$_i$ can simultaneously perform data collection , task offloading and local computing. In addition, we neglect the time cost on result feedback from the HAP to WDs assuming that the computation result is relatively short (e.g., several bits feedback for identifying the objects in a picture).

\vspace{-15pt}
\subsection{Wireless Power Transfer}
In the $t$-th time slot, the HAP broadcasts wireless energy to WDs with transmit power $p_0(t)\leq p_0^{\rm max}$, where $p_0^{\rm max}$ is the maximum WPT power at the HAP. The energy cost on WPT is
\begin{equation}\label{WPT_power}
\small
e_{0,\rm P}(t) = p_0(t)T.
\end{equation}
Denote $h_i^{\rm p}(t)$ as the WPT channel gain from the HAP to the $i$th WD. We apply a practical non-linear energy harvesting model \cite{Chen2017} to depict the energy conversion efficiency at WDs. Specifically, the energy harvested by the $i$th WD is
\begin{equation}
\small
e_{i,\rm h}(t) = \frac{a_{1,i}p_0(t)h_i^{\rm p}(t)+a_{2,i}}{p_0(t)h_i^{\rm p}(t)+a_{3,i}}-\frac{a_{2,i}}{a_{3,i}},
\end{equation}
where $a_{1,i}$, $a_{2,i}$, $a_{3,i}$ are constant parameters for $i=1,2,\cdots,K$. {\color{black}$e_{i,\rm h}(t)$ has the following property.

\begin{Lemma}\label{lem_convex}
	The energy harvesting parameters $a_{1,i}$, $a_{2,i}$, and $a_{3,i}$ satisfy that $a_{1,i}a_{3,i}\!-\!a_{2,i}\!\geq \!0$ and $a_{3,i}\!>\!0$. The harvested energy $e_{i,\rm h}(t)$ is a concave function of the wireless charging power $p_0$.
\end{Lemma}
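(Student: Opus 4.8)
The plan is to reduce the concavity claim to an elementary property of the reciprocal function by rewriting $e_{i,\rm h}(t)$ through polynomial division. Writing $h=h_i^{\rm p}(t)>0$ as the (fixed) positive channel gain in the slot and treating $p_0$ as the variable, I would first carry out the division
\begin{equation}
\frac{a_{1,i}p_0 h + a_{2,i}}{p_0 h + a_{3,i}} = a_{1,i} - \frac{a_{1,i}a_{3,i}-a_{2,i}}{p_0 h + a_{3,i}},
\end{equation}
so that
\begin{equation}
e_{i,\rm h}(t) = \Big(a_{1,i} - \tfrac{a_{2,i}}{a_{3,i}}\Big) - \big(a_{1,i}a_{3,i}-a_{2,i}\big)\cdot\frac{1}{p_0 h + a_{3,i}}.
\end{equation}
This isolates the entire dependence on $p_0$ in a single reciprocal term, after which concavity becomes transparent.

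Second, I would run the sign argument. Because $a_{3,i}>0$ and $p_0\ge 0$ with $h>0$, the denominator $p_0 h + a_{3,i}$ is strictly positive on the feasible domain, so $p_0\mapsto 1/(p_0 h + a_{3,i})$ is the composition of the convex function $1/z$ (on $z>0$) with an increasing affine map, hence convex in $p_0$. Scaling this convex function by the nonnegative constant $a_{1,i}a_{3,i}-a_{2,i}\ge 0$ preserves convexity, negating it yields a concave function, and adding the constant $a_{1,i}-a_{2,i}/a_{3,i}$ leaves concavity intact; therefore $e_{i,\rm h}(t)$ is concave in $p_0$. Equivalently, I could simply exhibit the second derivative
\begin{equation}
\frac{\partial^2 e_{i,\rm h}(t)}{\partial p_0^2} = -\frac{2h^2\big(a_{1,i}a_{3,i}-a_{2,i}\big)}{\big(p_0 h + a_{3,i}\big)^3}\le 0,
\end{equation}
which is nonpositive precisely because the two stated parameter inequalities hold.

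For the parameter conditions themselves, I would justify them from the physics of the empirical nonlinear harvesting model of \cite{Chen2017} rather than derive them abstractly: $a_{3,i}>0$ is required for the model to stay well defined (nonvanishing denominator) at every nonnegative charging power, while $a_{1,i}a_{3,i}-a_{2,i}\ge 0$ is exactly the condition making the harvested energy monotonically nondecreasing in $p_0$, since $\partial e_{i,\rm h}/\partial p_0 = h(a_{1,i}a_{3,i}-a_{2,i})/(p_0 h + a_{3,i})^2$ and $e_{i,\rm h}(t)=0$ at $p_0=0$; no physically meaningful harvesting curve decreases as the input power grows. For the concrete coefficients obtained by curve fitting, both inequalities can additionally be verified numerically.

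The calculus here is entirely routine, so I do not expect a genuine obstacle; if anything is delicate, it is bookkeeping rather than difficulty. The one point I would be careful about is the domain: the restriction $p_0\ge 0$ together with $a_{3,i}>0$ is what keeps $p_0 h + a_{3,i}$ strictly positive and hence makes the reciprocal genuinely convex, and it is this same positivity, combined with $a_{1,i}a_{3,i}-a_{2,i}\ge 0$, that turns the negative-reciprocal term concave. I would therefore present the division step as the crux and treat the remaining sign analysis as the verification that the two hypotheses are exactly what the conclusion needs.
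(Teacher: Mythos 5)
Your proof is correct and follows essentially the same route as the paper's: both justify $a_{1,i}a_{3,i}-a_{2,i}\geq 0$ and $a_{3,i}>0$ from the physical requirements of the harvesting model (monotonicity in the input power and well-definedness/nonnegativity of the harvested energy) and then conclude concavity from the nonpositive second derivative $-2h^2(a_{1,i}a_{3,i}-a_{2,i})/(p_0h+a_{3,i})^3$. Your polynomial-division rewriting that isolates the $p_0$-dependence in a single reciprocal term is a slightly cleaner packaging of the same calculation, but it is not a genuinely different argument.
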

\begin{proof} Please refer to Appendix \ref{app_convex} for detail.
\end{proof}}

\vspace{-15pt}
\subsection{Task Data Sensing and Processing at WDs}
\emph{1) Task Data Sensing}: In time slot $t$, the WD$_i$ collects $r_i(t)$ bits of data samples from the monitored environment. $r_i(t)\leq r_{\rm max}$ holds due to the constrained sampling resolution/frequency. Denote $e_{\rm unit}^{\rm col}$ in Jolue/bit as the unit energy cost for data sensing. Then, the energy consumption on collecting $r_i(t)$ bits of data is\cite{Liu2020}
\begin{equation}
e_{i,\rm col}(t)=e_{\rm unit}^{\rm col}r_i(t).
\end{equation}
The WD$_i$ stores the sensed data in a local buffer, and subsequently processes each task data bit via either local computing or computation offloading.

\emph{2) Local Computing}: We denote the CPU frequency of WD$_i$ in time slot $t$ as $f_i(t)$ in cycles/second, where $f_i(t)\leq f_{\rm i}^{\rm max}$ due to the local computation capability constraint. Then, the task data processed locally and the corresponding energy consumption are \cite{Wang2018}
\begin{equation}\label{eq_time_cost_comp_loc} 
\small
D_{i,\rm C}(t) = f_i(t)T/\phi_i, ~e_{i,\rm C}(t) = \kappa_{i}\left(f_i(t)\right)^3T,
\end{equation}
respectively, where $\phi_i$ denoted the required local CPU cycles to one bit of data and $\kappa_i$ is the energy efficiency parameter for local computing. 

\emph{3) Computation Offloading}: The WDs offload their computation tasks to the
HAP in a TDMA manner. Denote $h_i^{\rm I}(t)$ as the computation offloading channel gain from the WD$_i$ to the HAP, $p_i(t)$ as the transmit power constrained by its maximum value $p_i(t)\leq p_i^{\rm max}$ and $\tau_i(t)$ as the allocated time for task offloading. The task data offloaded by the WD$_i$ is
\begin{equation}
\small
D_{i,\rm O}(t) = {W\tau_i(t)}\log_2\left(1+{p_i(t)\gamma_i(t)}\right),
\end{equation}
where $W$ is the task offloading bandwidth. $\gamma_i(t)=\frac{h_i^{\rm I}(t)}{N_0}$ and $N_0$ denotes additive white Guassian noise (AWGN) power at the HAP.
Correspondingly, the energy cost on task offloading is 
\begin{equation}
e_{i,\rm O}(t)=p_i(t)\tau_i(t),
\end{equation}
By summing up the energy cost on data sensing, local processing and computation offloading, the total energy consumption of WD$_i$ in time slot $t$ is
\begin{equation}
e_i(t) = e_{i,\rm col}(t) + e_{i,\rm C}(t) + e_{i,\rm O}(t).
\end{equation}

\vspace{-15pt}
\subsection{Task Computation at the HAP}
We denote $f_0(t)$ as the edge CPU frequency at the HAP in time slot $t$, where $f_0(t)$ is upper bounded by $f_0(t)\leq f_0^{\rm max}$. The task data processed at edge and the corresponding energy consumption on edge computing are
\begin{equation}\label{eq_time_cost_comp_edg} 
\small
D_{0,\rm C}(t) = f_0(t)T/\phi_0, ~e_{0,\rm C}(t) = \kappa_0\left(f_0(t)\right)^3T,
\end{equation}
respectively, where $\phi_0$ denotes the required edge CPU cycles to process one bit of data and $\kappa_0$ is the energy efficiency parameter for edge computing. By summing up the energy cost on WPT and edge computing, the total energy cost of the HAP in time slot $t$ is
\begin{equation}
e_0(t) = e_{0,\rm P}(t) + e_{0,\rm C}(t).
\end{equation}

\vspace{-15pt}
\subsection{Task Data Queue Model}
For the HAP and WDs, the data received or sensed in one time slot is ready for processing at the beginning of the next time slot. Denote $Q_i(t)$ and $Q_0(t)$ as the data queue length of the WD$_i$ and HAP at the start of time slot $t$, respectively. The data processed at the WD$_i$ and HAP within the current time slot must satisfy the data causality constraints:
\begin{equation}\label{ledge_off_cons}
0\leq D_{0,\rm C}(t)\leq Q_0(t),~~ 0\leq D_{i,\rm C}(t)+D_{i,\rm O}(t) \leq Q_i(t), \forall i, t.
\end{equation}
We assume infinite task queue capacity and focus on the asymptotic stability of data queues. Then, the dynamics of $Q_i(t)$ and $Q_0(t)$ over time are
\begin{equation}\label{Data_queue_simp}
\small
Q_i(t+1) = Q_i(t)-D_{i,\rm C}(t)-D_{i,\rm O}(t) + r_i(t),~Q_0(t+1) = Q_0(t) - D_{0,\rm C}(t) + \mathsmaller{\sum_{i=1}^{K}}D_{i,\rm O}(t), \forall t,
\end{equation}
respectively. To ensure stable data queues at the HAP and all WDs, we consider following stability constraints on $Q_0(t)$ and $Q_i(t)$ \cite{Neely2010}: 
\begin{equation}\label{DataQ_stab}
\small
\bar{Q}_0 = \lim_{N\to+\infty}\frac{1}{N}\mathsmaller{\sum_{t=1}^{N}}\mathbb{E}\left[Q_0(t)\right]<\infty,~\bar{Q}_i = \lim_{N\to+\infty}\frac{1}{N}\mathsmaller{\sum_{t=1}^{N}}\mathbb{E}\left[Q_i(t)\right]<\infty, \forall i=1, \cdots, K, 
\end{equation}
where the expectation is taken over the time-varying channels.
 
\vspace{-15pt}
\subsection{Energy Queue Model}	
Denote $B_i(t)$ as the battery level of the WD$_i$ at the start of time slot $t$. The WD$_i$ adopts an energy-aware battery management policy to prevent permanent device failure due to full battery depletion: when $B_i(t)$ is below a threshold $B_{\rm min}$, it stops consuming energy on data sensing or processing, while only receiving RF energy from the HAP to replenish the battery. Intuitively, the energy consumed by the WD$_i$ within the current time slot must satisfy
\begin{equation}\label{EH_caus}
0\leq \lambda_{\rm e}e_i(t)\leq B_t\cdot\mathbbm{1}_{B_i(t)\geq B_{\rm min}}, \forall i, t,
\end{equation}
where $\lambda_{\rm e}\geq 1$ is a scaling factor (e.g., $\lambda_e = 1000$ when we use mJ as the unit) and $\mathbbm{1}_{\{\cdot\}}$ denotes the indicator function. The battery queue of the WD$_i$ evolves as:
\begin{equation}\label{E_evol}
B_i(t+1) = \min\left(B_i(t) - \lambda_{\rm e}e_i(t) + \lambda_{\rm e}e_{i,\rm h}(t), \Omega_i\right),
\end{equation}
where $\Omega_i$ is the battery capacity of WD$_i$. 

\vspace{-15pt}
\subsection{Problem Formulation}
Denote the system state at the beginning of time slot $t$ as $s(t)=\{I(t), h_i^{\rm P}(t), h_i^{\rm I}(t),\forall i\}$ at the HAP, where $I(t)=\{Q_0(t),Q_i(t),B_i(t)\}$ denotes the queue backlog state. Our objective is to maximize the long-term average data sensing rate of all WDs while satisfying system stability and average power constraints of the HAP. Under time-varying system states, the wireless powered MEC system requires judicious control of WPT, data sensing and task computation in sequential time slots. We formulate the optimization problem as below:
\begin{subequations}\label{Primal_prob}
	\small
	\begin{align}
	\text{(P1)}~~\underset{\substack{r_i(t), p_i(t),f_i(t), \\ p_0(t), f_0(t),\forall i, t}} \max&~\bar{R}=\lim_{N\to+\infty}\frac{1}{N}\mathsmaller{\sum_{t=0}^{N-1}\sum_{i=1}^{K}}w_ir_i(t)\\
	\st
	~~& \eqref{ledge_off_cons}, \eqref{DataQ_stab}, \eqref{EH_caus},\\
	~~& \mathsmaller\sum_{i=1}^{K} \tau_i(t) \leq T, \forall t, \label{T_allo_cons}\\
	~~& \lim_{N\to+\infty}\frac{1}{N}\mathsmaller\sum_{t=1}^{N}\mathbb{E}\left[e_0(t)\right] \leq e_0^{\rm th}, \label{Bud_cons}\\
	~~& 0 \!\leq r_i(t) \!\leq\! r_{\rm max}, 0 \!\leq p_i(t) \!\leq\! p_i^{\rm max},\! 0 \leq\! f_i(t) \leq\! f_{i}^{\rm max}, \!0 \leq\! f_0(t)\! \leq\! f_0^{\rm max}, \forall i, t, \label{Prob_const2}
	\end{align}
\end{subequations}
where $w_i$ is the weighting factor describing the importance of WD$_i$. \eqref{ledge_off_cons} depicts the data causality at the HAP and WDs. \eqref{EH_caus} is the energy causality at WDs. \eqref{DataQ_stab} captures the data queue stability. \eqref{T_allo_cons} constraints the time allocation on task offloading. \eqref{Bud_cons} represents that the average power constraint at the HAP should be lower than the threshold $e_0^{\rm th}$. We aim at designing an online algorithm which determines the actions in time slot $t$ (i.e, $\{r_i(t), p_i(t),f_i(t), p_0(t), f_0(t), \forall i\}$) based only on $s(t)$, i.e., without the knowledge of future information. However, such an online design faces two major challenges. On one hand, under the stochastic channels, it is difficult to satisfy the long-term requirements for the decisions made in sequential time slots without future channel information. On the other hand, the control decisions of the HAP and WDs are inherently coupled in terms of energy consumption. This poses a great challenge to jointly optimize the decisions over a long time span to achieve a good balance between the current and future system performance. In the following, we propose a perturbed-\textbf{L}yapunov-based \textbf{E}nergy-\textbf{E}fficient data \textbf{S}ensing and \textbf{E}dge computation (LEESE) algorithm to solve (P1), which achieves an online control on wireless power transfer, data sensing and processing without requiring a prior system knowledge.

\vspace{-1em}
\section{Online Data Sensing and Computation Offloading Optimization}\label{sec4}
\subsection{Perturbed Lyapunov-based Optimization}
Lyapunov optimization is extensively applied in stochastic MEC systems to ensure long-term stability \cite{Mao2016,Deng2020,Sun2021,Wu2019}. Under stable power supply, the vanilla Lyapunov optimization technique in these works relies crucially on the assumption that the feasible control action set is irrelevant to the energy state. However, this condition is violated in the considered wireless powered MEC system where the available control actions critically depend on the battery level under the temporally correlated energy constraint \eqref{EH_caus}. As a result, the vanilla Lyapunov optimization cannot be directly used to solve (P1). In the following, we introduce a perturbed Lyapunov method to tackle this problem. First, we introduce a perturbed battery queue for each WD, i.e., 
\begin{equation}\label{Queue_B}
\tilde{B}_i(t) \triangleq  B_i(t) - \Omega_i, \forall i.
\end{equation} 
The purpose of perturbation is to push the target battery level at the WD$_i$ toward $\Omega_i$. Then, as shown in Section \ref{secV}, the battery energy constraint \eqref{EH_caus} becomes implicit when we employ a sufficiently large $\Omega_i$. To ensure the average power requirement at the HAP, we define a virtual energy deficit queue with dynamics
\begin{equation}\label{Queue_Z}
Z_0(t+1) =  \max\left(Z_0(t) + \lambda_{\rm c} e_0(t) - \lambda_{\rm c} e_0^{\rm th},0\right), \forall t,
\end{equation}
where $\lambda_{\rm c}$ is a positive scaling factor. Then, we can satisfy the average power constraint \eqref{Bud_cons} by stabilizing $Z_0(t)$ \cite{Neely2010}. Let $\Theta(t)\triangleq\left\{\tilde{B}_i(t),Z_0(t),Q_i(t),Q_0(t), \forall i\right\}$ be the system queue backlog. We define the perturbed Lyapunov function as
\begin{equation}
\Phi(t) = \frac{1}{2}\mathsmaller{\sum_{i=1}^{K}}\left[\tilde{B}_i(t)\right]^2 + \frac{1}{2}\mathsmaller{\sum_{i=1}^{K}}\left[Q_i(t)\right]^2 + \frac{1}{2}\left[Z_0(t)\right]^2 + \frac{1}{2}\left[Q_0(t)\right]^2,
\end{equation}
and the Lyapunov drift-plus-penalty function as
\begin{equation}\label{LyaDrift}
\Delta^t_V = \mathbb{E}\left[\Phi(t+1)-\Phi(t)|\Theta_t\right]-V\mathbb{E}\left[\mathsmaller{\sum_{i=1}^{K}}r_i(t)|\Theta(t)\right],
\end{equation}
where $V$ is a positive weight factor. The expectation is taken over the random channel state given the current queue backlog $\Theta(t)$. For convenience, we denote a constant
\begin{equation}\label{Dvalue}
\small
\begin{split}
C=&\frac{1}{2}\left\{\mathsmaller{\sum_{i=1}^{K}}\left[\left(D_{i,\rm C}^{\rm max}+D_{i,\rm O}^{\rm max}\right)^2+\left(r_{\rm max}\right)^2\right] + \left(\mathsmaller{\sum_{i=1}^{K}}D_{i,\rm O}^{\rm max}\right)^2 + \left(D_{0,\rm C}^{\rm max}\right)^2 \right. + \\
&\left.\mathsmaller{\sum_{i=1}^{K}}\left[(\lambda_{\rm e}e_i^{\rm max})^2+\left(\lambda_{\rm e}e_{i,\rm h}^{\rm max}\right)^2\right] + \left(\lambda_{\rm c}e_0^{\rm max}\right)^2+\left(\lambda_{\rm c}e_0^{\rm th}\right)^2\right\},
\end{split}
\end{equation}
where $D_{i,\rm C}^{\rm max}\triangleq\frac{f_i^{\rm max}T}{\phi_i}$ and $D_{0,\rm C}^{\rm max}\triangleq\frac{f_0^{\rm max}T}{\phi_0}$ are the maximum per-slot processing data via local computing and edge execution, respectively. $D_{i,\rm O}^{\rm max}\!\triangleq\! \mathbb{E}\left[WT\log_2\left(1\!+\!p_i^{\rm max}\gamma_t\right)\right]$ is the maximum average offloading rate of the WD$_i$. 
$e_i^{\rm max}\!\triangleq\!e_{\rm unit}^{\rm col}r_{\rm max}\!+\!p_i^{\rm max}T\!+\!\kappa_i\left(f_i^{\rm max}\right)^3T$ denotes the maximum per-slot energy drain at the WD$_i$. $e_0^{\rm max}\triangleq\kappa_0\left(f_0^{\rm max}\right)^3T+p_0^{\rm max}T$ is the maximum per-slot energy cost at the HAP. Then, we have the following lemma regarding $\Delta^t_V$. 
\vspace{-5pt}
\renewcommand{\algorithmicrequire}{\textbf{Initialization:}}
\renewcommand{\algorithmicensure}{\textbf{Output:}}
\begin{algorithm}[H]
	\small
	\caption{The online LEESE algorithm to solve (P1)} 
	\label{alg1}
	\begin{algorithmic}[1]
		\REQUIRE The initial system state $s(t)=\{\Theta(0), h_i^{\rm P}(0), h_i^{\rm I}(0),\forall i\}$, where $\Theta(0) = \{Q_{0}(0),Z_0(0),Q_i(0),\tilde{B}_i(0),\forall i\}$.
		\FOR {each time slot $t$}
		\STATE Observe the system state $s(t)$.
		\STATE Solve problem \eqref{Prob_Per_slot} for $\{p_0^\ast(t), f_0^{\ast}(t), r_i^\ast(t), f_i^{\ast}(t), p_i(t)^{\ast}, \tau_i^\ast(t),\forall i\}$. Specifically, obtain $p_0^\ast(t)$, $f_0^{\ast}(t)$, $r_i^\ast(t)$, $\forall i$, using \eqref{Opt_p0}, \eqref{Opt_fs} and \eqref{Opt_r}, and $f_i^{\ast}(t), p_i(t)^{\ast}, \tau_i^\ast(t)$, $\forall i$ by solving \eqref{Prob_Per_slot_sim}, respectively.
		\STATE Execute the control action $\{p_0^\ast(t), f_0^{\ast}(t), r_i^\ast(t), f_i^{\ast}(t), p_i(t)^{\ast}, \tau_i^\ast(t)\}$ and update $\Theta(t+1)=\{Q_{0}(t+1),Z_0(t+1),Q_i(t+1),\tilde{B}_i(t+1),\forall i\}$ according to \eqref{Data_queue_simp}, \eqref{Queue_B} and \eqref{Queue_Z}, respectively.
		\ENDFOR
	\end{algorithmic}
\end{algorithm}
\vspace{-20pt}

\begin{Lemma}\label{lem6}
	Given any feasible control actions and any queue backlogs, the Lyapunov drift-plus-penalty function $\Delta^t_V$ is upper bounded by
	\begin{equation} \label{LyaDriPlusPen}
	\begin{split}
	\Delta^t_V &\leq C-\! \mathbb{E}\left\{L(t)\mid\!\!\Theta_t\right\},
	\end{split}
	\end{equation}
	where $L(t) = V\mathsmaller{\sum_{i=1}^{K}}w_ir_i(t) \!+ \!\mathsmaller{\sum_{i=1}^{K}}Q_i(t)\left(D_{i,\rm C}(t)\!+\!D_{i,\rm O}(t)\!-\!r_i(t)\right) + Q_0(t)\left(D_{0,\rm C}(t)-\mathsmaller{\sum_{i=1}^{k}}D_{i,\rm O}(t)\right)+ \mathsmaller{\sum_{i=1}^{K}}\lambda_{\rm e}(B_i(t)-\Omega_i)(e_i(t)-e_{i,\rm h}(t)) + Z_0(t)\lambda_{\rm c}\left(e_0^{\rm th}-e_0(t)\right)$.
\end{Lemma}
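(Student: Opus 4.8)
\emph{Proof proposal.} The plan is to expand the one-slot Lyapunov difference $\Phi(t+1)-\Phi(t)$ queue by queue, bound each squared increment with a standard drift inequality, and then regroup the outcome into a constant part that reproduces $C$ in \eqref{Dvalue} and a queue-weighted linear part that reproduces $-L(t)$. Because $\Phi$ splits into four independent quadratics, I will handle the data queues $Q_i$ and $Q_0$, the perturbed battery queues $\tilde B_i$, and the virtual deficit queue $Z_0$ separately, and only at the end take the conditional expectation $\mathbb{E}[\,\cdot\mid\Theta_t]$.

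For the two clipped queues I will use the elementary inequalities $(\max(x,0))^2\le x^2$ and $(\min(x,0))^2\le x^2$. Rewriting \eqref{Queue_Z} as $Z_0(t+1)=\max(Z_0(t)+\lambda_{\rm c}(e_0(t)-e_0^{\rm th}),0)$ yields $\tfrac12(Z_0(t+1)^2-Z_0(t)^2)\le \tfrac12\lambda_{\rm c}^2(e_0(t)-e_0^{\rm th})^2+Z_0(t)\lambda_{\rm c}(e_0(t)-e_0^{\rm th})$. Subtracting $\Omega_i$ from \eqref{E_evol} and invoking \eqref{Queue_B} gives $\tilde B_i(t+1)=\min(\tilde B_i(t)+\lambda_{\rm e}(e_{i,\rm h}(t)-e_i(t)),0)$, so $\tfrac12(\tilde B_i(t+1)^2-\tilde B_i(t)^2)\le \tfrac12\lambda_{\rm e}^2(e_{i,\rm h}(t)-e_i(t))^2+\lambda_{\rm e}\tilde B_i(t)(e_{i,\rm h}(t)-e_i(t))$. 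The data queues are never clipped thanks to the causality constraints \eqref{ledge_off_cons}, so from \eqref{Data_queue_simp} their squared increments are exact identities, e.g. $\tfrac12(Q_i(t+1)^2-Q_i(t)^2)=\tfrac12(r_i(t)-D_{i,\rm C}(t)-D_{i,\rm O}(t))^2+Q_i(t)(r_i(t)-D_{i,\rm C}(t)-D_{i,\rm O}(t))$, and analogously for $Q_0$.

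Next I will separate each resulting expression into its second-order (constant-bounding) part and its first-order (queue-weighted) part. For the second-order part I will apply $(a-b)^2\le a^2+b^2$ for nonnegative $a,b$ together with the per-slot maxima $r_{\rm max}$, $D_{i,\rm C}^{\rm max}$, $D_{i,\rm O}^{\rm max}$, $D_{0,\rm C}^{\rm max}$, $e_i^{\rm max}$, $e_{i,\rm h}^{\rm max}$, $e_0^{\rm max}$, $e_0^{\rm th}$; summing the four contributions then matches $C$ term by term with \eqref{Dvalue}. For the first-order part I will collect $Q_i(t)(r_i(t)-D_{i,\rm C}(t)-D_{i,\rm O}(t))$, $Q_0(t)(\sum_i D_{i,\rm O}(t)-D_{0,\rm C}(t))$, $\lambda_{\rm e}\tilde B_i(t)(e_{i,\rm h}(t)-e_i(t))$ and $Z_0(t)\lambda_{\rm c}(e_0(t)-e_0^{\rm th})$. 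Using $\tilde B_i(t)=B_i(t)-\Omega_i$, each of these equals the negative of the matching term of $L(t)$; the penalty $-V\mathbb{E}\{\sum_i w_i r_i(t)\mid\Theta_t\}$ in the definition \eqref{LyaDrift} of $\Delta^t_V$ supplies the remaining $-V\sum_i w_i r_i(t)$. Taking $\mathbb{E}[\,\cdot\mid\Theta_t]$ then yields exactly $\Delta^t_V\le C-\mathbb{E}\{L(t)\mid\Theta_t\}$.

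The step I expect to be the main obstacle is controlling the offloading contribution $D_{i,\rm O}(t)=W\tau_i(t)\log_2(1+p_i(t)\gamma_i(t))$, which depends on the random channel $\gamma_i(t)$ and hence is not bounded pathwise by a fixed constant. The second-order offloading terms must therefore be handled only after conditioning on $\Theta_t$, which is precisely why $D_{i,\rm O}^{\rm max}$ is defined in \eqref{Dvalue} as an \emph{expected} maximum offloading rate; I will need to verify that these squared terms stay summable and collapse into the constant $C$ under $\mathbb{E}[\,\cdot\mid\Theta_t]$ rather than leaving a residual channel-dependent remainder. Once this is settled, the remaining per-queue algebra is routine.
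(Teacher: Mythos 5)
Your proposal is correct and follows essentially the same route as the paper's proof in Appendix~\ref{app0}: expand $\Phi(t+1)-\Phi(t)$ queue by queue, use the exact quadratic identities for $Q_i$ and $Q_0$, the clipping inequalities $(\max(x,0))^2\le x^2$ and $(\min(x,0))^2\le x^2$ for $Z_0$ and $\tilde B_i$, bound the second-order terms by the per-slot maxima to recover $C$, and collect the first-order queue-weighted terms into $-L(t)$ before conditioning on $\Theta_t$. The obstacle you flag about $D_{i,\rm O}(t)$ being channel-dependent is handled by the paper exactly as you anticipate --- by taking the expectation and absorbing it into the constant via the averaged quantity $D_{i,\rm O}^{\rm max}$ --- so no residual term remains.
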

\begin{proof}
	Please refer to Appendix \ref{app0} for detail.
\end{proof}

The key idea of our proposed LEESE algorithm is to greedily minimize the right-hand-side (RHS) of \eqref{LyaDriPlusPen} in each time slot. The intuition behind this operation is that by minimizing $\Delta^t_V$, we not only push all the queues in $\Theta(t)$ towards zero but also maximize the data sensing rate $\bar{R}$. We illustrate the procedures of LEESE in Algorithm \ref{alg1}. In time slot $t$, LEESE observes the current system states $s(t)$ and determines the actions of the HAP and WDs by solving per-slot optimization problem
\begin{subequations}\label{Prob_Per_slot}
\small
\begin{align}
\underset{\substack{p_0, f_0, r_i, p_i,f_i, \forall i,}} \max~&V\mathsmaller{\sum_{i=1}^{K}}w_ir_i \!+ \!\mathsmaller{\sum_{i=1}^{K}}Q_i\left(D_{i,\rm C}\!+\!D_{i,\rm O}\!-\!r_i\right) + Q_0\left(D_{0,\rm C}-\mathsmaller{\sum_{i=1}^{k}}D_{i,\rm O}\right) \nonumber\\
&+ \mathsmaller{\sum_{i=1}^{K}}\lambda_{\rm e}(B_i-\Omega_i)(e_i-e_{i,\rm h}) + Z_0\lambda_{\rm c}\left(e_0^{\rm th}-e_0\right)\label{Prob_Per_slot_obj}\\
\st
~~& D_{i,\rm C}\!+\!D_{i,\rm O} \leq Q_i, \forall i,\\
~~& \mathsmaller\sum_{i=1}^{K} \tau_i \leq T,\\
~~& 0\leq f_0T/\phi_0\leq Q_0, \label{Prob_Per_slot_cons3}\\
~~& 0 \!\leq \!r_i \!\leq \!r_{\rm max}, 0 \!\leq \!p_i \!\leq \!p_i^{\rm max}, \!0 \leq\! f_i \!\leq\! f_i^{\rm max}, 0 \!\leq \!f_0\! \leq \!f_0^{\rm max}, 0 \!\leq \!p_0\! \leq \!p_0^{\rm max}, \forall i,
\end{align}
\end{subequations}
where we drop the time index ``$(t)$'' for brevity. Comparing with \eqref{Primal_prob}, we remove the energy causality constraint \eqref{EH_caus} in the per-slot problem \eqref{Prob_Per_slot}. In Section \ref{secV}, we show that LEESE can always respect \eqref{EH_caus} given that the battery capacity of each WD satisfies a mild condition. In the following, we obtain the optimal solution to \eqref{Prob_Per_slot} by solving four independent subproblems, which correspond to wireless charging power control at the HAP, CPU frequency allocation at the HAP, data sensing control at WDs, and task execution control at WDs, respectively.

\vspace{-15pt}
\subsection{Optimal Wireless Charging Power}
A close observation of \eqref{Prob_Per_slot} shows that wireless charging power at the HAP can be separately optimized by solving the following problem
\begin{subequations}\label{Subprob_p0}
	\begin{align}
	\underset{\substack{p_0}} \max~& G_1(p_0) = -\mathsmaller{\sum_{i=1}^{K}}\lambda_{\rm e}\left(B_i-\Omega_i\right)e_{i,\rm h} - Z_0\lambda_{\rm c}p_0T \label{Subprob_p0_obj}\\
	\st
	~~& 0 \leq p_0 \leq p_0^{\rm max}. \label{Subprob_p0_constr}
	\end{align}
\end{subequations}
Because $\Omega_i\geq B_i$, Lemma \ref{lem_convex} shows that \eqref{Subprob_p0_obj} is a concave function in $p_0$. When $Z_0\!=\!0$, $G_1(p_0)$ monotonically increases with $p_0$ and thus $p_0\!\!=\!p_0^{\rm max}$. When $Z_0\!\!>\!\!0$, the first derivative of $G_1\!(p_0)$ is  
\begin{equation}\label{G_solu}
G_1^\prime(p_0)=\frac{\partial G_1(p_0)}{\partial p_0} = \mathsmaller{\sum_{i=1}^{K}}\lambda_{\rm e}\left(\Omega_i-B_i\right)\frac{\left(a_{1,i}a_{3,i}-a_{2,i}\right)h_i}{\left(p_0h_i+a_{3,i}\right)^2}-Z_0\lambda_{\rm c}T=0.
\end{equation}
Since $a_{1,i}a_{3,i}\geq a_{2,i}$, $G^\prime(p_0)$ monotonically decreases with the increasing of $p_0\in[0, \infty)$. As $p_0\to \infty$, $G^\prime(p_0)=-Z_0\lambda_{\rm c}T\leq 0$. In the following, we discuss the solution of \eqref{Subprob_p0} in two cases: 1) When $G_1^\prime(0)>0$, \eqref{G_solu} has a unique solution $\hat{p}_0\in[0, \infty)$, where $\hat{p}_0$ can be efficiently obtained via bi-section search method. In this case, the optimal wireless charging power $p_0^\ast = \min(\hat{p}_0,p_0^{\rm max})$. 2) When $G_1^\prime(0)\leq 0$, $G_1^\prime(p_0)\leq Z_0\lambda_{\rm c}T$ for all $p_0\geq 0$. In this case, \eqref{Subprob_p0_obj} is a monotonically decreasing function of $p_0$ and thus $p_0^\ast=0$. In summarize, the optimal solution to \eqref{Subprob_p0} is
\vspace{2pt}
\begin{small}
\begin{subnumcases}{p_0^{\ast} =\label{Opt_p0}}
\min(\hat{p}_0,p_0^{\rm max}), \!& \text{if $G_1^\prime(0)>0$ and $Z_0>0$}, \label{Opt_p0_1}\\
0, \!& \text{if $G_1^\prime(0)\leq 0$ and $Z_0>0$}, \label{Opt_p0_2}\\
p_0^{\rm max}, \!& \text{if $Z_0=0$}. \label{Opt_p0_3}
\end{subnumcases}
\end{small}\vspace{2pt}\eqref{Opt_p0} reveals that the optimal wireless charging follows a threshold structure: the HAP broadcasts power to WDs when $G_1^\prime(0)>0$, and shuts down the wireless power charging circuit when $G_1^\prime(0)\leq 0$. The threshold $G_1^\prime(0)$ decreases with the growth of battery level $B_i$ and power deficit $Z_0$. In the special case of $Z_0=0$ (i.e., the power budget of HAP is sufficient), the HAP broadcasts energy to WDs at the maximum transmit power $p_0^{\rm max}$. 

\subsection{Optimal Edge CPU Frequency}
Similarly, we can independently optimize the CPU frequency of the HAP by solving the following convex optimization:
\begin{subequations}\label{Subprob_fs}
	\begin{align}
	\underset{\substack{f_0}} \max~& - Z_0\lambda_{\rm c} \kappa_0\left(f_0\right)^3T + Q_0f_0T/\phi_0 \label{Subprob_fs_obj}\\
	\st
	~~&0\leq f_0T/\phi_0\leq Q_0,~0 \leq f_0 \leq f_0^{\rm max},\label{Subprob_fs_const}
	\end{align}
\end{subequations}
where the optimal solution is
\begin{equation}\label{Opt_fs}
f_0^{\ast}=\min\left(\sqrt{{Q_0}/{\left(3Z_0\lambda_{\rm c} \phi_0\kappa_0\right)}},\bar{f}_0^{\rm max}\right).
\end{equation}
Here, $\bar{f}_0^{\rm max}=\min(Q_0\phi_0/T, f_0^{\rm max})$. As shown in \eqref{Opt_fs}, the edge CPU frequency increases with data queue length $Q_0$ and decreases with the power deficit queue $Z_0$. Together with the optimal charging power control, such an edge CPU frequency scheduling tends to stabilize the data queue $Q_0$ and satisfy the average power constraint at the HAP.

\subsection{Optimal Data Sensing Rate}
We obtain the optimal task data size collected in time slot $t$ by solving the following linear programming (LP) problem:
\begin{equation}\label{Opt_data}
\underset{\substack{0\leq r_i\leq r_{\rm max}, \forall i}}\max \mathsmaller{\sum_{i=1}^{K}}\left(Vw_i+\lambda_{\rm e}(B_i\!-\!\Omega_i)e_{\rm unit}^{\rm col}-Q_i\right)r_i.
\end{equation}
The optimal solution of \eqref{Opt_data} exhibits a simple ON-OFF structure:
\begin{equation}
r^{\ast}_i = r_{\rm max}\cdot \mathbbm{1}_{C_i^{\rm sen}\leq 0}, \forall i, \label{Opt_r}
\end{equation}
where $C_i^{\rm sen}\triangleq Q_i-Vw_i-\lambda_{\rm e}(B_i\!-\!\Omega_i)e_{\rm unit}^{\rm col}$. In particular, the WD$_i$ operates at the maximum data sensing rate (i.e., $r_i=r_{\rm max}$) when $C_i^{\rm sen}\leq 0$, and stops data sampling otherwise. Because $C_i^{\rm sen}$ grows with $Q_i$ and falls with $B_i$, the WD$_i$ reduces sensing activity when $Q_i$ is large or $B_i$ is small, thus avoiding continuous local data backlog and energy draining.

\subsection{Optimal Task Execution}
The task execution at WDs includes local computing and task offloading. By removing the terms that are only related to $p_0$, $f_0$ and $r_i$ in \eqref{Prob_Per_slot}, we solve the following optimization problem to optimize task execution of all the WDs
\begin{subequations}\label{Prob_Per_slot_sim}
	\small
	\begin{align}
	\underset{\substack{p_i, f_i, \tau_i,\forall i}} \max~&\mathsmaller{\sum_{i=1}^{K}}D_{i,\rm O}\left(Q_i-Q_0\right) + \mathsmaller{\sum_{i=1}^{K}}Q_iD_{i,\rm C} + \mathsmaller{\sum_{i=1}^{K}}\lambda_{\rm e}\tilde{B}_i\left(e_{i,\rm O}+e_{i,\rm C}\right) \label{Prob_Per_slot_sim_redc_obj}\\
	\st
	~~&D_{i,\rm C} + D_{i,\rm O} \leq Q_i, \forall i \label{Prob_Per_slot_sim_cons_data}\\
	~~&\mathsmaller{\sum_{i=1}^{K}}\tau_i \leq T, \label{Prob_Per_slot_sim_cons_time}\\
	~~& 0 \!\leq \!p_i \!\leq \!p_i^{\rm max}, ~0 \leq\! f_i \!\leq\! f_i^{\rm max}, ~0\leq \tau_i \leq T, \forall i. \label{Prob_Per_slot_sim_cons_pft}
	\end{align}
\end{subequations}
Generally, \eqref{Prob_Per_slot_sim} is a non-convex problem due to the time-varying coefficient $\left(Q_i-Q_0\right)$ and non-convex constraint \eqref{Prob_Per_slot_sim_cons_data}. Nonetheless, we show that \eqref{Prob_Per_slot_sim} can be equivalently transformed into a convex optimization as follows. For convenience, we denote the set of WDs as $\mathcal{W}$. We introduce auxiliary variables $r_{i,\rm O}$'s and rewrite \eqref{Prob_Per_slot_sim} as
\begin{subequations}\label{Prob_Per_slot_sim_rew}
	\small
	\begin{align}
	\underset{\substack{f_i, \tau_i, r_{i,\rm O}, e_{i,\rm O},\forall i}} \max~&\mathsmaller{\sum_{i\in\mathcal{W}}}r_{i,\rm O}\left(Q_i-Q_0\right) + \mathsmaller{\sum_{i\in\mathcal{W}}}Q_iD_{i,\rm C} + \mathsmaller{\sum_{i\in\mathcal{W}}}\lambda_{\rm e}\tilde{B}_i\left(e_{i,\rm O}+e_{i,\rm C}\right) \label{Prob_Per_slot_sim_redc_rew_obj}\\
	\st
	~~&r_{i,\rm O} \leq D_{i,\rm O}, \forall i,\label{Prob_Per_slot_sim_cons_rew_aux}\\
	~~&D_{i,\rm C} + r_{i,\rm O} \leq Q_i, \forall i, \label{Prob_Per_slot_sim_cons_rew_data}\\
	~~&\mathsmaller{\sum_{i\in\mathcal{W}}}\tau_i \leq T, \label{Prob_Per_slot_sim_cons_rew_time}\\
	~~& 0 \!\leq \!e_{i,\rm O} \!\leq \!p_i^{\rm max}\tau_i, ~0 \leq\! f_i \!\leq\! f_i^{\rm max}, ~0\leq \tau_i \leq T, \forall i, \label{Prob_Per_slot_sim_cons_rew_pft}
	\end{align}
\end{subequations}
where $D_{i,\rm O}={W\tau_i}\log_2\left(1+\frac{e_{i,\rm O}\gamma_i}{\tau_i}\right)$. Define a subset of WDs $\mathcal{V}=\{i|Q_i-Q_0<0,\forall i\in\mathcal{W}\}$, we have the following interesting result for the optimal solution to \eqref{Prob_Per_slot_sim_rew}.
\begin{Lemma}\label{lem8}
	To achieve the optimum of \eqref{Prob_Per_slot_sim_rew}, we always have $r_{i,\rm O}=p_i=\tau_i=0$ for a WD $i\in\mathcal{V}$, and $r_{i,\rm O} = D_{i,\rm O}$ for a WD $i\in\mathcal{W}\setminus\mathcal{V}$.
\end{Lemma}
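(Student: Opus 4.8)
The plan is to prove the lemma by an improvement (exchange) argument that inspects, for each WD $i$ separately, the signs of the coefficients with which its offloading variables enter the objective \eqref{Prob_Per_slot_sim_redc_rew_obj}. The crucial observations are: (i) the perturbation gives $\tilde{B}_i = B_i - \Omega_i \leq 0$ whenever $\Omega_i \geq B_i$, so the offloading-energy term $\lambda_{\rm e}\tilde{B}_i e_{i,\rm O}$ carries a non-positive coefficient; (ii) the auxiliary variable is non-negative, $r_{i,\rm O}\geq 0$, and $D_{i,\rm O}\geq 0$; and (iii) the only term of \eqref{Prob_Per_slot_sim_redc_rew_obj} containing $r_{i,\rm O}$ is $r_{i,\rm O}(Q_i - Q_0)$, so WD $i$'s offloading variables couple to the rest of the problem only through the shared time budget \eqref{Prob_Per_slot_sim_cons_rew_time} and the per-WD causality \eqref{Prob_Per_slot_sim_cons_rew_data}. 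I would then split on the sign of $Q_i - Q_0$, which is exactly the membership test defining $\mathcal{V}$.

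First I would treat $i\in\mathcal{V}$, i.e. $Q_i - Q_0 < 0$. Since $r_{i,\rm O}\geq 0$, the term $r_{i,\rm O}(Q_i-Q_0)$ is non-positive and is maximized at $r_{i,\rm O}=0$; likewise $\lambda_{\rm e}\tilde{B}_i e_{i,\rm O}\leq 0$ is maximized at $e_{i,\rm O}=0$. Setting $r_{i,\rm O}=e_{i,\rm O}=0$ keeps every constraint feasible: \eqref{Prob_Per_slot_sim_cons_rew_aux} reduces to $0\leq D_{i,\rm O}$ and \eqref{Prob_Per_slot_sim_cons_rew_data} to $D_{i,\rm C}\leq Q_i$, neither of which involves $\tau_i$. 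Hence $\tau_i$ then appears only in the time budget, and driving it to $\tau_i=0$ relaxes \eqref{Prob_Per_slot_sim_cons_rew_time} for the other WDs without changing WD $i$'s contribution; with $\tau_i=0$ and $e_{i,\rm O}=p_i\tau_i=0$ we may take $p_i=0$. This yields the claimed $r_{i,\rm O}=p_i=\tau_i=0$.

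For $i\in\mathcal{W}\setminus\mathcal{V}$, i.e. $Q_i-Q_0\geq 0$, I would argue that \eqref{Prob_Per_slot_sim_cons_rew_aux} is tight at an optimum. Suppose an optimal solution has $r_{i,\rm O}<D_{i,\rm O}$. The tempting move, namely raising $r_{i,\rm O}$, can be blocked by the causality constraint \eqref{Prob_Per_slot_sim_cons_rew_data} before \eqref{Prob_Per_slot_sim_cons_rew_aux} becomes active, so this is the main obstacle and must be handled the other way around. Instead I would lower the offloaded throughput: since $D_{i,\rm O}=W\tau_i\log_2\!\left(1+e_{i,\rm O}\gamma_i/\tau_i\right)$ is continuous and strictly increasing in $e_{i,\rm O}$ and vanishes as $e_{i,\rm O}\to 0$, there exists $e_{i,\rm O}'\in[0,e_{i,\rm O})$ for which $D_{i,\rm O}=r_{i,\rm O}$ (the endpoint $e_{i,\rm O}'=0$ covering $r_{i,\rm O}=0$). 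Replacing $e_{i,\rm O}$ by $e_{i,\rm O}'$ leaves $r_{i,\rm O}$, $D_{i,\rm C}$ and the term $r_{i,\rm O}(Q_i-Q_0)$ untouched, weakly increases the non-positive term $\lambda_{\rm e}\tilde{B}_i e_{i,\rm O}$, and preserves feasibility of \eqref{Prob_Per_slot_sim_cons_rew_aux}--\eqref{Prob_Per_slot_sim_cons_rew_pft}; hence the objective does not decrease while \eqref{Prob_Per_slot_sim_cons_rew_aux} becomes tight. Therefore an optimal solution with $r_{i,\rm O}=D_{i,\rm O}$ exists for every $i\in\mathcal{W}\setminus\mathcal{V}$, which together with the previous case completes the argument.
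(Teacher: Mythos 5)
Your proposal is correct and follows the same basic route as the paper's proof: a per-WD sign analysis of the coefficients $(Q_i-Q_0)$ and $\lambda_{\rm e}\tilde{B}_i\leq 0$ in \eqref{Prob_Per_slot_sim_redc_rew_obj}, splitting on membership in $\mathcal{V}$. For $i\in\mathcal{V}$ the two arguments coincide. For $i\in\mathcal{W}\setminus\mathcal{V}$, however, your treatment is noticeably more careful than the paper's: the paper simply asserts that the objective is increasing in $r_{i,\rm O}$ and concludes $r_{i,\rm O}=D_{i,\rm O}$, which implicitly assumes \eqref{Prob_Per_slot_sim_cons_rew_aux} is the binding constraint, whereas you correctly observe that \eqref{Prob_Per_slot_sim_cons_rew_data} may saturate first and instead make \eqref{Prob_Per_slot_sim_cons_rew_aux} tight by \emph{decreasing} $e_{i,\rm O}$ (via the intermediate value theorem applied to $e_{i,\rm O}\mapsto D_{i,\rm O}$), which is weakly improving because $\lambda_{\rm e}\tilde{B}_i\leq 0$. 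This exchange argument closes a small rigor gap in the published proof and makes explicit that the conclusion is ``there exists an optimum with these properties,'' which is exactly what the subsequent substitution into \eqref{Prob_Per_slot_sim_redc_rew_obj} requires; you also correctly verify that setting $\tau_i=0$ for $i\in\mathcal{V}$ only relaxes the shared time budget \eqref{Prob_Per_slot_sim_cons_rew_time}, a point the paper leaves implicit.
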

\begin{proof}
	For a WD $i\in\mathcal{V}$, we have $Q_i-Q_0<0$. In this case, \eqref{Prob_Per_slot_sim_redc_rew_obj} is a monotonic decreasing function of $r_{i,\rm O}$ and $e_{i,\rm O}$, and thus \eqref{Prob_Per_slot_sim_redc_rew_obj} achieves maximum when $r_{i,\rm O}=p_i=\tau_i=0$. In contrast, for a WD $i\in\mathcal{W}\setminus\mathcal{V}$ where $Q_i-Q_0\geq 0$, \eqref{Prob_Per_slot_sim_redc_rew_obj} monotonically increases with $r_{i,\rm O}$, and we have $r_{i,\rm O} = D_{i,\rm O}$ when \eqref{Prob_Per_slot_sim_rew} achieves optimum.
\end{proof}
By substituting $r_{i,\rm O}=p_i=\tau_i=0,\forall i\in\mathcal{V}$, into \eqref{Prob_Per_slot_sim_redc_rew_obj}, we remove the terms with negative $\left(Q_i-Q_0\right)$ and equivalently transform \eqref{Prob_Per_slot_sim_rew} into a convex optimization problem. As a result, we can use well-established CVX tools such as interior point method to optimally solve \eqref{Prob_Per_slot_sim_rew}. However, as the number of WDs increases, the interior point method exhibits cubic growth of computational complexity in the worst case \cite{Mehrotra1992}. As a result, the CVX-based method may lead to unacceptable computation overhead for online implementation in large-scale MEC systems. As shown in Section \ref{sec6}, the CVX-based method incurs almost 6\% computation overhead to produce an action when the time duration $T=5$ seconds and the number of WDs $K=16$. To resolve this issue, in the following, we propose a BCD-based method that obtains a closed-form solution to \eqref{Prob_Per_slot_sim}. We show in simulations that the proposed BCD-based method achieves almost identical performance as the optimal CVX-based methods, but incurs only 0.04\% of the computation delay. 


%

The BCD-based method (as shown in Algorithm \ref{alg2}) solves \eqref{Prob_Per_slot_sim} by alternately optimizing a) the transmit power $p_i$ and local CPU frequency $\!f_i\!$ and b) the time allocation $\tau_i$, detailed as below. 

\emph{a) Transmit Power and CPU Frequency Control}: By fixing $\tau_i$, $\forall i$, \eqref{Prob_Per_slot_sim} can be decomposed into $K$ parallel subproblems, each of which is in the form of 
\begin{subequations}\label{Prob_Per_slot_sim_sub}
	\small
	\begin{align}
	\underset{\substack{f_i, p_i}} \max~&F(f_i) + G(p_i) \label{Prob_Per_slot_sim_sub_redc_obj}\\
	\st
	~~&D_{i,\rm C} + D_{i,\rm O} \leq Q_i,\label{Prob_Per_slot_sim_cons_rew_data_sub}\\
	~~& 0 \!\leq \!p_i \!\leq \!p_i^{\rm max}, ~0 \leq\! f_i \!\leq\! f_i^{\rm max}, ~0\leq \tau_i \leq T. \label{Prob_Per_slot_sim_cons_rew_pft_sub}
	\end{align}
\end{subequations}
where $F(f_i)=\lambda_{\rm e}\tilde{B}_i\kappa_i\left(f_i\right)^3T\!+\!Q_i\frac{f_iT}{\phi_i}$ and $G(p_i)=\lambda_{\rm e}\tilde{B}_ip_i\tau_i\!+\!\left(Q_i-Q_0\right)\tau_i\log_2\left(1+p_i\gamma_i\right)$. In \eqref{Prob_Per_slot_sim_sub}, we maximize a weighted summation of energy cost and data processing rate at WD$_i$.

\renewcommand{\algorithmicrequire}{\textbf{Initialization:}}
\renewcommand{\algorithmicensure}{\textbf{Output:}}
\begin{algorithm}
	\small
	\caption{Block coordinate descent based method to solve problem \eqref{Prob_Per_slot_sim}} 
	\label{alg2}
	\begin{algorithmic}[1]
		\REQUIRE The iteration index $n=0$ and initial time allocation $\tau_i^0, \forall i$.
		\REPEAT
		\STATE Solve \eqref{Prob_Per_slot_sim_sub} for given $\tau_i^n, \forall i$, and obtain the optimal solution $\{p_i^{n+1}, f_i^{n+1}\}, \forall i$, using \eqref{Opt_pu_fu}.
		\STATE Solve \eqref{Prob_Per_slot_sim_time} for given $\{p_i^{n+1}, f_i^{n+1}\}, \forall i$, and obtain the optimal solution $\tau_i^{n+1}, \forall i$, using \eqref{Opt_t}.
		\STATE Update $n=n+1$.
		\UNTIL The increase of the objective value \eqref{Prob_Per_slot_sim_redc_obj} is below a threshold $\epsilon>0$.
	\end{algorithmic}
\end{algorithm}

To solve \eqref{Prob_Per_slot_sim_sub}, we introduce two auxiliary functions $\mathcal{F}_{p_i}(x) \triangleq \frac{1}{\gamma_i}2^{\frac{1}{W\tau_i}\left(Q_i-\frac{xT}{\phi_i}\right)}-\frac{1}{\gamma_i}$ and $\mathcal{F}_{f_i}(x) \triangleq \frac{\left[Q_i-W\tau_i\log_2\left(1+x\gamma_i\right)\right]\phi_i}{T}$. Intuitively, the feasible CPU frequency $f_i$ and transmit power $p_i$ must satisfy that $f_i\!\in\![0,\! \bar{f}_i^{\rm th}]$ and $p_i\!\in\![0, \bar{p}_i^{\rm th}]$, where $\bar{f}_i^{\rm th}\!=\min\left(f_i^{\rm max}, \mathcal{F}_{f_i}(0)\right)$ and $\bar{p}_i^{\rm th}\!=\!\min\left(p_i^{\rm max},\! \mathcal{F}_{p_i}(0)\right)$, respectively. Due to the nonconvex
constraint \eqref{Prob_Per_slot_sim_cons_rew_data_sub} and time-varying coefficient $(Q_i-Q_0)$ in $G(p_i)$, \eqref{Prob_Per_slot_sim_sub} is generally a non-convex optimization problem. Nevertheless, we derive the closed-form expression of optimal
solution to \eqref{Prob_Per_slot_sim_sub} in the following proposition.
\begin{prop}\label{lem_opt_solution}
	The optimal solution of \eqref{Prob_Per_slot_sim_sub} $\left\{f_i^{\ast},p_i^{\ast}\right\} =$
	\begin{small}
	\begin{subnumcases}{\label{Opt_pu_fu}}
	\left\{\hat{f}_i, \hat{p}_i\right\}, \!& \text{if $Q_i-Q_0\geq 0$ and $\hat{D}_{i,\rm O}^{\rm max}+\hat{D}_{i,\rm C}^{\rm max} \leq Q_i$ and $\tau_i>0$}, \label{Opt_pu_fu_3}\\
	\left\{\hat{f}_i, \mathcal{F}_{p_i}\left(\hat{f}_i\right)\right\}, \!& \text{if $Q_i-Q_0\geq 0$ and $\hat{D}_{i,\rm O}^{\rm max}+\hat{D}_{i,\rm C}^{\rm max} > Q_i$ and $\tilde{B}_t=0$ and $\tau_i>0$}, \label{Opt_pu_fu_2}\\
	\left\{\breve{f}_i, \mathcal{F}_{p_i}\left(\breve{f}_i\right)\right\}, \!& \text{if $Q_i-Q_0\geq 0$ and $\hat{D}_{i,\rm O}^{\rm max}+\hat{D}_{i,\rm C}^{\rm max} > Q_i$ and $\tilde{B}_t<0$ and $\tau_i>0$}, \label{Opt_pu_fu_4}\\
	\left\{\hat{f}_i, 0\right\}, \!& \text{if $Q_i-Q_0<0$ or $\tau_i=0$}.\label{Opt_pu_fu_5}
	\end{subnumcases}
	\end{small}Here, $\hat{f}_i\!=\!\min\left(\sqrt{\frac{-Q_i}{3\lambda_{\rm e}\tilde{B}_i\kappa_i\phi_i}}, \bar{f}_i^{\rm th}\right)$ and $\hat{p}_i=\left[\frac{\left(Q_0-Q_i\right)W}{\lambda_{\rm e}\tilde{B}_i\ln2}\!-\!\frac{1}{\gamma_i}\right]_0^{\bar{p}_i^{\rm th}}$, with $[\cdot]_x^y=\min(\max(\cdot,x),y)$. $\hat{D}_{i,\rm C}^{\rm max}=\frac{\hat{f}_iT}{\phi_i}$ and $\hat{D}_{i,\rm O}^{\rm max}=W\tau_i\log_2\left(1+\hat{p}_i\gamma_i\right)$. $\breve{f}_i=\left[\bar{f}_i\right]_{f_i^{\rm lb}}^{f_i^{\rm ub}}$, with $f_i^{\rm ub} = \hat{f}_i$, $f_i^{\rm lb} = \max\left(0, \mathcal{F}_{f_i}\left(\hat{p}_i\right)\right)$, and $\bar{f}_i\in[0,+\infty)$ is the unique solution of $U^\prime(f_i) = 3\lambda_{\rm e}\kappa_iT\tilde{B}_i\left(f_i\right)^2 - \frac{\lambda_{\rm e}\tilde{B}_iT\ln 2}{W\phi_i\gamma_i}2^{\frac{Q_i}{W\tau_i}-\frac{f_iT}{W\tau_i\phi_i}} + \frac{T}{\phi_i}Q_0 = 0$. In particular, $U^\prime(f_i)$ monotonically decreases with $f_i$, and thus $\bar{f}_i$ can be efficiently obtained via bisection search.
\end{prop}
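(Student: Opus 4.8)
The plan is to exploit the \emph{separable} structure of the objective $F(f_i)+G(p_i)$ in \eqref{Prob_Per_slot_sim_sub}: the two terms couple only through the single data-causality constraint \eqref{Prob_Per_slot_sim_cons_rew_data_sub}, while the box bounds act on $f_i$ and $p_i$ independently. The starting point is the perturbation identity \eqref{Queue_B} together with the fact that the battery level never exceeds its capacity, i.e. $B_i\le\Omega_i$, so that $\tilde B_i=B_i-\Omega_i\le 0$. With $\tilde B_i\le 0$ the cubic term of $F$ has a non-positive coefficient, so $F$ is concave in $f_i$; and when $Q_i-Q_0\ge 0$ the $\log$ term of $G$ has a non-negative coefficient, so $G$ is concave in $p_i$. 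A further key observation is that any feasible point must satisfy $f_i\le \bar f_i^{\rm th}$ and $p_i\le \bar p_i^{\rm th}$ (obtained by setting $D_{i,\rm O}=0$, respectively $D_{i,\rm C}=0$, in \eqref{Prob_Per_slot_sim_cons_rew_data_sub}); hence $\hat f_i$ and $\hat p_i$ are precisely the maximizers of the concave $F$ and $G$ over their individually feasible intervals.

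I would first dispatch the degenerate case \eqref{Opt_pu_fu_5}. If $\tau_i=0$ the offloading terms vanish and only $F$ remains, giving $f_i^\ast=\hat f_i$, $p_i^\ast=0$. If $Q_i-Q_0<0$, then both the linear term $\lambda_{\rm e}\tilde B_ip_i\tau_i$ (coefficient $\le 0$) and the $\log$ term (coefficient $<0$) of $G$ are non-increasing in $p_i$, so $p_i^\ast=0$ and the problem again reduces to maximizing $F$, yielding $\hat f_i$. Next, for $Q_i-Q_0\ge 0$ and $\tau_i>0$ I would compute the \emph{decoupled} maximizers $\hat f_i$ and $\hat p_i$ from the stationarity conditions $F'=0$ and $G'=0$ (each projected onto its feasible interval). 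If this pair already satisfies \eqref{Prob_Per_slot_sim_cons_rew_data_sub}, i.e. $\hat D_{i,\rm O}^{\rm max}+\hat D_{i,\rm C}^{\rm max}\le Q_i$, then it attains the separable upper bound on $F+G$ over the whole feasible set and is therefore globally optimal, which is case \eqref{Opt_pu_fu_3}.

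The remaining work is the binding-constraint regime, where $(\hat f_i,\hat p_i)$ violates \eqref{Prob_Per_slot_sim_cons_rew_data_sub}. I would first argue by monotonicity that the constraint must then be active at the optimum: since $F$ (respectively $G$) is increasing below $\hat f_i$ (respectively $\hat p_i$), any optimum with slack could be improved by raising whichever variable is below its decoupled maximizer, a contradiction; hence $D_{i,\rm C}+D_{i,\rm O}=Q_i$. On this active curve I eliminate $p_i$ via $p_i=\mathcal{F}_{p_i}(f_i)$, using $\log_2(1+p_i\gamma_i)=\tfrac{1}{W\tau_i}(Q_i-f_iT/\phi_i)$. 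When $\tilde B_i=0$ the reduced objective is linear in $(D_{i,\rm C},D_{i,\rm O})$ with per-bit values $Q_i\ge Q_i-Q_0$, so local computing is prioritized, $f_i^\ast=\hat f_i$ and $p_i^\ast=\mathcal{F}_{p_i}(\hat f_i)$, giving case \eqref{Opt_pu_fu_2}. When $\tilde B_i<0$ the substitution yields a one-dimensional function $U(f_i)=F(f_i)+G(\mathcal{F}_{p_i}(f_i))$ whose derivative is the stated $U'(f_i)$; I would then verify that each of its three terms is non-increasing in $f_i$ (the quadratic with coefficient $3\lambda_{\rm e}\kappa_iT\tilde B_i<0$, the exponential term with a positive prefactor multiplying a decreasing exponential, and a constant), so $U'$ is strictly decreasing and its root $\bar f_i$ is unique and bisection-computable, giving case \eqref{Opt_pu_fu_4} after projecting onto $[f_i^{\rm lb},f_i^{\rm ub}]$.

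The main obstacle I anticipate is precisely this $\tilde B_i<0$ active case: beyond deriving the monotone first-order condition, the delicate point is justifying the projection limits $f_i^{\rm ub}=\hat f_i$ and $f_i^{\rm lb}=\max(0,\mathcal{F}_{f_i}(\hat p_i))$. These encode that, once $f_i$ exceeds $\hat f_i$ or $p_i=\mathcal{F}_{p_i}(f_i)$ exceeds $\hat p_i$ (equivalently $f_i<\mathcal{F}_{f_i}(\hat p_i)$), one of $F$ or $G$ is past its own peak, so moving back toward the interval strictly improves $U$; making this rigorous requires relating the sign of $U'$ at the endpoints to the decoupled optima $\hat f_i,\hat p_i$ and handling the interaction with the box bounds $f_i^{\rm max},p_i^{\rm max}$ that are embedded in $\bar f_i^{\rm th}$ and $\bar p_i^{\rm th}$.
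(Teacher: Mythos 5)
Your proposal is correct and follows essentially the same route as the paper's proof: the same case split on $\tau_i$, the sign of $Q_i-Q_0$, feasibility of the decoupled maximizers $(\hat f_i,\hat p_i)$, and the sign of $\tilde B_i$, followed by enforcing the active constraint $D_{i,\rm C}+D_{i,\rm O}=Q_i$, eliminating $p_i$ via $\mathcal{F}_{p_i}$, and exploiting the monotone decrease of $U'$ to locate $\bar f_i$ by bisection. Your explicit monotonicity argument for why the data-causality constraint must bind, and the per-bit comparison $Q_i\geq Q_i-Q_0$ in the $\tilde B_i=0$ sub-case, are small elaborations of steps the paper states more tersely, not a different method.
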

\begin{proof}
	Please refer to Appendix \ref{App_Opt_solution} for detail.
\end{proof}
The result in \eqref{Opt_pu_fu} shows that the optimal task execution control is closely related to the current available energy $B_i$ (absorbed in $\tilde{B}_i$) and data queue length $\{Q_i, Q_0\}$, detailed as following: 1) The local CPU frequency $\hat{f}_i$ and transmit power $\hat{p}_i$ increase with the battery level $B_i$. 2) A larger data queue length $Q_i$ yields both higher $\hat{f}_i$ and $\hat{p}_i$. 3) The optimal offloading solution exhibits a threshold-based structure: the WD$_i$ performs task offloading only when the local data queue backlog surpasses that at the edge (i.e., $Q_i-Q_0\geq 0$). Otherwise, it only conducts local computation. Besides, the bigger the backlog gap $Q_i-Q_0$, the higher the transmit power $\hat{p}_i$.

\emph{b) Time Allocation for Task Offloading}: Given the $f_i$ and $p_i$ in \eqref{Opt_pu_fu}, $\forall i$, we obtain the optimal time allocation by solving the following LP problem:
\begin{subequations}\label{Prob_Per_slot_sim_time}
	\begin{align}
	\underset{\substack{\tau_i}} \max~&\mathsmaller{\sum_{i=1}^{K}}C_i\tau_i, ~~\st
	~~\mathsmaller{\sum_{i=1}^{K}}\tau_i \leq T, ~0\leq \tau_i\leq \tau_i^{\rm ub},
	\end{align}
\end{subequations}
where $C_i=\left(Q_i-Q_0\right)W\log_2\left(1+\frac{p_ih_i}{N_0}\right) + \lambda_{\rm e}\tilde{B}_ip_i$. $\tau_i^{\rm ub}=\min\left(T, \frac{Q_i-D_{i,\rm C}}{W\log_2\left(1+{p_i\gamma_i}\right)}\right)$ is obtained by absorbing \eqref{Prob_Per_slot_sim_cons_rew_data_sub} into the box constraint $0\leq \tau_i \leq T$. In a special case of $p_i=0$, we have $\tau_i^{\rm ub}=T$. We sort the WDs in an decreasing manner according to $C_i$, i.e., $j\leq i$ if $C_j\geq C_i$. Then, the optimal solution of \eqref{Prob_Per_slot_sim_time} is
\begin{subnumcases}{\tau_i^{\ast} =\label{Opt_t}}
0, \!&\text{if $C_i\leq 0$}, \label{Opt_t0}\\
\min\left(\tau_1^{\rm ub},T\right), \!& \text{if $i=1$ and $C_i> 0$}, \label{Opt_t1}\\
\max\left(\min\left(\tau_i^{\rm ub},T-\mathsmaller{\sum_{j=1}^{i-1}}\tau_j^{\rm ub}\right),0\right), \!& \text{if $1\leq i\leq K$ and $C_i> 0$}. \label{Opt_t2}
\end{subnumcases}

The result in \eqref{Opt_t} shows that the optimal time allocation has a threshold-based structure: the WD$_i$ has non-zero offloading time only when $C_i>0$ and shuts down the offloading circuits otherwise. Since $C_i$ increases with the battery level $B_i$ and the difference of $Q_i-Q_0$, the WD$_i$ is allocated with a large $\tau_i$ when $B_i$ and $Q_i$ are large. Notice that $C_i\leq 0$ when $Q_i-Q_0<0$. The optimal time allocation is align with the optimal control of transmit power in \eqref{Opt_pu_fu}. Overall, the BCD-based LEESE algorithm creates a close-loop control to stabilize both $Q_i$ and $Q_0$. 

\section{Performance Analysis}\label{secV}
In this section, we analyze the performance of the proposed LEESE algorithm. Recall that LEESE removes the energy causality constraint \eqref{EH_caus} when solving (P1). Here, we first show that LEESE can always satisfy \eqref{EH_caus} as long as the battery capacity $\Omega_i$ satisfies a mild condition. Then, we prove that LEESE also meets all the long-term performance requirements, thus producing a feasible solution to (P1). In addition, it achieves a tradeoff between data sensing performance and computation delay by tuning the Lyapunov parameter $V$. 

We first show in the following Lemma \ref{lem_ub} that the data queue length $Q_i(t)$ is bounded above. 
\begin{Lemma}\label{lem_ub}
	Given an initial data queue satisfying $Q_i(0)\in\left[0,Q^{\rm max}\right]$, the data queue length at the WD$_i$ satisfies that $0\leq Q_i(t)\leq Q^{\rm max}$, for $t=0,1,2,\cdots$, where $Q^{\rm max} = V+r_{\rm max}$.
\end{Lemma}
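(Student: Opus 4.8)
The plan is to prove both inequalities simultaneously by induction on the slot index $t$. The base case $t=0$ is immediate from the hypothesis $Q_i(0)\in[0,Q^{\rm max}]$. For the inductive step I assume $0\le Q_i(t)\le Q^{\rm max}=V+r_{\rm max}$ and feed the queue recursion \eqref{Data_queue_simp} two structural facts already established in the excerpt: the per-slot data-causality constraint \eqref{ledge_off_cons}, which caps how much can be drained, and the closed-form threshold sensing rule \eqref{Opt_r}, which caps when fresh data can be admitted.

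The lower bound needs almost no work. Constraint \eqref{ledge_off_cons} gives $D_{i,\rm C}(t)+D_{i,\rm O}(t)\le Q_i(t)$, so in \eqref{Data_queue_simp} the quantity removed from the queue never exceeds its current content; since $r_i(t)\ge 0$ as well, this yields $Q_i(t+1)\ge 0$ directly.

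The upper bound is the substantive part, and I would split on the sensing decision produced by LEESE. If $r_i(t)=0$, the recursion only removes nonnegative processed data, so $Q_i(t+1)\le Q_i(t)\le Q^{\rm max}$ by the inductive hypothesis. If instead $r_i(t)=r_{\rm max}$, then the ON-OFF rule \eqref{Opt_r} forces $C_i^{\rm sen}\le 0$. The key observation is that $\tilde{B}_i(t)=B_i(t)-\Omega_i\le 0$, which follows from the capping operator $\min(\cdot,\Omega_i)$ in the battery dynamics \eqref{E_evol}; hence the battery contribution $-\lambda_{\rm e}(B_i(t)-\Omega_i)e_{\rm unit}^{\rm col}=\lambda_{\rm e}(\Omega_i-B_i(t))e_{\rm unit}^{\rm col}$ is nonnegative, so $C_i^{\rm sen}\ge Q_i(t)-Vw_i$. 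Combining this with $C_i^{\rm sen}\le 0$ gives $Q_i(t)\le Vw_i\le V$, where the last step uses the normalization $w_i\le 1$. Dropping the nonnegative processed data then yields $Q_i(t+1)\le Q_i(t)+r_{\rm max}\le V+r_{\rm max}=Q^{\rm max}$, which closes the induction.

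The main obstacle is conceptual rather than computational: one must identify precisely why sensing can never overflow the buffer, namely that LEESE admits fresh samples ($r_i=r_{\rm max}$) only when the current backlog is already at most $Vw_i$, a consequence of the threshold structure of \eqref{Opt_r} together with the sign of the perturbed battery queue. I would therefore make the use of $B_i(t)\le\Omega_i$ explicit, since it is exactly what makes the battery term in $C_i^{\rm sen}$ act in our favor, and I would flag the weight normalization $w_i\le 1$ on which the stated constant $Q^{\rm max}=V+r_{\rm max}$ relies; for general weights the identical argument delivers $Q^{\rm max}=V\max_i w_i+r_{\rm max}$.
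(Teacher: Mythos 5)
Your proof is correct and follows essentially the same route as the paper: induction on $t$, splitting the upper-bound step on the ON--OFF sensing decision and using $B_i(t)\leq\Omega_i$ to deduce $Q_i(t)\leq V$ whenever $r_i(t)=r_{\rm max}$. Your explicit treatment of the lower bound via \eqref{ledge_off_cons} and your remark that the constant $Q^{\rm max}=V+r_{\rm max}$ tacitly relies on $w_i\leq 1$ (otherwise one gets $V\max_i w_i+r_{\rm max}$) are small but genuine improvements in rigor over the paper's version, which silently drops the weight $w_i$.
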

\begin{proof}
	We prove this lemma by induction. In particular, the base case is that $0\leq Q_i(t)\leq Q^{\rm max}$ holds initially at $t=0$. For the induction step, we suppose that $0\leq Q_i(t)\leq Q^{\rm max}$ holds for time slot $t$. In the following, we prove that $0\leq Q_i(t+1)\leq Q^{\rm max}$ is true by considering two cases: a) When the WD$_i$ collects zero-bit data in time slot $t$, it is straightforward that $Q_i(t+1)\leq Q_i(t) \leq V+r_{\rm max}$. b) When the WD$_i$ collects $r_i(t)$-bit data in time slot $t$, we have $r_i(t)=r_{\rm max}$ and $V+\lambda_{\rm e}(B_i(t)\!-\!\Omega_i)e_{\rm unit}^{\rm col}-Q_i(t)\geq 0$ according to \eqref{Opt_r}. Since $B_i(t)\!-\!\Omega_i\leq 0$ holds for all $t$, we have $Q_i(t) \leq V$ and thus $Q_i(t+1)\leq Q_i(t) + r_{\rm max} \leq V+r_{\rm max}$. By the induction principle, it follows that $0\!\leq\! Q_i(t)\!\leq\! Q^{\rm max}$ is true for all $t$, which completes the proof.
\end{proof}

Lemma \ref{lem_ub} shows that although we assume in this paper infinite data buffer capacity for simplicity of analysis, the data queue length is in fact bounded when implementing LEESE. In the following Proposition \ref{lem3}, we derive a sufficient condition on $\Omega_i$ to safely remove the energy causality constraint \eqref{EH_caus} when implementing LEESE to solve (P1).

\begin{prop}\label{lem3}
	The proposed LEESE algorithm respects the energy causality constraint \eqref{EH_caus} in every time slot if 
	\begin{equation}\label{Bat_thre}
		\Omega_i\!\geq\!\max\left(\!\frac{V}{\lambda_{\rm e}e_{\rm unit}^{\rm col}}+\lambda_{\rm e}e_i^{\rm max},\bar{\Omega}_i\!\right)+\lambda_{\rm e}p_0^{\rm max}T, \forall i,
	\end{equation}
	where $\bar{\Omega}_i$ is the unique solution of equation $H(\Omega_i) = 0$ and can be efficiently found via bisection search, where
	$H(\Omega_i) = \lambda_{\rm e}\kappa_i\left(\sqrt{\frac{V+r_{\rm max}}{3\lambda_{\rm e}\left(\Omega_i-\lambda_{\rm e}e_i^{\rm max}\right)\kappa_i\phi_i}}\right)^3T \!+\! \frac{\left(V+r_{\rm max}\right)W}{\lambda_{\rm e}\left(\Omega_i-\lambda_{\rm e}e_i^{\rm max}\right)\ln 2}T\!-\!B_{\rm min}.$
\end{prop}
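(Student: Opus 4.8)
The plan is to verify that the LEESE controls, computed from the per-slot problem \eqref{Prob_Per_slot} with the energy-causality constraint \eqref{EH_caus} dropped, never actually violate \eqref{EH_caus} once $\Omega_i$ meets \eqref{Bat_thre}. The whole argument exploits the fact that the closed-form optima \eqref{Opt_r} and \eqref{Opt_pu_fu} are \emph{self-throttling}: the sensing energy, the local-computing energy $\lambda_{\rm e}\kappa_i\hat f_i^3T$, and the offloading energy $\lambda_{\rm e}\hat p_i\tau_i$ are all monotonically decreasing in the battery deficit $\Omega_i-B_i(t)$ (equivalently in $-\tilde B_i(t)$), since $\hat f_i\propto(\Omega_i-B_i)^{-1/2}$ and $\hat p_i\propto(\Omega_i-B_i)^{-1}$. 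Hence a depleted battery automatically suppresses consumption, and the task reduces to quantifying how large $\Omega_i$ must be for this suppression to guarantee \eqref{EH_caus}. Throughout I fix $Q_i(t)\le Q^{\rm max}=V+r_{\rm max}$ from Lemma \ref{lem_ub}, which caps the ``demand'' driving these controls.

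First I would dispose of the sensing term. From \eqref{Opt_r}, $r_i^\ast=0$ as soon as $\lambda_{\rm e}(\Omega_i-B_i)e_{\rm unit}^{\rm col}>Vw_i-Q_i$. Using $Q_i\ge0$, $Vw_i\le V$ (normalized weights), and the first term inside the $\max$ in \eqref{Bat_thre}, which gives $\Omega_i-\lambda_{\rm e}e_i^{\rm max}-\lambda_{\rm e}p_0^{\rm max}T\ge V/(\lambda_{\rm e}e_{\rm unit}^{\rm col})$, I obtain that whenever the battery lies in the ``dangerous'' range $B_i(t)\le\lambda_{\rm e}e_i^{\rm max}$ the left-hand side exceeds $V\ge Vw_i-Q_i$, so sensing is off and $e_{i,\rm col}(t)=0$. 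Thus in that range the only drain is $\lambda_{\rm e}(e_{i,\rm C}+e_{i,\rm O})$.

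Next I would bound this residual drain by substituting $Q_i\le V+r_{\rm max}$, $\tau_i\le T$, and $\Omega_i-B_i(t)\ge\Omega_i-\lambda_{\rm e}e_i^{\rm max}$ into the optimal $\hat f_i,\hat p_i$ of Proposition \ref{lem_opt_solution} (the harvesting model, hence $a_{1,i}a_{3,i}\ge a_{2,i}$ from Lemma \ref{lem_convex}, enters through the concavity already used there). Monotonicity in the deficit then yields $\lambda_{\rm e}\kappa_i\hat f_i^3T\le\lambda_{\rm e}\kappa_i\big(\sqrt{(V+r_{\rm max})/(3\lambda_{\rm e}(\Omega_i-\lambda_{\rm e}e_i^{\rm max})\kappa_i\phi_i)}\big)^3T$ and an analogous bound on the offloading energy giving the second term of $H$, so that $\lambda_{\rm e}(e_{i,\rm C}+e_{i,\rm O})\le H(\Omega_i)+B_{\rm min}$. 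Because $H$ is strictly decreasing on $\Omega_i>\lambda_{\rm e}e_i^{\rm max}$ (both summands shrink as $\Omega_i$ grows, which also makes the root $\bar\Omega_i$ unique and bisection-computable) and $\Omega_i\ge\bar\Omega_i$ gives $H(\Omega_i)\le H(\bar\Omega_i)=0$, I conclude $\lambda_{\rm e}e_i(t)\le B_{\rm min}$ over the dangerous range, while in the complementary range $B_i(t)>\lambda_{\rm e}e_i^{\rm max}$ the drain never exceeds $\lambda_{\rm e}e_i^{\rm max}<B_i(t)$. Either way $\lambda_{\rm e}e_i(t)\le B_i(t)$, which is the causality inequality in \eqref{EH_caus}.

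It remains to certify the indicator, i.e.\ that $B_i(t)\ge B_{\rm min}$ for every $t$, which I would attempt by induction on the recursion \eqref{E_evol}. If the clip at $\Omega_i$ is active the claim is immediate; otherwise $B_i(t+1)=B_i(t)-\lambda_{\rm e}e_i(t)+\lambda_{\rm e}e_{i,\rm h}(t)$, and I would combine the per-slot drain bound just derived with $\lambda_{\rm e}e_{i,\rm h}(t)\le\lambda_{\rm e}p_0^{\rm max}T$ (harvested energy cannot exceed transmitted energy) and the buffer term $\lambda_{\rm e}p_0^{\rm max}T$ in \eqref{Bat_thre}. \textbf{I expect this last induction to be the main obstacle.} Unlike the causality inequality, maintaining the strict lower bound $B_i(t)\ge B_{\rm min}$ forces one to control the update across the $\min(\cdot,\Omega_i)$ clip and under worst-case (possibly vanishing) harvesting, where the drain estimate $\le B_{\rm min}$ is not obviously enough to prevent a small dip near the threshold; it is precisely here — rather than in the comparatively mechanical drain estimate — that the buffer $\lambda_{\rm e}p_0^{\rm max}T$ and the exact calibration of $\bar\Omega_i$ via $H(\bar\Omega_i)=0$ must be used, and possibly the coupling with the bounded backlog $Q_i\le V+r_{\rm max}$. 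The strict monotonicity and unique root of $H$, which underwrite the $\bar\Omega_i$ branch of \eqref{Bat_thre}, are a routine calculus sub-step I would dispatch first.
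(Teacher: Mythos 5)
The core of your argument — shutting off sensing via the ON--OFF threshold in \eqref{Opt_r} once $\lambda_{\rm e}(\Omega_i-B_i)e_{\rm unit}^{\rm col}>V$, bounding the residual local-computing and offloading drain by substituting $Q_i\leq V+r_{\rm max}$ (Lemma \ref{lem_ub}), $\tau_i\leq T$ and $\Omega_i-B_i\geq\Omega_i-\lambda_{\rm e}e_i^{\rm max}$ into $\hat f_i,\hat p_i$, and closing via the monotonicity and unique root of $H$ — is exactly the paper's proof of the causality inequality on the range $B_i(t)\in[B_{\rm min},\lambda_{\rm e}e_i^{\rm max}]$, and your treatment of the complementary range $B_i(t)>\lambda_{\rm e}e_i^{\rm max}$ (drain $\leq\lambda_{\rm e}e_i^{\rm max}\leq B_i(t)$) is the paper's Case I.

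The gap is your final step. You try to certify $B_i(t)\geq B_{\rm min}$ for all $t$ by induction, flag it as the main obstacle, and leave it unresolved. That induction cannot succeed: starting from $B_i(t)=B_{\rm min}$ with drain up to $B_{\rm min}$ and (worst case) negligible harvesting, the battery genuinely falls below $B_{\rm min}$, and no choice of $\Omega_i$ prevents this. But it is also not what the proposition asks for. Constraint \eqref{EH_caus} reads $0\leq\lambda_{\rm e}e_i(t)\leq B_i(t)\cdot\mathbbm{1}_{B_i(t)\geq B_{\rm min}}$: when $B_i(t)<B_{\rm min}$ its right-hand side is zero, and the energy-aware battery management policy of Section II-E (a hard system rule, not an output of the per-slot optimization) forces $e_{i,\rm col}(t)=e_{i,\rm O}(t)=e_{i,\rm C}(t)=0$ in that regime, so the constraint holds trivially there — this is precisely the paper's Case II. Your drain bound $\lambda_{\rm e}e_i(t)\leq B_{\rm min}$ is therefore only needed on $[B_{\rm min},\lambda_{\rm e}e_i^{\rm max}]$, where it immediately gives $\lambda_{\rm e}e_i(t)\leq B_{\rm min}\leq B_i(t)$; you applied it on the whole range $[0,\lambda_{\rm e}e_i^{\rm max}]$, which is why you were forced into the untenable lower bound on $B_i(t)$. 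Replacing your induction with the three-way case split $[0,B_{\rm min})$, $[B_{\rm min},\lambda_{\rm e}e_i^{\rm max}]$, $(\lambda_{\rm e}e_i^{\rm max},\Omega_i]$ and invoking the management policy in the first case completes the proof; the $\lambda_{\rm e}p_0^{\rm max}T$ buffer in \eqref{Bat_thre} is only used to keep $B_i(t+1)\leq\Omega_i$ consistent with the threshold arithmetic, not to prevent a dip below $B_{\rm min}$.
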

\begin{proof}
	Please refer to Appendix \ref{app1} for detail.
\end{proof}
Proposition \ref{lem3} shows that the energy causality constraint becomes implicit when implementing the LEESE algorithm to solve (P1) given sufficient battery capacity for each WD. In practice, the threshold is $\Omega_i\geq 15.3$ Joules using the parameters in Section IV, which is satisfied for a common commercial battery with tens of thousand Joules capacity \cite{Muenzel2015}. As shown in Proposition \ref{lem3}, larger $V$ leads to larger battery capacity. Since the data sensing rate increases with $V$ (as shown later in Proposition \ref{lem4}), LEESE provides a tradeoff between the battery capacity and achievable data sensing performance. That is, by increasing the control parameter $V$, we achieves higher data sensing rate at cost of more expensive battery hardware with larger capacity. 

In the following, we show that the online solution produced by LEESE algorithm satisfies all the long-term constraints in \eqref{DataQ_stab} and \eqref{Bud_cons} and achieves a $\left[O(1/V),O(V)\right]$ sensing rate-delay performance tradeoff. To start with, we introduce the following auxiliary problem:
\begin{subequations}\label{Primal_prob_mod}
	\begin{align}
	(\text{P2})~~	\underset{\substack{r_i(t), p_i(t),f_i(t), \\ p_0(t), f_0(t),\forall i, t}} \max~&\lim_{N\to+\infty}\frac{1}{N}\mathsmaller{\sum_{t=0}^{N-1}\sum_{i=1}^{K}}w_ir_i(t)\\
	\st
	~~& \eqref{ledge_off_cons}, \eqref{DataQ_stab}, \eqref{T_allo_cons}, \eqref{Bud_cons}, \eqref{Prob_const2},\\
	~~& \lim_{N\to+\infty}\frac{1}{N}\mathsmaller{\sum_{t=0}^{N-1}}\mathbb{E}\left[e_i(t)-e_{i,\rm h}(t)\right] \leq 0, \forall i. \label{Energy_lonterm}
	\end{align}
\end{subequations}
Compared to (P1), (P2) uses the long-term energy constraint \eqref{Energy_lonterm} in substitution for the energy causality constraint \eqref{EH_caus} in (P1). Denote the optimal value of (P1) and (P2) as $\bar{R}_{\rm P1}^\ast$ and $\bar{R}_{\rm P2}^\ast$, respectively. We prove in the following lemma that $\bar{R}_{\rm P1}^\ast\leq \bar{R}_{\rm P2}^\ast$.
\begin{Lemma}\label{lem1}
	(P2) is a relaxation of (P1), i.e., $\bar{R}_{\rm P1}^\ast\leq \bar{R}_{\rm P2}^\ast$.
\end{Lemma}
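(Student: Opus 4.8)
The plan is to establish the inequality by exhibiting a containment of feasible regions: since (P1) and (P2) maximize the same objective, it suffices to show that every control policy feasible for (P1) is also feasible for (P2), whence the ``max'' over the larger set can only be larger. Concretely, all constraints except the energy restriction are literally identical in the two problems — both enforce \eqref{ledge_off_cons}, \eqref{DataQ_stab}, \eqref{T_allo_cons}, \eqref{Bud_cons}, and \eqref{Prob_const2} — so the only thing to verify is that the per-slot energy causality constraint \eqref{EH_caus} of (P1), taken together with the battery dynamics \eqref{E_evol}, implies the long-term average energy constraint \eqref{Energy_lonterm} of (P2).

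The core step is a telescoping argument on the battery queue. First I would drop the cap in \eqref{E_evol}, using $\min(x,\Omega_i)\leq x$ to obtain the one-sided inequality $B_i(t+1)\leq B_i(t)-\lambda_{\rm e}e_i(t)+\lambda_{\rm e}e_{i,\rm h}(t)$, which rearranges to $\lambda_{\rm e}\bigl(e_i(t)-e_{i,\rm h}(t)\bigr)\leq B_i(t)-B_i(t+1)$. Summing over $t=0,\dots,N-1$ telescopes the right-hand side to $B_i(0)-B_i(N)$, giving $\tfrac{1}{N}\sum_{t=0}^{N-1}\bigl(e_i(t)-e_{i,\rm h}(t)\bigr)\leq \tfrac{B_i(0)-B_i(N)}{\lambda_{\rm e}N}$. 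It then remains to bound the boundary terms: the ON/OFF battery-management rule encoded in \eqref{EH_caus} guarantees $B_i(t)-\lambda_{\rm e}e_i(t)\geq 0$, which together with $e_{i,\rm h}(t)\geq 0$ and $\Omega_i>0$ forces $0\leq B_i(t)\leq \Omega_i$ for all $t$ whenever $B_i(0)\in[0,\Omega_i]$. Hence $B_i(N)\geq 0$ and $B_i(0)\leq \Omega_i$, so that $\tfrac{1}{N}\sum_{t=0}^{N-1}\bigl(e_i(t)-e_{i,\rm h}(t)\bigr)\leq \tfrac{\Omega_i}{\lambda_{\rm e}N}$.

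Taking expectations and letting $N\to\infty$ makes the $O(1/N)$ right-hand side vanish, yielding $\lim_{N\to\infty}\tfrac{1}{N}\sum_{t=0}^{N-1}\mathbb{E}[e_i(t)-e_{i,\rm h}(t)]\leq 0$ for every $i$, which is exactly \eqref{Energy_lonterm}. This establishes the feasible-region containment and therefore $\bar{R}_{\rm P1}^\ast\leq\bar{R}_{\rm P2}^\ast$. I expect the only real subtlety to lie in the handling of the $\min$ operator and the endpoint terms: one must observe that dropping the cap weakens the dynamics only in the ``harmless'' direction (a full battery merely wastes harvested energy, so the per-slot constraint remains strictly stronger than its averaged counterpart), and that it is the deterministic uniform boundedness of the battery, not any probabilistic estimate, that annihilates the boundary contributions in the Ces\`aro limit.
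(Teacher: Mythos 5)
Your proposal is correct and follows essentially the same route as the paper's own proof: dropping the $\min$ cap in the battery update to get a one-sided recursion, telescoping over $t=0,\dots,N-1$, and using the uniform bounds $0\leq B_i(t)\leq\Omega_i$ to kill the boundary terms in the Ces\`aro limit, which yields \eqref{Energy_lonterm} and hence the feasible-set containment. Your treatment of the endpoint terms is in fact slightly more explicit than the paper's, but the argument is identical in substance.
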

\begin{proof}
	For any feasible solution of (P1), the battery dynamics of WD$_i$, $\forall i$, must satisfy
	\begin{equation}
	B_i(t+1) \leq B_i(t) - \lambda_{\rm e}e_i(t) + \lambda_{\rm e}e_{i,\rm h}(t), t = 0, \cdots, N-1.
	\end{equation}
	By summing up both sides over $t = 0, \cdots, N-1$, taking the expectation, diving both sides by $\lambda_{\rm e}N$ and letting $N$ go to infinity, we have
	\begin{equation}
	\lim_{N\!\to\!+\!\infty}\!\frac{1}{\lambda_{\rm e}N}\mathbb{E}\left[B_i(N)\right]\!\leq\! \lim_{N\!\to\!+\!\infty}\!\!\frac{1}{\lambda_{\rm e}N}\mathbb{E}\left[B_i(0)\right]\! -\! \lim_{N\!\to\!+\!\infty}\!\frac{1}{N}\!\sum_{t=0}^{N-1}\!\mathbb{E}\left[e_i(t)-e_{i,\rm h}(t)\right].
	\end{equation}
	Since $B_i(t)\leq\Omega_i<+\infty$, we have $\lim_{N\to+\infty}\frac{1}{\lambda_{\rm e}N}\mathbb{E}\left[B_i(N)\right]=\lim_{\lambda_{\rm e}N\to+\infty}\frac{1}{N}\mathbb{E}\left[B_i(0)\right]=0$, i.e., the feasible solution of (P1) satisfies the constraint \eqref{Energy_lonterm} in (P2). Hence, (P2) is a relaxation of (P1), and thus $\bar{R}_{\rm P1}^\ast\leq \bar{R}_{\rm P2}^\ast$.
\end{proof}
The considered fading channels are i.i.d. across different time slots. We define a class of stationary and randomized policy called $\omega$-only policy \cite{Neely2010}, which makes control decisions only based on the current channel state (and independent to the queue backlog state $\Theta$). We assume (P2) is feasible and satisfies the following Slater condition.  
\begin{Assumption}
	There exists a constant $\epsilon\!>\!0$ and $\varphi(\epsilon)\!\!\leq \!\!\bar{R}_{\rm P2}^\ast$ and an $\omega$-only policy $\Gamma$ satisfying that
	\begin{subequations}\label{SLT}
		\setlength{\abovedisplayskip}{-5pt}
		\setlength{\belowdisplayskip}{-5pt}
		\begin{align}
		&\mathbb{E}\left[\mathsmaller{\sum_{i=1}^{K}}r_i^{\Gamma}(t)\right] = \varphi(\epsilon),~
		\mathbb{E}\left[e_0^{\Gamma}(t)\right] \leq  e_0^{\rm th}-\epsilon,~ \mathbb{E}\left[\mathsmaller{\sum_{i=1}^{K}}D_{i,\rm{O}}^{\Gamma}(t)\right]\leq\mathbb{E}\left[D_{0,\rm{C}}^{\Gamma}(t)\right]-\epsilon,
		\\
		&\mathbb{E}\left[e_i^{\Gamma}(t)-e_{i,\rm h}^{\Gamma}(t)\right] \leq -\epsilon,~
		\mathbb{E}\left[r_i^{\Gamma}(t)\right] \leq \mathbb{E}\left[D_{i,\rm{O}}^{\Gamma}(t)+D_{i,\rm{C}}^{\Gamma}(t)\right] -\epsilon, \forall i.
		\end{align}
	\end{subequations}
\end{Assumption}

In the following Proposition \ref{lem4}, we prove that when using convex optimization based method or BCD-based method to solve subproblem \eqref{Prob_Per_slot_sim}, LEESE keeps stability of system data queues and achieves an asymptotically optimal solution to the primary problem (P1).

\begin{prop}\label{lem4}
	Suppose that for any $s(t)$ in time slot $t$, LEESE produces a solution to subproblem \eqref{Prob_Per_slot_sim} with limited optimality gap $\upsilon\geq 0$, and \eqref{Bat_thre} is satisfied. Then, when implementing the proposed LEESE algorithm to solve (P1), we have that:
	\begin{itemize}
		\item [a)] The achievable long-term average data sensing rate, denoted as $\bar{R}_{\Psi}$, has a lower bound
		\begin{equation}\label{Asy_opt_R}
		\begin{split}
		\bar{R}_{\Psi}\geq \bar{R}_{\rm P1}^\ast - \frac{C+ \upsilon}{V},
		\end{split}
		\end{equation}
	where $C$ is a constant defined in \eqref{Dvalue}.
		\item[b)] The virtual power deficit queue $Z(t)$ is strongly stable, and the long-term average power constraint \eqref{Bud_cons} is satisfied.
		\item[c)] The data queue stability \eqref{DataQ_stab} is guaranteed. The average data queue length at the HAP and WD$_i$, $\forall i$, satisfy that
		\begin{subequations}\label{lem5_eq2}
			\begin{align}
			&\lim_{N\!\to\!+\!\infty}\frac{1}{N}\mathsmaller{\sum_{t=0}^{N}}\mathbb{E}\left[Q_0(t)\right] \leq \frac{C+ \upsilon+V\left(\bar{R}_{\rm P1}^\ast-\varphi(\epsilon)\right)}{\epsilon}<+\infty,\\
			&\lim_{N\!\to\!+\!\infty}\frac{1}{N}\mathsmaller{\sum_{t=0}^{N}}\mathbb{E}\left[Q_i(t)\right] \leq \frac{C+ \upsilon+V\left(\bar{R}_{\rm P1}^\ast-\varphi(\epsilon)\right)}{\epsilon}<+\infty.
			\end{align}
		\end{subequations}
	\end{itemize}
\end{prop}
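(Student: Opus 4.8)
The plan is to run the standard drift-plus-penalty argument, anchored on the upper bound \eqref{LyaDriPlusPen} of Lemma \ref{lem6} together with the fact that LEESE maximizes $L(t)$ up to the optimality gap $\upsilon$ in every slot. The starting observation is that for \emph{any} feasible per-slot action --- in particular any $\omega$-only policy $\Gamma$ evaluated on the current channel realization --- LEESE attains an objective no smaller than that of $\Gamma$ minus $\upsilon$; taking the conditional expectation given $\Theta_t$ and substituting into \eqref{LyaDriPlusPen} gives $\Delta_V^t \le C + \upsilon - \mathbb{E}\!\left[L^{\Gamma}(t)\mid\Theta_t\right]$. Because an $\omega$-only policy acts independently of the queue backlog, each queue variable factors out of the conditional expectation, so $\mathbb{E}[L^{\Gamma}(t)\mid\Theta_t]$ reduces to a linear combination of $Q_i(t)$, $Q_0(t)$, $Z_0(t)$ and $\tilde{B}_i(t)$ weighted by the expected per-slot drifts of the corresponding quantities. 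This factoring is the device that converts the feasibility/Slater inequalities of the comparison policy into sign-definite bounds on the drift.

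For part (a) I would instantiate $\Gamma$ as a (limit of) $\omega$-only policies achieving $\mathbb{E}[\sum_i w_i r_i^{\Gamma}(t)]=\bar{R}_{\rm P2}^\ast$ while meeting every long-term constraint of (P2) in expectation. Under such a policy the data-queue, edge-queue and power-deficit drift terms are each non-negative ($Q_i,Q_0,Z_0\ge 0$ multiply non-negative expected drifts), and the perturbation term is non-negative as well, since $\tilde B_i(t)=B_i(t)-\Omega_i\le 0$ by \eqref{Queue_B} multiplies the non-positive expected net energy $\mathbb{E}[e_i-e_{i,\rm h}]\le 0$. Hence $\mathbb{E}[L^{\Gamma}(t)\mid\Theta_t]\ge V\bar{R}_{\rm P2}^\ast$ and $\Delta_V^t \le C+\upsilon - V\bar{R}_{\rm P2}^\ast$. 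Substituting the definition of $\Delta_V^t$, taking total expectation, telescoping $\Phi$ over $t=0,\dots,N-1$, discarding $\mathbb{E}[\Phi(N)]\ge 0$, dividing by $VN$ and letting $N\to\infty$ yields $\bar{R}_\Psi \ge \bar{R}_{\rm P2}^\ast-(C+\upsilon)/V$, which gives \eqref{Asy_opt_R} after invoking Lemma \ref{lem1} ($\bar{R}_{\rm P1}^\ast\le\bar{R}_{\rm P2}^\ast$).

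For parts (b) and (c) I would instead feed the Slater-feasible $\omega$-only policy of the Assumption into the same bound. The conditions \eqref{SLT} now supply a margin $\epsilon$ on each bracketed expected drift, so $\mathbb{E}[L^\Gamma(t)\mid\Theta_t]\ge V\varphi(\epsilon)+\epsilon\big(\sum_i Q_i(t)+Q_0(t)\big)+\lambda_{\rm c}\epsilon\,Z_0(t)$, once more using $\tilde B_i\le0$ to drop the non-negative perturbation term. Plugging this into \eqref{LyaDriPlusPen}, telescoping, using $\Phi(N)\ge0$, and upper-bounding the achieved penalty by $V\bar{R}_{\rm P1}^\ast$ (LEESE is feasible for (P1) by Proposition \ref{lem3} and the stability we are proving, so its rate cannot exceed $\bar{R}_{\rm P1}^\ast$) gives, after dividing by $\epsilon N$ and $N\to\infty$, the aggregate bound $\lim_N\frac1N\sum_t\mathbb{E}[\sum_iQ_i(t)+Q_0(t)]\le (C+\upsilon+V(\bar{R}_{\rm P1}^\ast-\varphi(\epsilon)))/\epsilon$; non-negativity of the queues then bounds each $Q_i$ and $Q_0$ individually, establishing \eqref{lem5_eq2} and strong stability, hence \eqref{DataQ_stab}. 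The identical computation on the $Z_0$ term gives $\lim_N \frac1N\sum_t\mathbb{E}[Z_0(t)]<\infty$; strong stability of $Z_0$ combined with its dynamics \eqref{Queue_Z} (summing $Z_0(t+1)\ge Z_0(t)+\lambda_{\rm c}(e_0(t)-e_0^{\rm th})$ and using $\mathbb{E}[Z_0(N)]/N\to0$) delivers the average power constraint \eqref{Bud_cons}, proving (b).

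The step I expect to be the main obstacle is the careful treatment of the perturbed battery term. Unlike the textbook Lyapunov setting, the multiplier $\tilde B_i(t)$ is sign-definite (non-positive) only because of the battery-capacity condition \eqref{Bat_thre} of Proposition \ref{lem3}; I must confirm that $\tilde B_i(t)\le 0$ persists for all $t$ so that the perturbation contribution can be uniformly discarded as non-negative in both the optimality and the stability chains, and that the comparison policy's non-positive net-energy drift makes this term cooperate with, rather than oppose, the bound. A secondary technical point is justifying the existence of the rate-optimal $\omega$-only policy attaining $\bar{R}_{\rm P2}^\ast$ under i.i.d. fading, which I would cite from the standard theory of stochastic network optimization rather than reprove.
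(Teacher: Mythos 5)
Your proposal is correct and follows essentially the same route as the paper: bound the drift-plus-penalty via Lemma~\ref{lem6} plus the $\upsilon$-gap, compare against a near-optimal $\omega$-only policy (the paper's Lemma in Appendix~\ref{app2}, with explicit $\delta\to 0$ bookkeeping) for part a) and against the Slater policy for parts b) and c), exploit $\tilde{B}_i(t)\le 0$ (immediate from the $\min(\cdot,\Omega_i)$ in \eqref{E_evol}) to drop the perturbation term, telescope, and finish with Lemma~\ref{lem1} and mean-rate stability of $Z_0$. The two concerns you flag are handled in the paper exactly as you anticipate, so no gap remains.
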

\begin{proof}
	Please refer to Appendix \ref{app2} for detail.
\end{proof}

As shown in Proposition \ref{lem3} and \ref{lem4}, LEESE can produce a feasible solution to problem (P1) and achieves an $\left[O(1/V),O(V)\right]$ sensing rate-delay tradeoff given a limited optimality gap $\upsilon$ and a large enough battery capacity $\Omega_i$. Specifically, as $V$ increases, LEESE can improve the data sensing rate at rate of $O(1/V)$, but at the price of longer data queue length (processing delay) increasing at rate of $O(V)$. In Section \ref{sec6}, we will investigate the impact of $V$ on the long-term sensing performance and show that LEESE achieves a negligibly small $\upsilon$ for the subproblem \eqref{Prob_Per_slot_sim}.

\section{Numerical Results}\label{sec6}
{\color{black}In this section, we evaluate the performance of the proposed LEESE algorithm via simulations. All simulations are conducted by Python 3.8 and on a computer with 16 GB RAM and Inter Core i7-6700 3.4 GHz CPU.} In all simulations, we adopt the parameters of the Powercast TX91501-3W transmitter with maximum transmit power $p_0^{\rm max}=3$ Watt and carrier frequency $f_{\rm c}^{\rm p}=915$ MHz as the wireless power transmitter at the HAP. Unless otherwise stated, we consider $K=8$ WDs with identical maximum transmit power $p_i^{\rm max}=5$ dBm, $\forall i$, and communication carrier frequency $f_{\rm c}^{\rm I}=2.4$ GHz. We set the distance between the HAP and WD$_i$ as $d_i = 2 + \frac{8}{K-1}(i-1)$ in meters, for $i=1,\cdots, K$. We set energy harvesting parameters $a_{1,i}=2.463$, $a_{2,i}=1.635$ and $a_{3,i}=0.826$, $\forall i$ \cite{Chen2017}. For both WPT and data communication, we model the channels as Rayleigh fading channels, where the corresponding channel gain is $h_i^x(t)=\varsigma^x(t)G_{\rm A}\left(\frac{3\times10^8}{4\pi f_{\rm c}^xd_i}\right)^{\sigma}$, where $x\in\{\rm p, I\}$ indicates the channel for WPT and data communication, respectively. $\varsigma^x(t)$ is an independent exponential random variable of unit mean. $G_{\rm A}=4.11$ captures the total antenna gain. $\sigma\geq 2$ denotes the path-loss exponent. Unless otherwise stated, we set $\sigma=2.4$. Likewise, we set equal $w_i=1$, $k_i=10^{-26}$, $\phi_i=1000$, and $f_i^{\rm max}=16$ MHz, for all $i=1,\cdots,K$. In this case, all the WDs have the same battery capacity (denoted as $\Omega$) according to Proposition \ref{lem3}. We initialize all the data queue length to 0, i.e., $Q_0(0)=Q_i(0)=0, \forall i$, and initialize full battery level at all WDs, i.e., $B_i=\Omega, \forall i$. Besides, we set the time-average energy budget at the HAP $e_0^{\rm th}=8$ Joules, and the maximum sensing data size $r_{\rm max}=512$ Kbits. The simulation length is set to $N=5\times10^4$ time slots. The other parameters used in simulation are listed in Table \ref{tab1}. 

\vspace{-10pt}
\begin{table}
	\centering
	\caption{Simulation Parameters}
	\label{tab1}
	\begin{tabular}{|c|c|c|}
		\hline
		$W = 30$ KHz  & $e_{\rm unit}^{\rm col}=10^{-9}$ Joules/bit & $T = 5$ second \\
		\hline
		$k_0=10^{-26}$  & $f_0^{\rm max} = 2$ GHz  &$\phi_0 = 1000$ cycles/bit  \\
		\hline
		$N_0 = 10^{-10}$ Watt  & $B_{\rm min} = 10^{-3}$ Joules & $V=32\times10^5$ \\
		\hline	
	\end{tabular}
\vspace{-0pt}
\end{table}

\subsection{Asymptotic Optimality of LEESE}
We first investigate the performance of the proposed BCD-based method in solving the sub-problem \eqref{Prob_Per_slot_sim}. For comparison, we consider the CVX-based method as the optimal benchmark. In particular, we apply the Python-embedded CVXPY solver \cite{Diamond2016} and the BCD-based method to solve \eqref{Prob_Per_slot_sim}, respectively, and record the obtained results over $500$ time slots in Fig. \ref{L0_0}. The result shows that the BCD-based method achieves almost identical performance as the optimal solution when solving \eqref{Prob_Per_slot_sim}, and thus achieves a negligibly small optimality gap $\upsilon$. This result confirms the asymptotic optimality of LEESE in Proposition \ref{lem4}.\
\begin{figure}[tbp]
	\centering
	\includegraphics[scale=0.58]{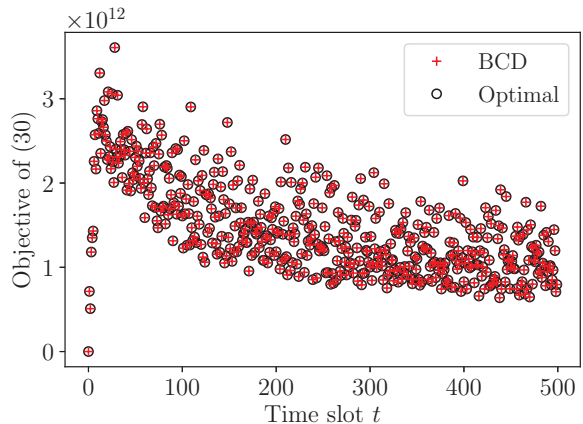}
	\captionsetup{font=footnotesize}
	\caption{Performance of the BCD-based method in solving per-slot subproblem \eqref{Prob_Per_slot_sim}.}
	\label{L0_0}
\end{figure}

Then, we investigate in Fig. \ref{L1_0} the impact of parameter $V$ on the performance of LEESE. For convenience, we denote $\mathbb{E}[\bar{Q}_i]$ as the mean value of the long-term average data queue length $\bar{Q}_i$ over $K$ WDs. It displays that the data sensing rate $\bar{R}$ and data queue length at the HAP $\bar{Q}_{0}$ grow with $V$ and gradually saturate when $V$ is large (i.e., $V\geq 32\times10^5$). However, $\mathbb{E}[\bar{Q}_i]$ and $\Omega$ grow rapidly with $V$ especially when $V> 32\times 10^5$. These results agree with the theoretical analysis in section \ref{secV}, where a larger $V$ produces a higher $\bar{R}$ and longer data queues. In the following simulations, we set $V=32\times10^5$, whereby LEESE achieves near-optimal data sensing rate with small data queue length and requires small battery capacity.

\begin{figure}
	\begin{minipage}{0.48\linewidth}
		\centerline{\includegraphics[scale=0.32]{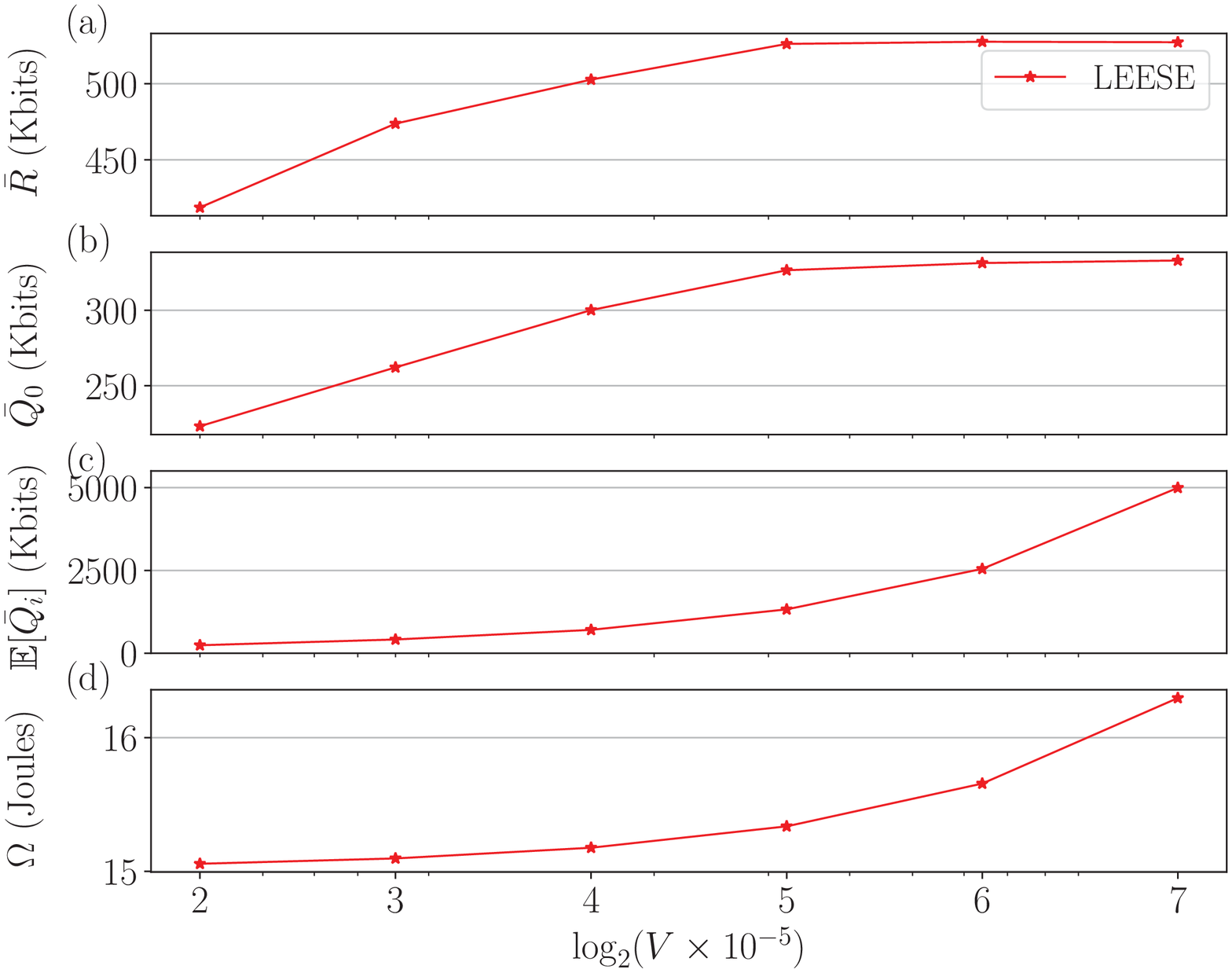}}
		\captionsetup{font=footnotesize}
		\caption{System performance of LEESE under different $V$.}
		\label{L1_0}
	\end{minipage}
	\hfill
	\begin{minipage}{.48\linewidth}
		\centerline{\includegraphics[scale=0.32]{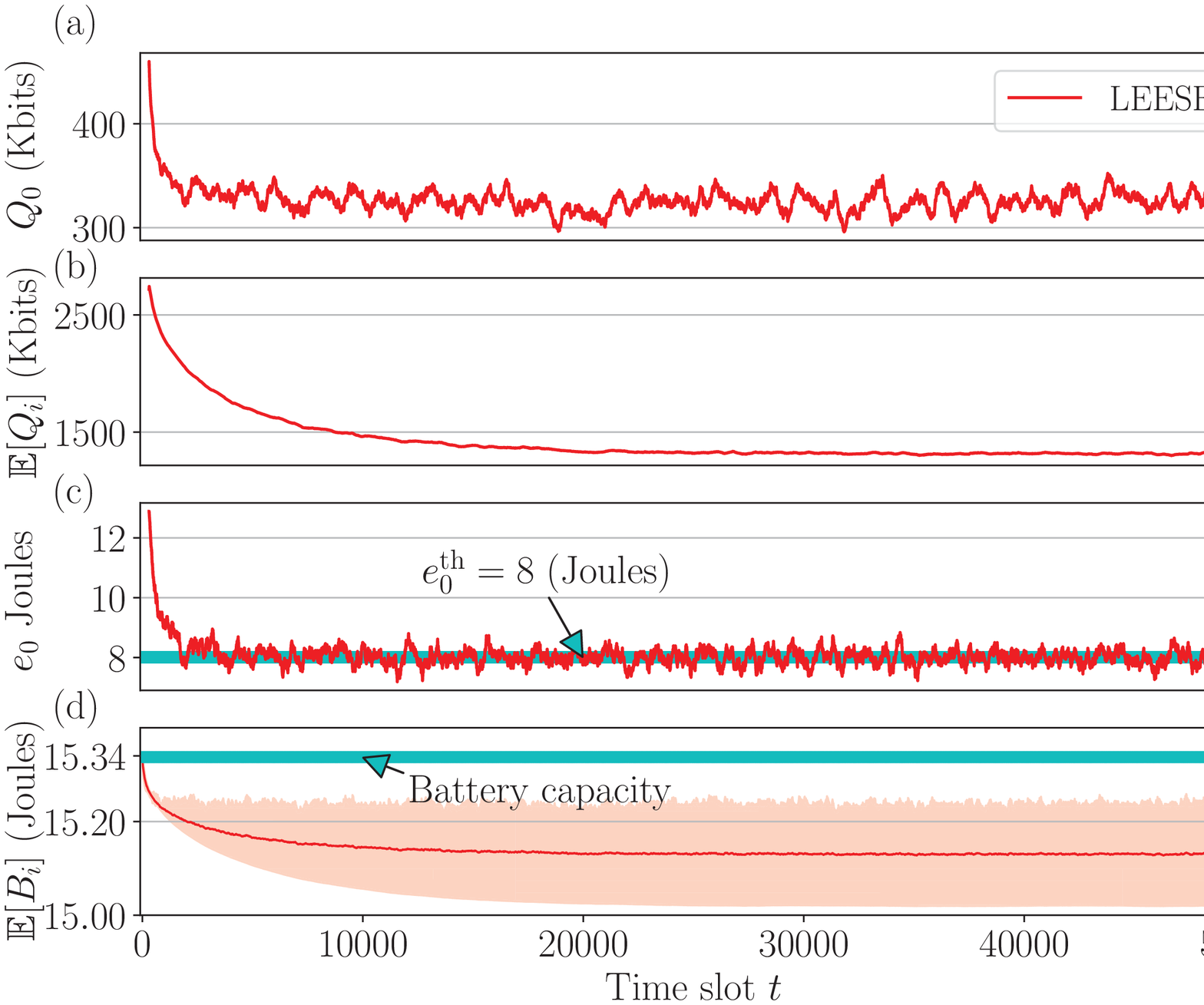}}
		\captionsetup{font=footnotesize}
		\caption{Feasibility of PLySE and Benchmark methods.}
		\label{L2_0}
	\end{minipage}
\end{figure}


\subsection{Feasibility of LEESE}
We investigates the feasibility of the proposed LEESE algorithm to the problem (P1). For convenience, we denote $\mathbb{E}[Q_i]$ and $\mathbb{E}[B_i]$ as the per-slot average data queue length and per-slot average battery level over $K$ WDs. In Fig. \ref{L2_0}(a)-(c), we plot the moving average of the data queue length $Q_0$ and $\mathbb{E}[Q_i]$ as well as the edge energy consumption $e_0$ over the last 400 time slots. The results show that LEESE stabilizes the data queues $Q_0$ and $Q_i$, and satisfies the long-term average energy consumption constraint at the HAP. We also plot in Fig. \ref{L2_0}(d) the average battery level $\mathbb{E}[B_i]$ as time proceeds (the red line), and the maximum and minimum $B_i$ in each time slot (the light red shadow). We observe that the battery levels of all WDs are all in the range of $[0,\Omega]$ over time, which means that LEESE satisfies the energy causality constraints even when it is removed when solving for the online control solutions. All these results verify the feasibility of the proposed LEESE algorithm.


\subsection{Performance Comparison Under Various System Parameters}
To verify the effectiveness of the proposed LEESE algorithm, we consider the following four representative benchmark methods for comparison:

1) CVX-based LEESE (CVX-LEESE): It minimizes the upper bound of drift-plus-penalty function in \eqref{LyaDriPlusPen} and obtains the optimal solution to the per-slot problem \eqref{Prob_Per_slot} using convex optimization algorithms, e.g., the Python-embedded CVXPY solver.

2) Local computing only (LCO): All WDs compute all tasks locally rather than task offloading. Specifically, LCO computes $p_0^\ast$, $f_0^\ast$ and $r_i^\ast$ similar to LEESE (i.e., using \eqref{Opt_p0}, \eqref{Opt_fs} and \eqref{Opt_r}, respectively), while obtains $f_{i}^{\ast}$ by solving \eqref{Prob_Per_slot_sim} with $p_{i}^{\ast}=\tau_i^\ast=0$.

3) Equal offloading time (EqOT): EqOT computes $p_0^\ast$, $f_0^\ast$ and $r_i^\ast$ similar to LEESE, while obtains $f_{i}^{\ast}$ and $p_{i}^{\ast}$ by solving \eqref{Prob_Per_slot_sim} with $\tau_i^\ast=\frac{T}{K}, \forall i$.
 
4) Myopic edge processing (MyopicEdge): The MyopicEdge method replaces the long-term average power constraint \eqref{Bud_cons} by $N$ per-slot energy cost constraints, i.e., $k_0f_0(t)^3T+p_0(t)T\leq te_0^{\rm th}-\sum_{n=1}^{t-1}e_0(n)$, for $t=1,\cdots, N$. In time slot $t$, it computes $r_i^\ast$, $f_{i}^{\ast}$, $p_{i}^{\ast}$ and $\tau_i^\ast$ similar to LEESE, while obtains $p_0^\ast$ and $f_0^\ast$ by solving the following problem:
\begin{subequations}\label{Prob_Myopic}
	\begin{align}
	\underset{\substack{p_0, f_0}}\max~&Q_0{f_0T}/{\Phi_0}+\mathsmaller{\sum_{i=1}^{K}}\lambda_{\rm e}\left(\Omega_i-B_i\right)e_{i,\rm h}\\
	\st
	&~~\eqref{Subprob_p0_constr}, \eqref{Subprob_fs_const},\\ &~~k_0f_0^3T+p_0T\leq te_0^{\rm th}-\mathsmaller{\sum_{n=1}^{t-1}}e_0(n),
	\end{align}
\end{subequations}
where we drop the time index ``t'' for concision. By doing so, the MyopicEdge method tends to exhaust all the battery energy available for optimal per-slot performance. \eqref{Prob_Myopic} is a convex optimization problem and can be solved by off-the-shelf convex optimization tools like interior point method. It is worth noting that EqOT and MyopicEdge optimize the decisions on either wireless power transfer or data processing, while the proposed LEESE algorithm achieves a joint optimization on both of them. To distinguish from the CVX-LEESE, we denote the proposed LEESE algorithm based on BCD as BCD-LEESE in the following results.

\begin{figure}
	\begin{minipage}{0.48\linewidth}
		\centerline{\includegraphics[scale=0.54]{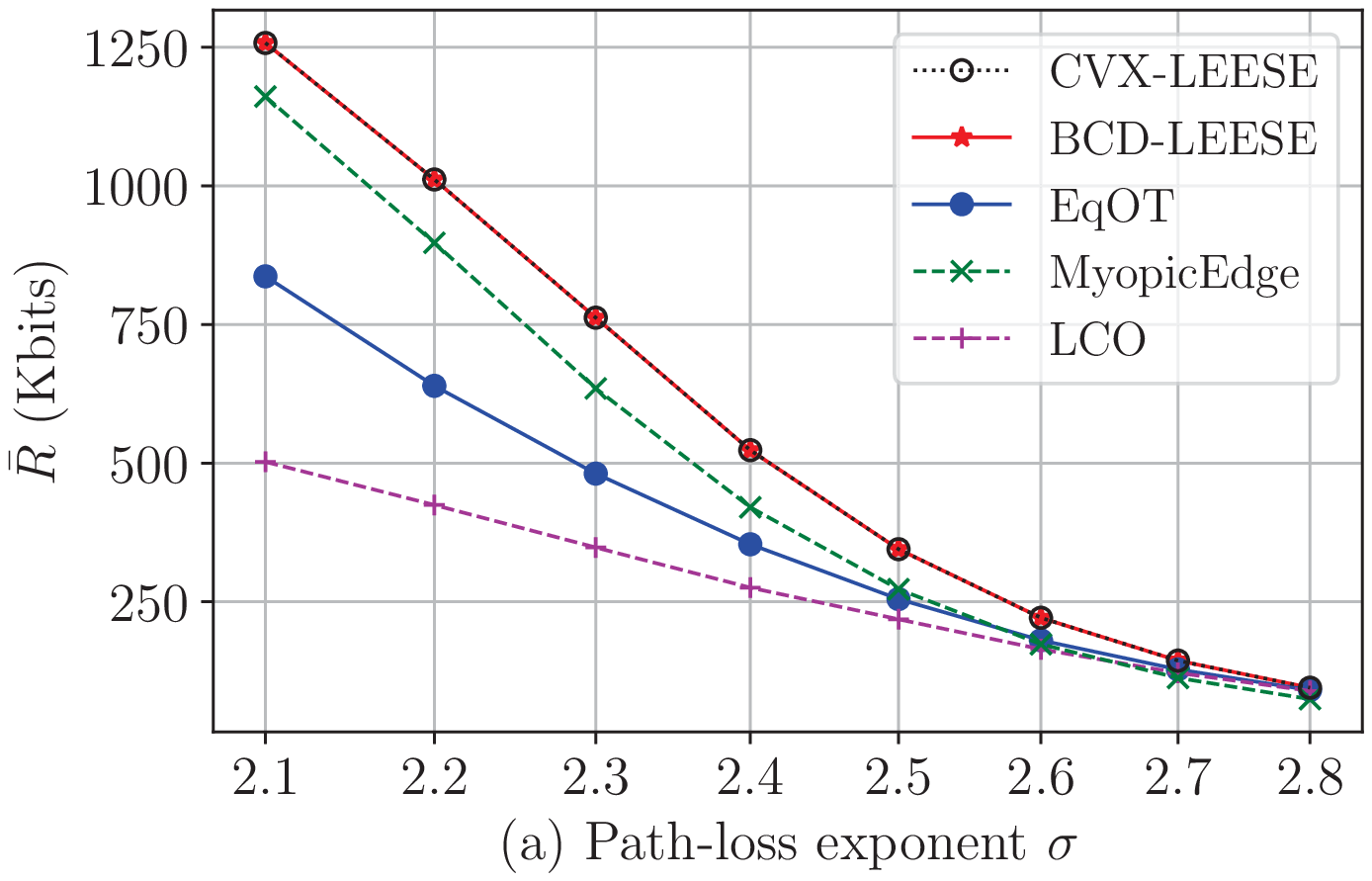}}
	\end{minipage}
	\hfill
	\begin{minipage}{.48\linewidth}
		\centerline{\includegraphics[scale=0.54]{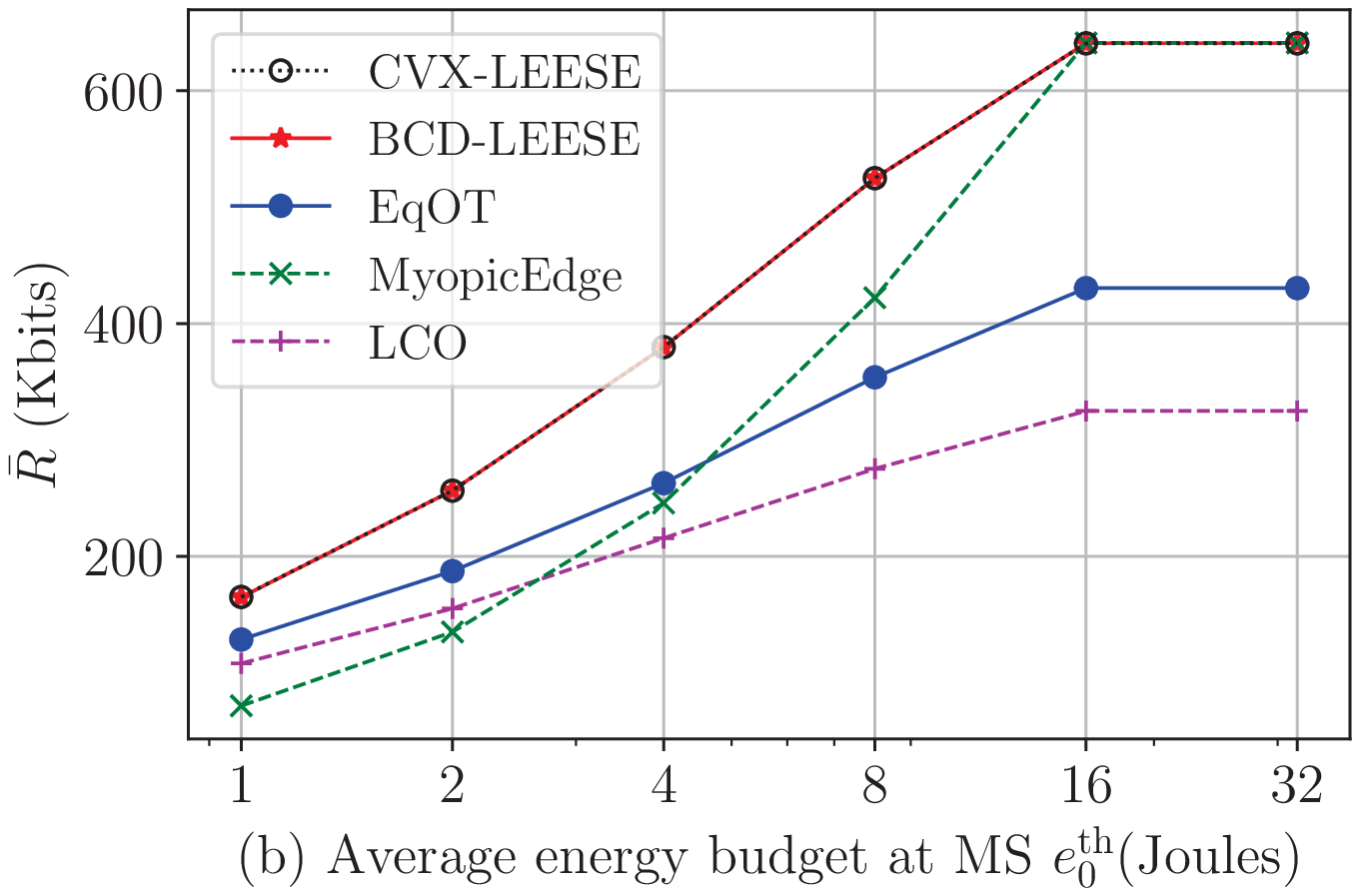}}
	\end{minipage}
	\vfill
	\begin{minipage}{0.48\linewidth}
		\centerline{\includegraphics[scale=0.54]{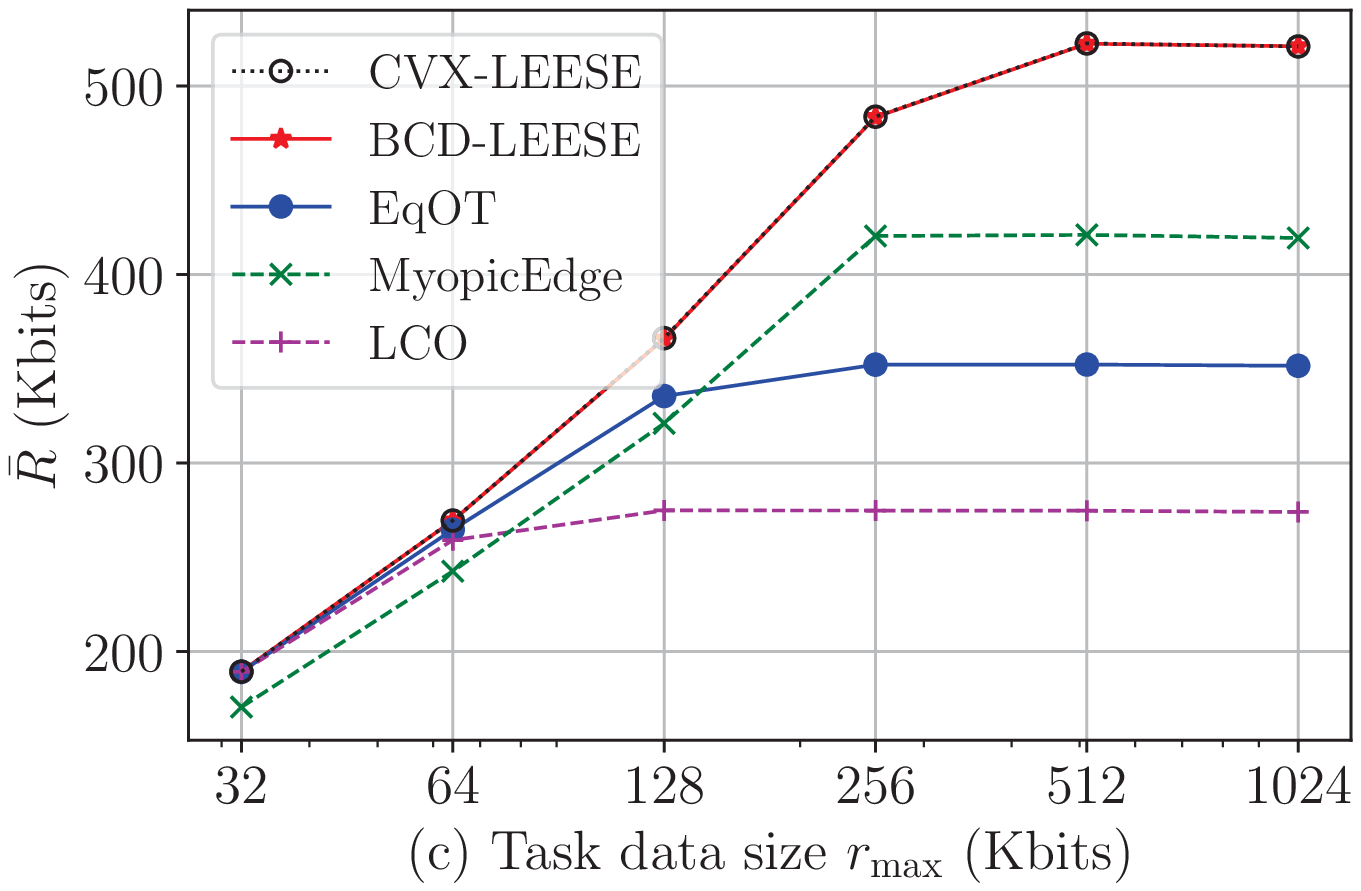}}
	\end{minipage}
	\hfill
	\begin{minipage}{0.48\linewidth}
		\centerline{\includegraphics[scale=0.54]{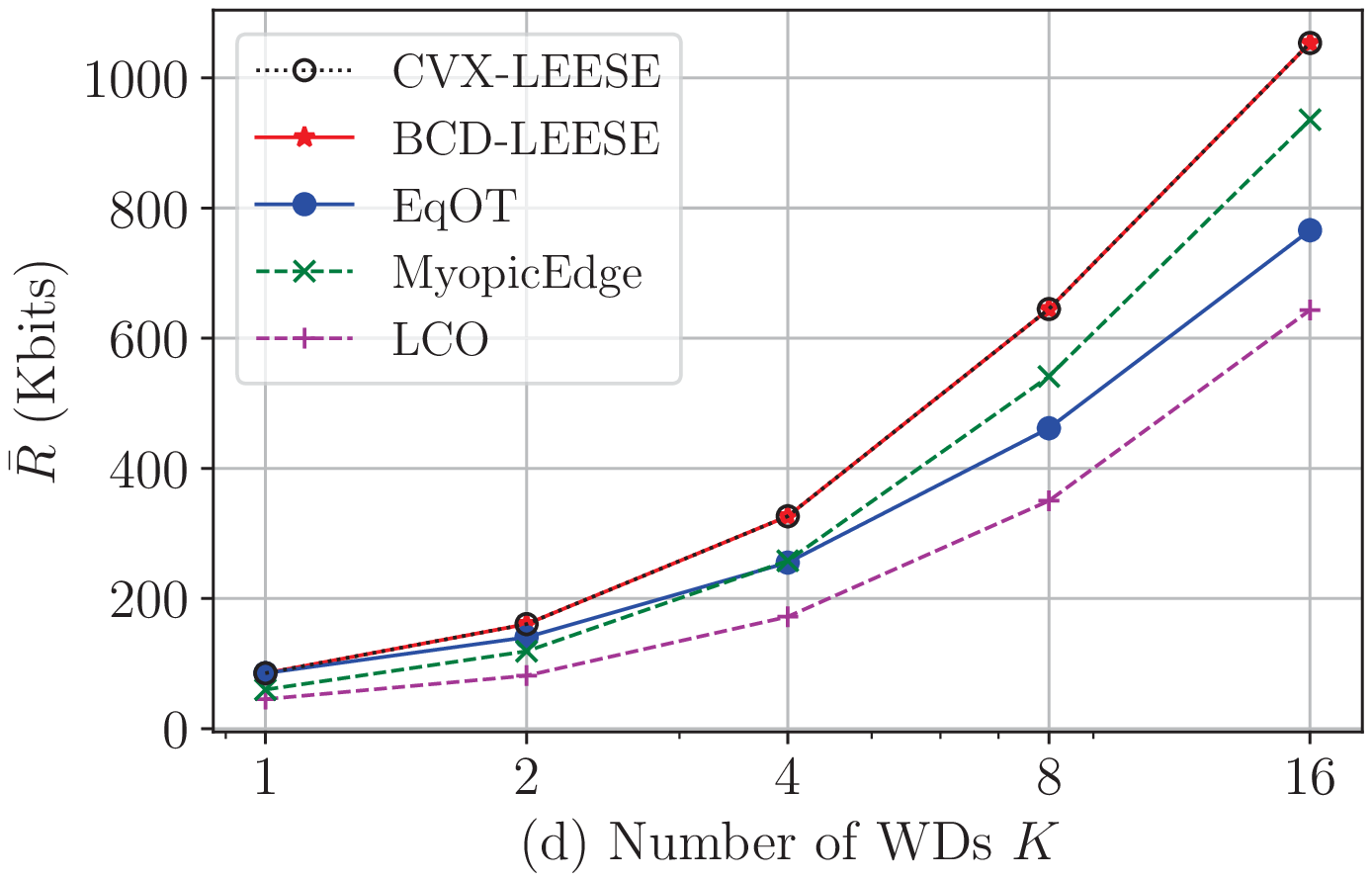}}
	\end{minipage}
	\captionsetup{font=footnotesize}
	\caption{{Long-term average data sensing rate $\bar{R}$ versus: (a) path-loss exponent between the HAP and WDs $\sigma$; (b) energy budget $e_0^{\rm th}$ at the HAP; (c) maximum sensing data size $r_{\rm max}$; and (d) number of WDs $K$.}}
	\label{L3}
\end{figure}

We first compare in Fig. \ref{L3}(a) the performance of BCD-LEESE and benchmark methods under different path-loss exponent $\sigma$. As $\sigma$ increases, all the methods observes a rapid performance degradation. This is because that a larger $\sigma$ leads to a more severe signal attenuation during computation offloading and less received energy at WDs. Since a WD prefers to local computing when experiencing deep fading channel, all the methods eventually achieves a similar $\bar{R}$ as $\sigma$ grows. We also observe that CVX- and BCD-based LEESE achieve the same data sensing rate for all considered $\sigma$'s, and show a significant superiority over the other three benchmark methods. In particular, BCD-LEESE offers 76.9\%, 36.2\%, 21.9\% higher data sensing performance in average than LCO, EqOT, and MyopicEdge, respectively.

In Fig. \ref{L3}(b), we plot the long-term average data sensing rate when the edge energy budget $e_0^{\rm th}$ varies. As shown in the figure, BCD-LEESE achieves similar performance as CVX-LEESE, and greatly outperform the other three benchmarks for all considered $e_0^{\rm th}$'s. The average data sensing rate of all the methods increases with $e_0^{\rm th}$ and finally saturate due to the limited system resources (e.g., transmit power and local CPU frequency). As $e_0^{\rm th}$ increases, BCD-based LEESE enlarges its performance gaps to LCO and EqOT. On the contrary, the data sensing performance of MyopicEdge is comparable to BCD-based LEESE when $e_0^{\rm th}\geq 16$ Joules, but dramatically deteriorates when $e_0^{\rm th}$ decreases. This is because that a sufficiently large $e_0^{\rm th}$ covers the energy waste during WPT under MyopicEdge. 


We also show in Fig. \ref{L3}(c) the impact of task data size $r_{\rm max}$ on the average data sensing performance. As $r_{\rm max}$ grows, all the methods observe increasing data sensing rate and finally achieve a saturated $\bar{R}$ due to limited energy budget. Compared to the other four methods, MyopicEdge is more sensitive to sensing data size. Specifically, MyopicEdge yields the worst data sensing performance when $r_{\rm max}\leq 64$ Kbits, while outperforms LCO and EqOT when $r_{\rm max}\geq 256$ Kbits. This is because that MyopicEdge performs a channel-oblivious WPT process. For a small $r_{\rm max}$, it wastes more energy during WPT especially with the non-linear energy harvesting efficiency. For all considered $r_{\rm max}$'s, BCD- and CVX-based LEESE methods are on top of each their and offer substantial performance gain over the other three benchmark methods. 
%

In Fig. \ref{L3}(d), we further evaluate the data sensing performance under different number of WDs $K$. For each $K$, we generate $d_i$ ($i=1\cdots,K$) independently from a truncated Gaussian distribution, i.e., $d_i=\min\left(\max\left(X,2\right),10\right)$, where $X\sim\mathcal{N}(\bar{d},\delta_d)$ is a Gaussian random variable with average WD-HAP distance $\bar{d}=5$ m and standard deviation of placement spread $\delta_d=3$ m. Each point in Fig. \ref{L3}(d) is an average performance over 20 independent placements of the WDs. Intuitively, a larger number of WDs produces a higher energy efficiency. For all the methods, we see that $\bar{R}$ grows with $K$. In a special case of $K=1$, the two LEESE algorithms and EqOT yield the same time allocation decision (i.e., $\tau_1=T$), and thus achieving the equal data sensing rate. Besides, as $K$ increases, we observe a crossover between EqOT and MyopicEdge. This result implies that an energy efficient WPT is more favorable than a deliberated time allocation when $K$ is small, whereas the reverse is the case when $K$ is large. Nevertheless, BCD-LEESE and CVX-LEESE dominate the three benchmark methods for all considered $K$'s in terms of average data sensing rate.

\begin{table}[]
	\centering
	\caption{Average computation time per slot (second)}
	\label{tab2}
	\begin{tabular}{|c|l|c|c|c|c|c|c|}
		\hline
		\multicolumn{2}{|c|}{Number of WDs $K$}           & 1 & 2 & 4 & 8 & 16 & 32\\ \hline
		\multicolumn{2}{|c|}{BCD-LEESE} & $2.0\times10^{-4}$  & $3.1\times10^{-4}$  & $5.2\times10^{-4}$  & $9.8\times10^{-4}$  & $1.8\times10^{-3}$ & $3.6\times10^{-3}$ \\ \hline
		\multicolumn{2}{|c|}{CVX-LEESE} & $7.8\times10^{-2}$  & $8.9\times10^{-2}$  & $0.12$  & $0.18$  & $0.32$ & $0.59$  \\ \hline
	\end{tabular}
	\vspace{-2pt}
\end{table}
As shown in Fig. \ref{L3}, BCD-LEESE algorithm achieves almost identical performance to CVX-LEESE for all considered system parameters. These results confirm the effectiveness of BCD-LEESE in the online design of the considered system. To further show the efficiency of BCD-LEESE, we apply BCD-LEESE and CVX-LEESE to solve the per-slot problem \eqref{Prob_Per_slot} over 5000 time slots, and record in Table \ref{tab2} the average computation time per slot under various number of WDs $K$. The results show that BCD-LEESE takes at most $3.6\times10^{-3}$ seconds to make an online decision in all the considered $K$'s, while CVX-LEESE generates acceptable computation time only when $K$ is very small, e.g., $K=1$ and 2. In general, the channel coherence time of an indoor IoT system is no more than several seconds. The costly computation overhead makes CVX-LEESE incompetent to achieve real-time control in a practical MEC system with large number of WDs. In contrast, the proposed BCD-LEESE algorithm incurs almost neglected latency overhead, e.g., only about 0.04\% overhead when the time duration $T\!=\!5$ seconds for $K\!=\!16$, and thus is particularly suitable for real-time implementation of large-scale MEC systems.

\section{Conclusion}\label{sec7}
In this paper, we proposed an energy-efficient online control policy for a wireless powered MEC system. We formulated a multi-stage stochastic optimization problem that maximizes the long-term average sensing rate of all WDs under system data queue stability and long-term average power constraints at the HAP. The online design requires jointly control decisions on wireless power transfer, data sensing and processing. To solve the problem, we developed an online algorithm named LEESE based on the perturbed Lyapunov optimization technique. {\color{black}We proved that the proposed LEESE algorithm produces a feasible solution to the target problem, and achieves an $[O(1/V),O(V)]$ sensing rate-delay performance tradeoff.} Compared to other representative benchmark methods, numerical results demonstrated that the proposed LEESE algorithm achieves more than 21.9\% higher sensing rate and consumes only sub-millisecond computation time.  

\appendices
\section{Proof of Lemma \ref{lem_convex}}\label{app_convex}
By taking the first-order and second-order derivatives of $e_{i,\rm h}(t)$ with respect to $p_0(t)$, we have
\begin{equation}
	\small
	\frac{\partial e_{i,\rm h}(t)}{\partial p_0(t)} = \frac{h_i(t)(a_{1,i}a_{3,i}-a_{2,i})}{(p_0(t)h_i(t)+a_{3,i})^2},~\frac{\partial^2 e_{i,\rm h}(t)}{\partial p_0(t)^2} = \frac{2(h_i(t))^2(a_{2,i}-a_{1,i}a_{3,i})}{(p_0(t)h_i(t)+a_{3,i})^3}.
\end{equation}
Since the harvested energy monotonically increases with the input power, we have $\frac{\partial e_{i,\rm h}(t)}{\partial p_0(t)}\geq 0$ and thus $a_{1,i}a_{3,i}-a_{2,i}\geq 0$. Besides, the harvested energy should be no less than 0 for all $p_0(t)\geq 0$. We have that $\lim_{p_0(t)\!\to+\!\infty}e_{i,\rm h}(t)=a_{1,i}-\frac{a_{2,i}}{a_{3,i}}=\frac{a_{1,i}a_{3,i}-a_{2,i}}{a_{3,i}}\geq 0$ and thus $a_{3,i}>0$. By substituting $a_{1,i}a_{3,i}-a_{2,i}\geq 0$ and $a_{3,i}>0$ into $\frac{\partial^2 e_{i,\rm h}(t)}{\partial p_0(t)^2}$, we have that $\frac{\partial^2 e_{i,\rm h}(t)}{\partial p_0(t)^2}\leq 0$. Therefore, $e_{i,\rm h}(t)$ is a concave function of $p_0(t)$.

\section{Proof of Lemma \ref{lem6}}\label{app0}
Based on the update rule of $Q_i(t)$, we have that 
\begin{equation}\label{lem6_eq1}
\small
\begin{split}
\frac{1}{2}\left(Q_i(t+1)\right)^2 \!-\! &\frac{1}{2}\left(Q_i(t)\right)^2\! = \! \frac{1}{2}\left(r_i(t)\!-\!D_{i,\rm C}(t)\!-\!D_{i,\rm O}(t)\right)^2\!\! + \!Q_i(t)\left(\!r_i(t)\!-\!D_{i,\rm C}(t)\!-\!D_{i,\rm O}(t)\right)\\
&\leq \frac{1}{2}\left[\left(D_{i,\rm C}(t)\!+\!D_{i,\rm O}(t)\right)^2\!+\!\left(r_i(t)\right)^2\right]\!-\!Q_i(t)\left(D_{i,\rm C}(t)+D_{i,\rm O}(t)\!-\!r_i(t)\right).
\end{split}
\end{equation}
By summing over the $i=1,\cdots,K$, and taking conditional expectation of \eqref{lem6_eq1}, we have 
\begin{equation}\label{lem6_eq2}
\small
\begin{split}
\mathbb{E}\left[\!\frac{1}{2}\mathsmaller{\sum_{i=1}^{K}}\!\!\left[\left(Q_i(t+1)\right)^2\!\! -\! \left(Q_i(t)\right)^2\right]\!\!\mid\!\!\Theta_t\!\right]\!\!\leq\!\! \mathsmaller{\sum_{i=1}^{K}}\!C_{1,i}\!-\!\mathbb{E}\left[\!\mathsmaller{\sum_{i=1}^{K}}\!Q_i(t)\left(D_{i,\rm C}(t)\!+\!D_{i,\rm O}(t)\!-\!r_i(t)\right)\!\!\mid\!\!\Theta_t\!\right].
\end{split}
\end{equation}
Here, $C_{1,i}$ is a constant obtained as 
\begin{equation}\label{lem6_eq3}
\small
\mathbb{E}\left[\frac{1}{2}\left[\left(D_{i,\rm C}(t)+D_{i,\rm O}(t)\right)^2+\left(r_i(t)\right)^2\right]\right]\leq \frac{1}{2}\left[\left(D_{i,\rm C}^{\rm max}+D_{i,\rm O}^{\rm max}\right)^2+\left(r_{\rm max}\right)^2\right]\triangleq C_{1,i},
\end{equation}
where $D_{i,\rm O}^{\rm max}\!\!=\!\!\mathbb{E}\left[WT\log_2\left(1\!+\!p_{\rm max}\gamma_i(t)\right)\right]$ and  $D_{i,\rm C}^{\rm max}\!\!=\!\!\frac{f_i^{\rm max}T}{\phi_i}$. Following the similar steps, we have
\begin{equation}\label{lem6_eq4}
\small
\begin{split}
\mathbb{E}\left[\frac{1}{2}\left(Q_0(t+1)\right)^2 - \frac{1}{2}\left(Q_0(t)\right)^2\mid\Theta_t\right] &\leq C_2 - \mathbb{E}\left[Q_0(t)\left(D_{0,\rm C}(t)\!-\!\mathsmaller{\sum_{i=1}^{k}}D_{i,\rm O}(t)\right)\mid\Theta_t\right],
\end{split}
\end{equation} 
\begin{equation}\label{lem6_eq6}
\small
\mathbb{E}\left[\frac{1}{2}\mathsmaller{\sum_{i=1}^{K}}\left(\tilde{B}_{t+1}^2-\tilde{B}_t^2\right)\mid\Theta_t\right]\leq \mathsmaller{\sum_{i=1}^{K}}C_{3,i}\! -\! \mathbb{E}\left[\mathsmaller{\sum_{i=1}^{K}}\lambda_{\rm e}(B_i(t)\!-\!\Omega_i)(e_i(t)\!-\!e_{i,\rm h}(t))\mid\Theta_t\right],
\end{equation}
where $C_2\!=\!\frac{1}{2}\!\left[\!\left(\!\small{\sum_{i=1}^{k}}\!D_{i,\rm O}^{\rm max}\!\right)^2\!\!+\!\!\left(\!D_{0,\rm C}^{\rm max}\right)^2\!\right]$ with $D_{0,\rm C}^{\rm max}\!\!=\!\!\frac{f_0^{\rm max}T}{\phi_0}$, and $C_{3,i}\!\!=\!\!\frac{1}{2}\left[\!(\lambda_{\rm e}e_i^{\rm max})^2\!\!+\!\!\left(\lambda_{\rm e}e_{i,\rm h}^{\rm max}\right)^2\!\right]$ with $e_i^{\rm max}=e_{\rm unit}^{\rm col}r_{\rm max} + p_i^{\rm max}T+\kappa_i\left(f_i^{\rm max}\right)^3T$ and $e_{i,\rm h}^{\rm max}=p_0^{\rm max}T$.

For the virtual queue $Z_0(t)$, we use the inequality $\left[\max(\cdot,0)\right]^2\leq(\cdot)^2$ and obtain that
\begin{equation}\label{lem6_eq7}
\small
\begin{split}
\frac{1}{2}\left(Z_0(t+1)\right)^2-\frac{1}{2}\left(Z_0(t)\right)^2\leq& \frac{1}{2}\left(\lambda_{\rm c} e_0(t)-\lambda_{\rm c} e_0^{\rm th}\right)^2 + Z_0(t)\lambda_{\rm c}\left(e_0(t)-e_0^{\rm th}\right)\\
\leq& \frac{1}{2}\left(\lambda_{\rm c} e_0(t)\right)^2+\frac{1}{2}\left(\lambda_{\rm c} e_0^{\rm th}\right)^2 + Z_0(t)\lambda_{\rm c}\left(e_0(t)-e_0^{\rm th}\right).
\end{split}
\end{equation}
Correspondingly, we have 
\begin{equation}\label{lem6_eq8}
\small
\mathbb{E}\left[\frac{1}{2}\left(Z_0(t+1)\right)^2-\frac{1}{2}\left(Z_0(t)\right)\mid\Theta_t\right]\leq C_4 - \mathbb{E}\left[Z_0(t)\lambda_{\rm c}\left(e_0^{\rm th}-e_0(t)\right)\mid\Theta_t\right],
\end{equation}
where $C_4=\frac{1}{2}\left[\left(\lambda_{\rm c}e_0^{\rm max}\right)^2+\left(\lambda_{\rm c}e_0^{\rm th}\right)^2\right]$ and $e_0^{\rm max}=\kappa_0\left(f_0^{\rm max}\right)^3T + p_0^{\rm max}T$.

By substituting \eqref{lem6_eq2}, \eqref{lem6_eq4}, \eqref{lem6_eq6} and \eqref{lem6_eq8} into \eqref{LyaDrift}, we obtain the upper bound in \eqref{LyaDriPlusPen}.

\vspace{-1em}
\section{Proof of Proposition \ref{lem_opt_solution}}\label{App_Opt_solution}
We solve \eqref{Prob_Per_slot_sim_sub} by considering following three cases in terms of the value of $\tau_i$ and $Q_i-Q_0$.


\textbf{Case I}: $\tau_i>0$ and $Q_i-Q_0\geq0$. Notice that $\tilde{B}_i=B_i-\Omega_i \leq 0$ always holds. In this case, $F\left(f_i\right)$ and $G\left(p_i\right)$ are concave functions and achieve maximum at $\tilde{f}_i=\sqrt{\frac{-Q_i}{3\lambda_{\rm e}\tilde{B}_i\kappa_i\phi_i}}$ and $\tilde{p}_i=\frac{\left(Q_0-Q_i\right)W}{\lambda_{\rm e}\tilde{B}_i\ln2}-\frac{1}{\gamma_i}$, respectively. When $\tilde{B}_i = 0$, we set $\tilde{f}_i = f_i^{\rm max}$ and $\tilde{p}_i=p_i^{\rm max}$. Denote $\hat{f}_i=\min\left(\tilde{f}_i, \bar{f}_i^{\rm th}\right)$ and $\hat{p}_i=\left[\tilde{p}_i\right]^{\bar{p}_i^{\rm th}}_0$. Then, the optimal solution must be in the region of $f_i^{\ast}\in\left[0,\hat{f}_i\right]$ and $p_i^{\ast}\in\left[0,\hat{p}_i\right]$, where both $F\left(f_i\right)$ and $G\left(p_i\right)$ are monotonically increasing. Let $\hat{D}_{i,\rm C}^{\rm max}=\frac{\hat{f}_iT}{\phi_i}$ and $\hat{D}_{i,\rm O}^{\rm max}=W\tau_i\log_2\left(1+\hat{p}_i\gamma_i\right)$ denote the maximum amount of data processed via local computing and computation offloading in time slot $t$, respectively. We derive the optimal solution as below:
	
	1) When $\hat{D}_{i,\rm O}^{\rm max}+\hat{D}_{i,\rm C}^{\rm max} \leq Q_i$, we can directly have that $f_i^{\ast}=\hat{f}_i$ and $p_i^{\ast}=\hat{p}_i$. 
	
	2) When $\hat{D}_{i,\rm O}^{\rm max}+\hat{D}_{i,\rm C}^{\rm max} > Q_i$, ${D}_{i,\rm O}+{D}_{i,\rm C} = Q_i$ must hold at optimum. By substituting $p_i=\mathcal{F}_{p_i}(f_i)$ into $G(p_i)$, we can equivalently express \eqref{Prob_Per_slot_sim_sub} as
		\begin{equation}\label{Prob_Per_slot_sim_case3}
			\small
			\underset{\substack{f_i}} \max~ U\left(f_i\right),~~\st
			~~f_i^{\rm lb} \leq\! f_i \!\leq\! f_i^{\rm ub},
		\end{equation}
		where $U\left(f_i\right)=\lambda_{\rm e}\tilde{B}_i\left[\mathcal{F}_{p_i}\left(f_i\right)\tau_i+\kappa_i\left(f_i\right)^3T\right]\!+\!\left(Q_i\right)^2 + \frac{Q_0f_iT}{\phi_i} - Q_0Q_i$, $f_i^{\rm ub} = \hat{f}_i$, and $f_i^{\rm lb} = \max\left(0, \mathcal{F}_{f_i}\left(\hat{p}_i\right)\right)$. When $\tilde{B}_i= 0$, $U\left(f_i\right)$ is a linear function of $f_i$. The optimal solution of \eqref{Prob_Per_slot_sim_case3} is $f_i^{\ast}=\hat{f}_i$ and thus $p_i^{\ast}=\mathcal{F}_{p_i}\left(\hat{f}_i\right)$. When $\tilde{B}_i< 0$, $U\left(f_i\right)$ is a concave function. Suppose $U\left(f_i\right)$ reaches maximum when $f_i=\bar{f}_i$. We obtain $\bar{f}_i$ by solving equation
		$U^\prime(f_i) = \frac{\partial U}{\partial f_i} = 3\lambda_{\rm e}\kappa_iT\tilde{B}_i\left(f_i\right)^2 - \lambda_{\rm e}\tilde{B}_iTA_02^{-\frac{f_iT}{W\tau_i\phi_i}} + \frac{T}{\phi_i}Q_0 = 0$, where $A_0 = \frac{\ln 2}{W\phi_i\gamma_i}2^{\frac{Q_i}{W\tau_i}}$. By taking the derivative of $U^\prime$ with respect to $f_i$, we can easily find that $\frac{\partial U^\prime}{\partial f_i}<0$ for $f_i\in[0,+\infty)$. That is, $U^\prime(f_i)$ is a monotonically decreasing function of $f_i$. Besides, we have $U^\prime(f_i)=-\lambda_{\rm e}TA_0\tilde{B}_i + \frac{T}{\phi_i}Q_0>0$ when $f_i=0$, and $U'(f_i)\to -\infty$ When $f_i\to +\infty$. Therefore, $U^\prime(f_i)=0$ has a unique solution $\bar{f}_i\in[0,+\infty)$. As a result, we can efficiently obtain $\bar{f}_i$ via bi-section search. Then, for the case of $\tilde{B}_t< 0$, the optimal solution of \eqref{Prob_Per_slot_sim_case3} is $f_i^{\ast}=\breve{f}_i$ and thus $p_i^{\ast}=\mathcal{F}_{p_i}\left(\breve{f}_i\right)$, where $\breve{f}_i=\min\left(\max\left(f_i^{\rm lb},\bar{f}_i\right),f_i^{\rm ub}\right)$.

	\textbf{Case II}: $\tau_i>0$ and $Q_i-Q_0<0$. In this case, $F$ is a concave function of $f_i$, while $G$ monotonically decreases with $p_i$. Then, the optimal solution of \eqref{Prob_Per_slot_sim_sub} is $p_i^{\ast}=0$ and $f_i^{\ast}=\hat{f}_i$.
	
	\textbf{Case III}: $\tau_i=0$. In this case, the objective of \eqref{Prob_Per_slot_sim_sub} becomes $F(f_i)$ which is a concave function of $f_i$. The optimal solution can be easily obtained as $p_i^{\ast}=0$ and $f_i^{\ast}=\hat{f}_i$.
	
	By summarizing the three cases above, we finally obtain the result in \eqref{Opt_pu_fu}.

\vspace{-1em}
\section{Proof of Proposition \ref{lem3}}\label{app1}
We initially set $\Omega_i\geq \lambda_{\rm e}e_i^{\rm max}+\lambda_{\rm e}p_0^{\rm max}T$, $\forall i$. In the following, we seek a threshold for $\Omega_i$ so that the constraint \eqref{EH_caus} is satisfied for all $B_i(t)\in[0,\Omega_i]$, $\forall i$. In particular, we consider the following three cases. 

\textbf{Case I}: When $B_i(t)\in[\lambda_{\rm e}e_i^{\rm max}, \Omega_i]$, we have $B_i(t+1) \leq \min\left(\Omega_i+\lambda_{\rm e}p_0^{\rm max}T,\Omega_i\right) = \Omega_i$ based on the update rule of $B_i(t)$ in \eqref{E_evol}. Since $B_i(t)\geq \lambda_{\rm e}e_i^{\rm max}\geq \lambda_{\rm e}e_i(t)$ for all feasible $r_i(t)$, $\tau_i(t)$, $f_i(t)$ and $p_i(t)$, the energy causality constraint \eqref{EH_caus} is satisfied, and thus $0\leq B_i(t+1)\leq \Omega_i$.
	
\textbf{Case II}: When $B_i(t)\in[0, B_{\rm min}]$, we have $B_i(t+1) \leq B_i(t) + \lambda_{\rm e}e_{i,\rm h}(t) \leq B_{\rm min} + \lambda_{\rm e}p_0^{\rm max}T< \Omega_i$. In this case, we have $e_{i,\rm col}(t)=e_{i,\rm O}(t)=e_{i,\rm C}(t)=0$ under the energy-aware management policy. The energy causality constraint \eqref{EH_caus} is obviously satisfied and thus $0\leq B_i(t+1)< \Omega_i$.
	
\textbf{Case III}: When $B_i(t)\in[B_{\rm min}, \lambda_{\rm e}e_i^{\rm max}]$, we have $B_i(t+1) \leq \lambda_{\rm e}e_i^{\rm max}+\lambda_{\rm e}p_0^{\rm max}T\leq \Omega_i$. To respect the energy causality $\lambda_{\rm e}e_i(t)\leq B_i(t)$ for all $B_i(t)\in[B_{\rm min}, \lambda_{\rm e}e_i^{\rm max}]$, one possible solution is to spend zero-Joule energy on data sensing and $B_{\rm min}$-Joule energy at most on task offloading and local computing. Accordingly, we determine the threshold of $\Omega_i$ in this case as follows.

	1) Based on \eqref{Opt_r}, we have that $r_i(t)=0$ when $\frac{V-Q_i(t)}{\lambda_{\rm e}e_{\rm unit}^{\rm col}}+B_i(t)<\Omega_i$. Since $\frac{V-Q_i(t)}{\lambda_{\rm e}e_{\rm unit}^{\rm col}}+B_i(t)<\frac{V}{\lambda_{\rm e}e_{\rm unit}^{\rm col}}+\lambda_{\rm e}e_i^{\rm max}$ for $B_i(t)\in[B_{\rm min}, \lambda_{\rm e}e_i^{\rm max}]$,  where the inequality holds by dropping the negative terms $\frac{-Q_i(t)}{\lambda_{\rm e}e_{\rm unit}^{\rm col}}$ and setting $B_i(t)=\lambda_{\rm e}e_i^{\rm max}$. Then, we can set $\Omega_i\geq \frac{V}{\lambda_{\rm e}e_{\rm unit}^{\rm col}}+\lambda_{\rm e}e_i^{\rm max}$ such that the energy cost on data sensing is zero.
		
	2) Based on \eqref{Opt_pu_fu} and \eqref{Opt_t}, the WD$_i$ consumes most energy on data transmission and local computing when $\left\{f_i^{\ast}(t),p_i^{\ast}(t)\right\} = \left\{\hat{f}_i(t), \hat{p}_i(t)\right\}$ and $\tau_i(t)=T$. To ensure $\lambda_{\rm e}\left(e_{i,\rm O}(t)+e_{i,\rm C}(t)\right)\leq B_{\rm min}$, it is sufficient to satisfy $\lambda_{\rm e}\left(\hat{e}_{i,\rm O}(t)+\hat{e}_{i,\rm C}(t)\right)\leq B_{\rm min}$, where $\hat{e}_{i,\rm O}(t) = \hat{p}_i(t)T$ and $\hat{e}_{i,\rm C}(t) = \kappa_{\rm e}\left(\hat{f}_i(t)\right)^3T$. Recall that $\hat{p}_i(t) = \left[\tilde{p}_i(t)\right]^{\bar{p}_i^{\rm th}(t)}_0$ and $\hat{f}_i(t)=\min\left(\tilde{f}_i(t),\bar{f}_i^{\rm th}(t)\right)$, where $\tilde{f}_i(t)=\sqrt{\frac{-Q_i(t)}{3\lambda_{\rm e}\tilde{B}_i(t)\kappa_i\phi_i}}$ and $\tilde{p}_i(t)=\frac{\left(Q_0(t)-Q_i(t)\right)W}{\lambda_{\rm e}\tilde{B}_i(t)\ln2}-\frac{1}{\gamma_i(t)}$, respectively. Obviously, $\hat{f}_i(t)\leq \tilde{f}_i(t)$. In the following, we determine the threshold of $\Omega_i$ in two sub-cases: 
		
		a) When $\tilde{p}_i(t)> 0$, we have $\hat{p}_i(t)\leq \tilde{p}_i(t)$. Since $Q_i(t)\leq V+r_{\rm max}$ (see Lemma \ref{lem_ub}), we have
		\vspace{5pt} 
		\begin{equation}\label{App1_eq1}
			\small
			\hat{f}_i(t)\!\! \leq\!\! \sqrt{\frac{-Q_i(t)}{3\lambda_{\rm e}\tilde{B}_i(t)\kappa_i\phi_i}} \!\!\leq\!\! \sqrt{\frac{V\!+\!r_{\rm max}}{3\lambda_{\rm e}\left(\Omega_i\!-\!\lambda_{\rm e}e_i^{\rm max}\right)\kappa_i\phi_i}},
			\hat{p}_i(t)\! \leq \!\frac{\left(Q_0(t)\!\!-\!\!Q_i(t)\right)W}{\lambda_{\rm e}\tilde{B}_i(t)\ln2}\!-\!\frac{1}{\gamma_i(t)}\!\!\leq\!\! \frac{\left(V\!+\!r_{\rm max}\right)W}{\lambda_{\rm e}\left(\Omega_i\!-\!\lambda_{\rm e}e_i^{\rm max}\right)\ln 2}.\vspace{5pt} 
		\end{equation}
		By submitting \eqref{App1_eq1} into $\lambda_{\rm e}\left(\hat{e}_{i,\rm O}(t)+\hat{e}_{i,\rm C}(t)\right)\leq B_{\rm min}$, we obtain the threshold of $\Omega_i$ by solving
		\vspace{5pt} 
		\begin{equation}\label{lem4_case3_eq1}
		\small
		\begin{split}
		H(\Omega_i) = \lambda_{\rm e}\kappa_i\left(\sqrt{\frac{V+r_{\rm max}}{3\lambda_{\rm e}\left(\Omega_i-\lambda_{\rm e}e_i^{\rm max}\right)\kappa_i\phi_i}}\right)^3T \!+\! \frac{\left(V+r_{\rm max}\right)W}{\lambda_{\rm e}\left(\Omega_i-\lambda_{\rm e}e_i^{\rm max}\right)\ln 2}T\!-\!B_{\rm min} \leq 0.	
		\end{split}
		\end{equation}
		Notice that $\Omega_i\!\geq\! \lambda_{\rm e}e_i^{\rm max}$. Obviously, $H(\Omega_i)$ monotonically decreases with $\Omega_i\in\left(\lambda_{\rm e}e_i^{\rm max},\!+\!\infty\right)$. Meanwhile, $H(\Omega_i)\to +\infty$ when $\Omega_i\to \lambda_{\rm e}e_i^{\rm max}$ and $H(\Omega_i)\to -B_{\rm min}$ when $\Omega_i\to +\infty$. Therefore, $H(\Omega_i) =0$ has a unique solution $\bar{\Omega_i}$, which can be easily obtained via bisection method. Then, we can satisfy $\lambda_{\rm e}\left(\hat{e}_{i,\rm O}(t)+\hat{e}_{i,\rm C}(t)\right)\leq B_{\rm min}$ by setting $\Omega_i\geq \bar{\Omega_i}$.
		
		b) When $\tilde{p}_i(t)\leq 0$, we have $\hat{p}_i(t)=0$. In this sub-case, we determine the threshold of $\Omega_i$ satisfying $\lambda_{\rm e}\hat{e}_{i,\rm C}(t)\leq B_{\rm min}$ by solving the following inequality:
		\begin{equation}\label{APP1_eq3}
		\small
		\lambda_{\rm e}\kappa_i\left(\sqrt{\frac{V+r_{\rm max}}{3\lambda_{\rm e}\left(\Omega_i-\lambda_{\rm e}e_i^{\rm max}\right)\kappa_i\phi_i}}\right)^3T \leq B_{\rm min}.
		\end{equation}
		Obviously, $\Omega_i\geq \bar{\Omega_i}$ also satisfies \eqref{APP1_eq3}.
In summarize, we can set $\Omega_i\!\geq\!\max\left(\!\frac{V}{\lambda_{\rm e}e_{\rm unit}^{\rm col}}+\lambda_{\rm e}e_i^{\rm max},\bar{\Omega_i}\!\right)+\lambda_{\rm e}p_0^{\rm max}T$, $\forall i$, to make the energy causality constraint \eqref{EH_caus} implicit for all $B_i(t)\in[0,\Omega_i]$, $\forall i$, which completes the proof.

\vspace{-1em}
\section{Proof of Proposition \ref{lem4}}\label{app2}
We start with two lemmas, which are useful to prove Proposition \ref{lem4}.\vspace{-0.5em}
\begin{Lemma}\label{lem2}
	The maximal utility $\bar{R}_{\rm P2}^\ast$ of problem (P2) can be achieved arbitrarily closely by an $\omega$-only policy. That is, for any $\delta>0$, there is an $\omega$-only policy $\Pi$, achieving
	\begin{equation}
	\small
	\mathbb{E}\left[\mathsmaller{\sum_{i=1}^{K}}r_i^{\Pi}(t)\right] \geq \bar{R}_{\rm P2}^\ast - \delta,
	\end{equation}
	while respecting the constraints \eqref{ledge_off_cons}, \eqref{T_allo_cons} and \eqref{Prob_const2} in (P2), and satisfying
	\begin{subequations}\label{AppD_eq2}
		\small
		\setlength{\abovedisplayskip}{5pt}
		\setlength{\belowdisplayskip}{5pt}
		\begin{align}
		&\mathbb{E}\left[e_0^{\Pi}(t)-e_0^{\rm th}\right] \leq \delta,~\mathbb{E}\left[\mathsmaller{\sum_{i=1}^{K}}D_{i,\rm{O}}^{\Pi}(t)\right] \leq \mathbb{E}\left[D_{0,\rm{C}}^{\Pi}(t)\right]+\delta,\\
		&~ \mathbb{E}\left[e_i^{\Pi}(t)-e_{i,\rm h}^{\Pi}(t)\right] \leq \delta,~\mathbb{E}\left[r_i^{\Pi}(t)\right] \leq \mathbb{E}\left[D_{i,\rm{O}}^{\Pi}(t)+D_{i,\rm{C}}^{\Pi}(t)\right] + \delta, \forall i.
		\end{align}
	\end{subequations}
\end{Lemma}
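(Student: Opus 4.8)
The plan is to establish this lemma as a direct instance of the $\omega$-only policy existence theorem for stochastic network optimization under i.i.d. random events (Neely 2010). The first step is to characterize the per-slot feasible performance region. For each channel realization $\omega = \{h_i^{\rm P}, h_i^{\rm I}, \forall i\}$, I would collect all control actions $\{r_i, p_i, f_i, p_0, f_0, \tau_i, \forall i\}$ satisfying the instantaneous constraints \eqref{ledge_off_cons}, \eqref{T_allo_cons} and \eqref{Prob_const2}, and associate with each action the performance vector whose entries are the aggregate sensing rate $\sum_i r_i$, the HAP energy cost $e_0$, the offloading-versus-edge balance $\sum_i D_{i,\rm O} - D_{0,\rm C}$, the per-WD energy balance $e_i - e_{i,\rm h}$, and the per-WD sensing-versus-processing balance $r_i - (D_{i,\rm O} + D_{i,\rm C})$. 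These are exactly the quantities appearing on the left-hand sides of the long-term constraints of (P2) and of the target bounds \eqref{AppD_eq2}.

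The second step invokes the key structural fact: because the fading channels are i.i.d. across slots, the set of all achievable time-average performance vectors --- taken over every feasible (possibly history-dependent) policy for (P2) --- coincides with the set of expected performance vectors achievable by stationary randomized $\omega$-only policies. This equivalence is the content of Neely's framework, and its proof rests on a Carath\'eodory-type argument applied to the expected performance region together with the strong law of large numbers; I would cite this rather than reproduce it. A convenient consequence is that the supremum of the long-term average utility over all feasible policies equals the supremum over $\omega$-only policies, so $\bar{R}_{\rm P2}^\ast$ is simultaneously the optimal value when attention is restricted to $\omega$-only policies.

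The third step turns this supremum into an approximating policy. Since $\bar{R}_{\rm P2}^\ast$ is a supremum over $\omega$-only policies, for any $\delta > 0$ there exists an $\omega$-only policy $\Pi$ whose expected utility $\mathbb{E}\left[\sum_i r_i^{\Pi}(t)\right]$ exceeds $\bar{R}_{\rm P2}^\ast - \delta$ while its expected performance vector satisfies the constraints of (P2) with at most $\delta$ slack. Because $\Pi$ is stationary and the channels are i.i.d., its per-slot expectations are time-invariant, so the long-term averages in \eqref{DataQ_stab}, \eqref{Bud_cons} and \eqref{Energy_lonterm} collapse to these single-slot expectations; re-expressing each constraint in this form yields precisely the bounds \eqref{AppD_eq2}. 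The feasibility of (P2) under the Slater condition guarantees that the approximating region is nonempty, completing the argument.

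The main obstacle is the region equivalence in the second step, which is where all the measure-theoretic subtlety resides: one must argue that the limiting time-averages of an arbitrary, possibly queue-dependent policy cannot escape the closure of the $\omega$-only achievable region. I would dispatch this point by appealing to the i.i.d. structure and the established theorem in (Neely 2010), after which the remaining steps are bookkeeping that translates the long-term averages of a stationary policy into per-slot expectations.
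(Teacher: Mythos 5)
Your proposal is correct and follows essentially the same route as the paper, which simply states that the proof ``follows similar steps of Theorem 4.5 in Neely (2010)'' and omits the details; your outline is a faithful expansion of what that citation is doing (per-slot performance region, i.i.d.-based equivalence with the $\omega$-only achievable region, then a $\delta$-approximating policy whose stationary per-slot expectations reproduce the long-term constraints with $\delta$ slack). Since the crux---the region-equivalence step---is deferred to the cited theorem in both cases, there is no substantive difference between your argument and the paper's.
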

\vspace{-1.5em}
\begin{proof}
	The proof follows similar steps of Theorem 4.5 in \cite{Neely2010} and is omitted here for brevity.
\end{proof}

\vspace{-1em}
\begin{Lemma}\label{lem7}
	If $Z(t)$ is mean rate stable, i.e., $\lim_{N\!\to\!\infty}\!\frac{\mathbb{E}[\!Z(N)\!]}{N}\!=\!0$, then the HAP satisfies the average constraint \eqref{Bud_cons}.
\end{Lemma}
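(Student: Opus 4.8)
The plan is to exploit the one-sided drift inequality built into the deficit-queue update \eqref{Queue_Z} and then telescope. First I would drop the projection onto the non-negative orthant: since $\max(x,0)\geq x$, the update rule $Z_0(t+1)=\max\left(Z_0(t)+\lambda_{\rm c}e_0(t)-\lambda_{\rm c}e_0^{\rm th},0\right)$ immediately yields the slotwise bound $Z_0(t+1)-Z_0(t)\geq \lambda_{\rm c}\left(e_0(t)-e_0^{\rm th}\right)$ for every $t$. This is the crux: the projection can only increase the backlog, so the ``energy arrival minus service'' quantity is a genuine lower bound on the per-slot increment.

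Summing this inequality over $t=0,\ldots,N-1$ telescopes the left-hand side, giving $Z_0(N)-Z_0(0)\geq \lambda_{\rm c}\sum_{t=0}^{N-1}\left(e_0(t)-e_0^{\rm th}\right)$. Taking expectations, dividing both sides by $\lambda_{\rm c}N$, and rearranging, I obtain
\begin{equation}
\frac{1}{N}\mathsmaller{\sum_{t=0}^{N-1}}\mathbb{E}\left[e_0(t)\right]\leq e_0^{\rm th}+\frac{\mathbb{E}\left[Z_0(N)\right]-\mathbb{E}\left[Z_0(0)\right]}{\lambda_{\rm c}N}.
\end{equation}

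The final step is to let $N\to+\infty$. Because the initial backlog $Z_0(0)$ is finite, the term $\mathbb{E}\left[Z_0(0)\right]/(\lambda_{\rm c}N)$ vanishes; and the mean-rate-stability hypothesis $\lim_{N\to\infty}\mathbb{E}\left[Z(N)\right]/N=0$ forces the $\mathbb{E}\left[Z_0(N)\right]/(\lambda_{\rm c}N)$ term to zero as well. Taking the limit of both sides therefore recovers exactly the long-term average power constraint $\lim_{N\to+\infty}\frac{1}{N}\sum_{t=1}^{N}\mathbb{E}\left[e_0(t)\right]\leq e_0^{\rm th}$ of \eqref{Bud_cons}. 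There is no genuine obstacle here: this is the standard Lyapunov ``stability-implies-constraint-satisfaction'' lemma, and the only points requiring minor care are that $\lambda_{\rm c}>0$ (so dividing by it is legitimate) and that mean rate stability is precisely the normalized expected backlog appearing after telescoping, so the hypothesis is consumed in exactly the form in which it is stated.
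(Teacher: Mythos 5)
Your proof is correct and is exactly the standard argument the paper relies on: the paper itself omits the details and simply cites Theorem 2.5 / Section 4.4 of Neely's monograph, whose proof is precisely your drop-the-projection, telescope, normalize-by-$N$, and invoke-mean-rate-stability chain. The only cosmetic differences are the off-by-one in the summation index relative to the statement of \eqref{Bud_cons} and the implicit assumption that the limit (rather than a limsup) exists, both of which are consistent with the paper's own conventions and immaterial to the conclusion.
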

\begin{proof}
	{\color{black}This result is a derivation of Theorem 2.5 in \cite{Neely2010} and can be proved following the similar steps in Section 4.4 in \cite{Neely2010}. The detail proof is omitted here for brevity.}
\end{proof}
\textbf{Proof of Proposition \ref{lem4}:} Denote the policy produced by LEESE as $\Psi$. Since LEESE minimizes the upper bound \eqref{LyaDriPlusPen} on the Lyapunov drift-plus-penalty function $\Delta^{t}_V$, we have that
\begin{small}
\begin{equation}\label{Lem_eq1}
\begin{split}
&\Delta^{t}_V=\mathbb{E}\left[\Phi(t+1)-\Phi(t)|\Theta_t\right]-V\mathbb{E}\left[\mathsmaller{\sum_{i=1}^{K}}r_i(t)|\Theta(t)\right]\\
&\leq C + \upsilon -\! \mathbb{E}\left\{L^\Psi(t)\mid\!\!\Theta_t\right\}\leq C + \upsilon-\! \mathbb{E}\left\{L^\Pi(t)\mid\!\!\Theta_t\right\}\\
&\overset{(\dag)}= C\!+ \!\upsilon \!-\! \mathbb{E}\left[\mathsmaller{V\sum_{i=1}^{K}}r_i^\Pi(t)\right]\! +\! \mathbb{E}\left[\mathsmaller{\sum_{i=1}^{K}}Q_i(t)\left(r_i^\Pi(t)\!-\!D_{i,\rm C}^\Pi(t)\!-\!D_{i,\rm O}^\Pi(t)\right)\right]+ \mathbb{E}\left[Z_0(t)\lambda_{\rm c}\left(e_0^\Pi(t)-e_0^{\rm th}\right)\right]\\
& + \mathbb{E}\left[Q_0(t)\left(\mathsmaller{\sum_{i=1}^{k}}D_{i,\rm O}^\Pi(t)-D_{0,\rm C}^\Pi(t)\right)\right]+\mathbb{E}\left[\mathsmaller{\sum_{i=1}^{K}}\lambda_{\rm e}(\Omega_i-B_i(t))(e_i^\Pi(t)-e_{i,\rm h}^\Pi(t))\right] \\
&\overset{(\ddag)}\leq C + \upsilon - V\left(\bar{R}_{\rm P2}^\ast-\delta\right) + \left[\mathsmaller{\sum_{i=1}^{K}}Q^{\rm max} + Q_0(t) + \mathsmaller{\sum_{i=1}^{K}}\lambda_{\rm e}(\Omega_i) + Z_0(t)\lambda_{\rm c}\right]\delta,
\end{split}
\end{equation}
\end{small}where $L^\Psi(t)$ and $L^\Pi(t)$ are the value of $L(t)$ in Lemma \ref{lem1} under policy $\Psi$ and $\Pi$, respectively. $(\dag)$ holds due to the independence of policy $\Pi$ on $\Theta_t$. $(\ddag)$ is obtained by plugging \eqref{AppD_eq2} and using the fact that $Q_i(t)\leq Q^{\rm max}$ and $B_i(t)\leq \Omega_i$, $\forall i$. Let $\delta \to 0$, we have that
\begin{equation}\label{Eq1_app2}
\small
\mathbb{E}\left[\Phi(t+1)-\Phi(t)|\Theta_t\right]-V\mathbb{E}\left[\mathsmaller{\sum_{i=1}^{K}}r_i^{\Psi}(t)|\Theta(t)\right]\leq C+ \upsilon -V\bar{R}_{\rm P2}^\ast.
\end{equation}
By summing up the both sides of \eqref{Eq1_app2} over $t=0,\cdots,N-1$, taking iterated expectations, and then dividing by $VN$, we have that 
\begin{equation}
\small
\frac{\mathbb{E}\left[\Phi(N)\right]\!-\!\mathbb{E}\left[\Phi(0)\right]}{NV}\!-\!\frac{1}{N}\mathsmaller{\sum}_{t=0}^{N-1}\mathbb{E}\left[\mathsmaller{\sum_{i=1}^{K}}r_i^{\Psi}(t)\right]\!\leq\! \frac{C+ \upsilon}{V}\!-\!\bar{R}_{\rm P2}^\ast.
\end{equation}
By rearranging terms and setting $N \to \infty$, we prove a) that 
\begin{equation}
\small
\bar{R}_{\Psi}= \lim_{N\!\to+\!\infty}\!\!\frac{1}{N}\!\mathsmaller{\sum}_{t=0}^{N\!-\!1}\mathbb{E}\left[\mathsmaller{\sum_{i=1}^{K}}r_i^{\Psi}(t)\right]\!\!\geq \!\bar{R}_{\rm P2}^\ast\!-\!\frac{C+ \upsilon}{V} \!\overset{(\S)}\geq\! \bar{R}_{\rm P1}^\ast\!-\!\frac{C+ \upsilon}{V},
\end{equation}
where $\S$ holds for $\bar{R}_{\rm P1}^\ast \leq \bar{R}_{\rm P2}^\ast$.

To prove b) and c), we plug the $\omega$-only policy $\Gamma$ that satisfies the Slater conditions \eqref{SLT} into the RHS of the inequality $(\dag)$ in \eqref{Lem_eq1}. By dropping the negative term $-\epsilon\sum_{i=1}^{K}\lambda_{\rm e}\left(\Omega_i-B_i(t)\right)$, we obtain
\begin{equation}
\small
\mathbb{E}\left[\!\Phi(t+1)\!-\!\Phi(t)|\Theta_t\!\right]-\!V\mathbb{E}\left[\!{\mathsmaller\sum_{i=1}^{K}}\!r_i^{\Psi}(t)\!|\Theta(t)\!\right]\!\leq\! C\!+\! \upsilon-\!V\varphi(\epsilon)\!-\!\left(\!Z_0(t)\lambda_{\rm c}\!+\!{\mathsmaller\sum_{i=1}^{K}}\!Q_i(t)\!+\!Q_0(t)\!\right)\epsilon.
\end{equation}
{\color{black}Taking iterated expectations and telescoping sums over $t=0,\cdots,N-1$, normalizing by $N\epsilon$, and letting $N\to\infty$, we have that
\begin{equation}\label{app2_eq1}
\small
\lim_{N\!\to\!+\!\infty}\!\frac{1}{N}\!\mathsmaller\sum_{t=0}^{N\!-\!1}\!\mathbb{E}\!\left[Z_0(t)\lambda_{\rm c}\!+\!Q_0(t)\!+\!\!{\mathsmaller\sum_{i=1}^{K}}Q_i(t)\right] \!\!\leq \!\!\frac{C\!\!+\!\! \upsilon\!\!+\!\!V\left(\bar{R}_{\Psi}\!\!-\!\!\varphi(\epsilon)\right)}{\epsilon}\!\!\overset{(\S)}\leq\!\frac{C\!+\! \upsilon\!\!+\!\!V\left(\bar{R}_{\rm P2}^\ast\!\!-\!\!\varphi(\epsilon)\right)}{\epsilon},
\end{equation}
where $\S$ uses the fact that $\bar{R}_{\Psi}\leq \bar{R}_{\rm P2}^\ast$. \eqref{app2_eq1} implies that $Z_0(t)$, $Q_0(t)$ and $Q_i(t)$ are strongly stable, i.e., 
\begin{equation}\label{app2_eq2}
\small
\lim_{N\!\to\!+\!\infty}\frac{1}{N}\mathsmaller{\sum}_{t=0}^{N-1}\mathbb{E}\left[Z_0(t)\right]\!<\!\infty,\lim_{N\!\to\!+\!\infty}\frac{1}{N}\mathsmaller{\sum}_{t=0}^{N-1}\mathbb{E}\left[Q_0(t)\right]\!<\!\infty,\lim_{N\!\to\!+\!\infty}\frac{1}{N}\mathsmaller{\sum}_{t=0}^{N-1}\mathbb{E}\left[Q_i(t)\right]\!<\!\infty.
\end{equation}
Because a strongly stable $Z_0(t)$ is also mean rate stable (see Theorem 2.8 in \cite{Neely2010}), the long-term average power constraint \eqref{Bud_cons} is satisfied according to Lemma \ref{lem7}, which proves b). 

Besides, as shown in \eqref{app2_eq2}, LEESE satisfies all the long-term constraints in (P1). When the battery capacity of each WD is larger than the threshold in Proposition \ref{lem3}, LEESE produces a feasible solution to (P1). In this case, we always have $\bar{R}_{\Psi}\leq \bar{R}_{\rm P1}^\ast$. By substituting this into the RHS of $\S$ in \eqref{app2_eq1} and using the fact that $Z_0(t)$, $Q_0(t)$ and $Q_i(t)$ are non-negative, we obtain \eqref{lem5_eq2} and thus prove c).}

\vspace{-1em}
\ifCLASSOPTIONcaptionsoff
  \newpage
\fi

\bibliographystyle{IEEEtran}
\bibliography{IEEEabrv,MyRefLib}

\begin{thebibliography}{10}
\providecommand{\url}[1]{#1}
\csname url@samestyle\endcsname
\providecommand{\newblock}{\relax}
\providecommand{\bibinfo}[2]{#2}
\providecommand{\BIBentrySTDinterwordspacing}{\spaceskip=0pt\relax}
\providecommand{\BIBentryALTinterwordstretchfactor}{4}
\providecommand{\BIBentryALTinterwordspacing}{\spaceskip=\fontdimen2\font plus
\BIBentryALTinterwordstretchfactor\fontdimen3\font minus
  \fontdimen4\font\relax}
\providecommand{\BIBforeignlanguage}[2]{{%
\expandafter\ifx\csname l@#1\endcsname\relax
\typeout{** WARNING: IEEEtran.bst: No hyphenation pattern has been}%
\typeout{** loaded for the language `#1'. Using the pattern for}%
\typeout{** the default language instead.}%
\else
\language=\csname l@#1\endcsname
\fi
#2}}
\providecommand{\BIBdecl}{\relax}
\BIBdecl

\bibitem{Ju2014ThroughputMaximization}
H.~Ju and R.~Zhang, ``Throughput maximization in wireless powered communication
  networks,'' \emph{{IEEE} Trans. Wireless Commun.}, vol.~13, no.~1, pp.
  418--428, Jan. 2014.

\bibitem{Bi2015}
S.~Bi, C.~K. Ho, and R.~Zhang, ``Wireless powered communication: opportunities
  and challenges,'' \emph{IEEE Commun. Mag.}, vol.~53, no.~4, pp. 117--125,
  Apr. 2015.

\bibitem{Zeng2017}
Y.~Zeng, B.~Clerckx, and R.~Zhang, ``Communications and signals design for
  wireless power transmission,'' \emph{IEEE Trans. Commun.}, vol.~65, no.~5,
  pp. 2264--2290, May 2017.

\bibitem{Li2019}
X.~Li, X.~Zhou, C.~Sun, and D.~W.~K. Ng, ``Online policies for throughput
  maximization of energy-constrained wireless-powered communication systems,''
  \emph{IEEE Trans. Wireless Commun.}, vol.~18, no.~3, pp. 1463--1476, Jan.
  2019.

\bibitem{Mao2017}
Y.~{Mao}, C.~{You}, J.~{Zhang}, K.~{Huang}, and K.~B. {Letaief}, ``A survey on
  mobile edge computing: the communication perspective,'' \emph{IEEE Commun.
  Surv. Tutor.}, vol.~19, no.~4, pp. 2322--2358, Aug. 2017.

\bibitem{Wu2018TVT}
Y.~Wu, K.~Ni, C.~Zhang, L.~P. Qian, and D.~H.~K. Tsang, ``{NOMA}-assisted
  multi-access mobile edge computing: A joint optimization of computation
  offloading and time allocation,'' \emph{{IEEE} Trans. Veh. Technol.},
  vol.~67, no.~12, pp. 12\,244--12\,258, Dec. 2018.

\bibitem{Bi2021}
\BIBentryALTinterwordspacing
S.~Bi, L.~Huang, H.~Wang, and Y.-J.~A. Zhang, ``Lyapunov-guided deep
  reinforcement learning for stable online computation offloading in
  mobile-edge computing networks,'' \emph{IEEE Trans. Wireless Commun.}, 2021.
  [Online]. Available: \url{10.1109/TWC.2021.3085319.}
\BIBentrySTDinterwordspacing

\bibitem{Wang2016}
Y.~{Wang}, M.~{Sheng}, X.~{Wang}, L.~{Wang}, and J.~{Li}, ``Mobile-edge
  computing: partial computation offloading using dynamic voltage scaling,''
  \emph{IEEE Trans. Commun.}, vol.~64, no.~10, pp. 4268--4282, Oct. 2016.

\bibitem{Wang2018a}
F.~{Wang}, J.~{Xu}, X.~{Wang}, and S.~{Cui}, ``Joint offloading and computing
  optimization in wireless powered mobile-edge computing systems,'' \emph{IEEE
  Trans. Wireless Commun.}, vol.~17, no.~3, pp. 1784--1797, Mar. 2018.

\bibitem{Hu2018}
X.~Hu, K.-K. Wong, and K.~Yang, ``Wireless powered cooperation-assisted mobile
  edge computing,'' \emph{IEEE Trans. Wireless Commun.}, vol.~17, no.~4, pp.
  2375--2388, Jan. 2018.

\bibitem{Zhou2020}
F.~Zhou and R.~Q. Hu, ``Computation efficiency maximization in wireless-powered
  mobile edge computing networks,'' \emph{IEEE Trans. Wireless Commun.},
  vol.~19, no.~5, pp. 3170--3184, May 2020.

\bibitem{Mao2016}
Y.~Mao, J.~Zhang, and K.~B. Letaief, ``Dynamic computation offloading for
  mobile-edge computing with energy harvesting devices,'' \emph{IEEE J. Sel.
  Areas Commun.}, vol.~34, pp. 3590--3605, Dec. 2016.

\bibitem{Min2019}
M.~Min, L.~Xiao, Y.~Chen, P.~Cheng, D.~Wu, and W.~Zhuang, ``Learning-based
  computation offloading for {IoT} devices with energy harvesting,'' \emph{IEEE
  Trans. Veh. Technol.}, vol.~68, pp. 1930--1941, Feb. 2019.

\bibitem{XIANarxiv2021}
X.~Li, S.~Bi, Z.~Quan, and H.~Wang, ``Online cognitive data sensing and
  processing optimization in energy-harvesting edge computing systems,'' Jun.
  2021, [Online]. Available: http://arxiv.org/abs/2106.14113.

\bibitem{Xu2017a}
J.~Xu, L.~Chen, and S.~Ren, ``Online learning for offloading and autoscaling in
  energy harvesting mobile edge computing,'' \emph{IEEE Trans. Cogn. Commun.
  Netw.}, vol.~3, pp. 361--373, Sep. 2017.

\bibitem{Wei2019}
Z.~Wei, B.~Zhao, J.~Su, and X.~Lu, ``Dynamic edge computation offloading for
  internet of things with energy harvesting: A learning method,'' \emph{IEEE
  Trans. Veh. Technol.}, vol.~6, no.~3, pp. 4436--4447, Jun. 2019.

\bibitem{Wang2020c}
F.~Wang, J.~Xu, and S.~Cui, ``Optimal energy allocation and task offloading
  policy for wireless powered mobile edge computing systems,'' \emph{{IEEE}
  Trans. Wireless Commun.}, vol.~19, no.~4, pp. 2443--2459, Apr. 2020.

\bibitem{Wu2019}
H.~Wu, X.~Lyu, and H.~Tian, ``Online optimization of wireless powered
  mobile-edge computing for heterogeneous industrial internet of things,''
  \emph{IEEE Internet Things J.}, vol.~6, no.~6, pp. 9880--9892, Dec. 2019.

\bibitem{Sun2021}
M.~Sun, X.~Xu, Y.~Huang, Q.~Wu, X.~Tao, and P.~Zhang, ``Resource management for
  computation offloading in {D2D}-aided wireless powered mobile-edge computing
  networks,'' \emph{IEEE Internet Things J.}, vol.~8, no.~10, pp. 8005--8020,
  May 2021.

\bibitem{Deng2020}
\BIBentryALTinterwordspacing
X.~Deng, J.~Li, L.~Shi, Z.~Wei, X.~Zhou, and J.~Yuan, ``Wireless powered mobile
  edge computing: Dynamic resource allocation and throughput maximization,''
  \emph{IEEE Trans. Mob. Comput.}, pp. 1--1, Oct. 2020. [Online]. Available:
  \url{10.1109/TMC.2020.3034479}
\BIBentrySTDinterwordspacing

\bibitem{Chen2017}
Y.~Chen, N.~Zhao, and M.-S. Alouini, ``Wireless energy harvesting using signals
  from multiple fading channels,'' \emph{IEEE Trans. Commun.}, vol.~65, no.~11,
  pp. 5027--5039, Nov. 2017.

\bibitem{Liu2020}
T.~Liu, X.~Qu, and W.~Tan, ``Online optimal control for wireless cooperative
  transmission by ambient {RF} powered sensors,'' \emph{IEEE Trans. Wireless
  Commun.}, vol.~19, pp. 6007--6019, Jun. 2020.

\bibitem{Wang2018}
F.~Wang, J.~Xu, X.~Wang, and S.~Cui, ``Joint offloading and computing
  optimization in wireless powered mobile-edge computing systems,'' \emph{IEEE
  Trans. Wireless Commun.}, vol.~17, no.~3, pp. 1784--1797, 2018.

\bibitem{Neely2010}
M.~J. Neely, \emph{``Stochastic network optimization with application to
  communication and queueing systems''}.\hskip 1em plus 0.5em minus 0.4em\relax
  Morgan \& Claypool, 2010.

\bibitem{Mehrotra1992}
S.~Mehrotra, ``On the implementation of a primal-dual interior point method,''
  \emph{SIAM J. Optim.}, vol.~2, no.~4, pp. 575--601, Nov. 1992.

\bibitem{Muenzel2015}
V.~Muenzel, A.~F. Hollenkamp, A.~I. Bhatt, J.~de~Hoog, M.~Brazil, D.~A. Thomas,
  and I.~Mareels, ``A comparative testing study of commercial 18650-format
  lithium-ion battery cells,'' \emph{J. Electrochem. Soc.}, vol. 162, no.~8,
  pp. A1592--A1600, May 2015.

\bibitem{Diamond2016}
S.~Diamond and S.~Boyd, ``{CVXPY}: {A} {P}ython-embedded modeling language for
  convex optimization,'' \emph{J. Mach. Learn. Res.}, vol.~17, no.~83, pp.
  1--5, Jan. 2016.

\end{thebibliography}

\end{document}